\documentclass[journal]{IEEEtran}
\IEEEoverridecommandlockouts
\usepackage{graphicx, subcaption} 

\usepackage{amssymb}
\usepackage{latexsym}
\usepackage{soul}  
\usepackage{xcolor}  
\newcommand{\bl}[1]{\textcolor{black}{#1}}  

\usepackage{mathtools, cuted}
\usepackage{lipsum, color}
\usepackage[font=small]{caption}
\usepackage{amsmath}
\usepackage{relsize}

\usepackage{listings}
\usepackage{textgreek}
\usepackage{amsthm}
\usepackage{amssymb}
\usepackage{mdframed}
\usepackage[bookmarks=false]{hyperref}
\theoremstyle{definition}

\usepackage{diagbox}
\usepackage{leftidx}
\usepackage{algpseudocode}
\usepackage{algorithm}
\usepackage{algorithmicx}
\usepackage{mdframed}
\usepackage{amsmath,newtxtext,newtxmath}
\algrenewcommand\algorithmicrequire{\textbf{Input:}}
\algrenewcommand\algorithmicensure{\textbf{Output:}}

\algnewcommand{\IfThenElse}[3]{
  \State \algorithmicif\ #1\ \algorithmicthen\ #2\ \algorithmicelse\ #3}

\algnewcommand{\IfThen}[2]{
  \State \algorithmicif\ #1\ \algorithmicthen\ #2\ }
\usepackage[linguistics]{forest}
\usepackage{xurl}
\newcommand{\comment}[1]{}

\newtheorem{theorem}{Theorem}
\newtheorem{lemma}{Lemma}

\newtheorem{definition}{Definition}

\newtheorem{proposition}{Proposition}
\newtheorem{corollary}{Corollary}
\newtheorem{example}{Example}

\usepackage{mdframed}

\newmdtheoremenv{problem_stmt}{Problem}

\newcommand{\floor}[1]{\left\lfloor #1 \right\rfloor}

\usepackage{cite}
\usepackage{amsmath,amssymb,amsfonts}
\usepackage{graphicx}
\usepackage{textcomp}
\usepackage{xcolor}
\def\BibTeX{{\rm B\kern-.05em{\sc i\kern-.025em b}\kern-.08em
    T\kern-.1667em\lower.7ex\hbox{E}\kern-.125emX}}

\begin{document}
\bstctlcite{IEEEexample:BSTcontrol}
\title{
Low-Latency Spatial-Provenance Recovery Methods for Privacy-Constrained Vehicular Networks\\
}

\author{\IEEEauthorblockN{Manish Bansal$^{*}$ and J. Harshan$^{*, \dagger}$,\\
$^{*}$Bharti School of Telecommunication Technology and Management, $^{\dagger}$Department of Electrical Engineering,\\
Indian Institute of Technology Delhi, India
}\\
}
\maketitle

\begin{abstract}
In multihop Vehicle-to-Everything (V2X) networks, Road Side Units (RSUs) intend to collect information on vehicles' location in a low-latency manner while respecting their privacy constraints to support real-time location-based services. To facilitate data collection, provenance is known to ensure trust and accountability of data. Although existing joint data- and spatial-provenance techniques preserve the privacy of vehicles up to a certain granularity with respect to the RSU and other vehicles, they are unsuitable when stringent deadlines are imposed on the end-to-end delay on the packets. As a consequence, there is a need for designing spatial-provenance methods for V2X networks that satisfy stringent deadlines on the end-to-end delays while managing the privacy concerns. To fill this research gap, we propose two novel protocols, namely: Bi-Segment Embedding (BSE) and Tri-Segment Embedding (TSE), which provide a skipping mechanism for joint data- and spatial-provenance while trading off privacy features among the vehicles. Through an extensive theoretical framework, we provide an analysis of the proposed schemes in terms of reliability, privacy, and communication overhead. When compared to the baselines, our protocols offer lower end-to-end delay, higher reliability in provenance reconstruction, and the same level of privacy with respect to the RSU.
We validate latency gains using practical radio parameters, and our study reveals that our proposed protocols offer significant benefits in latency when implemented over a 5G stack.\looseness=-1  
\end{abstract}

\begin{IEEEkeywords}
Localization, Bloom filters, joint data- and spatial-provenance, V2X, privacy, low-latency. 
\end{IEEEkeywords}

\section{Introduction}
In Vehicle-to-Everything (V2X) networks, low-latency communication is essential to meet the delay deadlines on the mission-critical data that significantly affects road safety and traffic efficiency \cite{v2x}. These networks facilitate inter-vehicle communication (V2V) and vehicle-to-infrastructure communication (V2I) via Road Side Units (RSUs), which function similarly to base stations in cellular networks. Since mission-critical data is transmitted across these networks utilizing wireless technology, they are also vulnerable to security risks \cite{v2x_attacks,survey_wireless_security}. As a result, future V2X networks must implement effective techniques for monitoring each data packet for detecting potential security threats on them \cite{vanetapp}. Towards monitoring various attributes of the data packet, such as its origin and path taken by the packet, which are referred to as provenance, numerous provenance recovery techniques have been suggested in the literature to diagnose security vulnerabilities in the network \cite{provenance_survey_1,provenance_survey_2}.\looseness=-1

Provenance recovery techniques have been broadly classified into the categories of data-provenance and spatial-provenance. Data-provenance techniques  \cite{Data_provenance_survey, Data_provenance_in_vehicle_data_chains,sultana_2011,sultana_2015, Dictionary_based_provenance,IoV_data_plausibility, porkodi_secure_2021, RPL_Based_provenance_IOT, provenance_compression_IoT, multi-hop_povenance_for_IoT, amogh, suraj} focus on capturing the data-flow log of the packet through the network, whereas spatial-provenance techniques \cite{5G_multipath, mutipath_localization_1, V2V_assisted_coperative_localization, STAMP, LPPS_2, ManishTDSC, manish_WCNC, manish_COMSNETS} aim to learn the geographical positions of the wireless nodes that forwarded the packet in a network. In V2X networks, data-provenance is important to learn the logical flow of packets through vehicles, such as identifying the identity of the packet forwarders and the order in which vehicles transmit the packet to the RSU. Whereas, the localization of vehicles is crucial for the RSU to deliver location-based services (LBS) and ensure the smooth operation of autonomous vehicles. From a security perspective, vehicle localization is essential for identifying threat vectors from targeted attacks, assuring a resilient and secure vehicular network.\looseness=-1

A majority of existing methods either focus solely on data provenance or exclusively capture spatial provenance; however, they do not capture both. Identifying the absence of a method that jointly captures the data- and spatial-provenance, \cite{ManishTDSC} recently proposed a network-level protocol to achieve these objectives jointly in multihop V2X networks. In particular, \cite{ManishTDSC} considered a multihop V2X network where vehicles intend to access LBS without revealing their precise location information to the RSU due to privacy concerns. Since the RSU requires information on the location of vehicles to provide LBS, \cite{ManishTDSC} proposed a mechanism where vehicles share their coarse-grained location information only with the RSU, not with intermediate vehicles that forward the packet. In their method, each forwarding vehicle embeds its node identity and location information into the packet as it traverses to the RSU. As a result, their method incurred a high end-to-end delay due to processing overhead at each forwarding vehicle associated with embedding provenance information in the packet. Pointing out this drawback, the following section formally poses a new problem statement for designing privacy-preserving protocols that capture data- and spatial-provenance with high reliability while maintaining low end-to-end delay on the packets.\looseness=-1

\begin{figure}[ht!]
     \centering
    \includegraphics[trim={0 0 0 0},clip,scale=0.45]{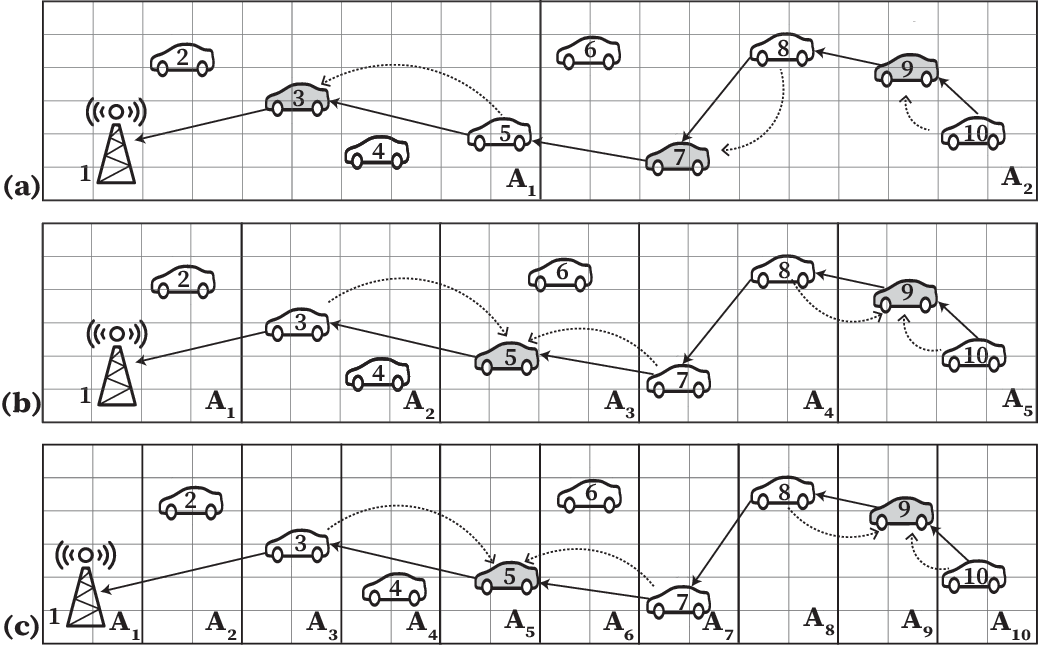}  
        \caption{A depiction of relaxed-privacy \textit{vis-à-vis} area segmentation. The background grid depicts the precision of GPS. Grey-colored vehicles implement our BSE and TSE schemes by embedding the information received from their neighboring (white-colored) vehicles.\looseness=-1 }
        \label{fig: network Image_car}
         \vspace{-0.45cm}
\end{figure}

\subsection{Problem Statement}
To balance the contrasting needs of the RSU and the vehicle's privacy concerns, \cite{ManishTDSC} proposed a multihop protocol where vehicles share their coarse-grained location information only with the RSU and not with intermediate vehicles that forward the packet. To achieve joint data- and spatial-provenance with such privacy constraints, each forwarding vehicle in the framework of \cite{ManishTDSC} embeds the identity of the corresponding location information in the packet along the path traveled, thereby incurring a high end-to-end delay. As a result, such methods are not applicable when there are stringent deadlines on the end-to-end delay of the packets. In the context of reducing end-to-end delay in data-provenance methods, \cite{amogh} suggests a skipping method, where some vehicles skip the provenance embedding process on the packets while the other vehicles cooperatively embed the provenance of their neighboring vehicles to reduce the end-to-end delay. While skipping-based methods to reduce end-to-end delay are available for recovering data-provenance, no such methods exist to provide benefits in end-to-end delay for joint data- and spatial-provenance. 
Inspired by this research gap, we ask whether skipping-based protocols can jointly recover data- and spatial-provenance under stringent delay, and what trade-offs they induce in reliability, privacy, overhead, and practical feasibility?

\begin{table*}
\centering
\caption{Comparison with relevant works, where joint provenance implies joint data- and spatial-provenance.}
\label{table:summary}

\resizebox{0.8\textwidth}{!}{%
\begin{tabular}{|c|c|c|c|c|c|c|}\hline
Reference
& Data-provenance
& Spatial-provenance
& Joint provenance
& Spatial-privacy
& Low-latency
& Testbed-results \\ \hline

\cite{Data_provenance_in_vehicle_data_chains, sultana_2011, sultana_2015,
Dictionary_based_provenance, IoV_data_plausibility, porkodi_secure_2021,
RPL_Based_provenance_IOT, provenance_compression_IoT, multi-hop_povenance_for_IoT}
& \checkmark & X & X & X & X & \checkmark \\ \hline

\cite{amogh, suraj}
& \checkmark & X & X & X & \checkmark & \checkmark \\ \hline

\cite{5G_multipath, mutipath_localization_1, V2V_assisted_coperative_localization, STAMP}
& X & \checkmark & X & \checkmark & X & X \\ \hline

\cite{ManishTDSC, manish_WCNC, manish_COMSNETS}
& \checkmark & \checkmark & \checkmark & \checkmark & X & \checkmark \\ \hline


Proposed work
& \checkmark & \checkmark & \checkmark & \checkmark & \checkmark & \checkmark \\ \hline
\end{tabular}
}

\vspace{-0.3cm}
\end{table*}

\subsection{Contributions}
To meet the RSU's localization requirements along with the privacy constraints of vehicles, we use the strategy suggested in \cite{ManishTDSC}, in which the RSU partitions its coverage area into uniform segments\footnote{\label{linear}This segmentation concept is relevant to complex road models \cite{IoV_models}, including bi-directional roads with numerous lanes and multi-layered roads, such as elevated roads and bridges. For the aforementioned generalized models, our concepts can be executed by assigning a dedicated RSU to each road or utilizing a single RSU to serve multiple roads through multiple-access techniques.}, subsequently directing vehicles to incorporate the identities (IDs) of their respective segments when transmitting packets.\footnote{This design is in line with regulatory standards such as IEEE~1609.2 \cite{WAVE} and ETSI Intelligent Transportation Systems (ITS) \cite{ETSI_TR_103_415}.} Due to non-disclosure of precise vehicle locations in this model, this privacy framework is referred as a \emph{relaxed-privacy} model. For illustration, in Fig. \ref{fig: network Image_car}, multiple scenarios are depicted in which the coverage area of the RSU is partitioned into distinct segments according to the privacy requirements of the vehicles. While we use the segmentation strategy of \cite{ManishTDSC}, we further relax the privacy model by requiring vehicles to explicitly share their segment information with neighboring vehicles. Fig. \ref{fig: network Image_car} captures our additional relaxation on the privacy requirement where vehicles share their coarse-level location information with their neighboring vehicles. Note that arrows in the figure indicate the direction of flow of segment IDs between vehicles. By exploiting the above mentioned relaxation in the privacy model, we make the following contributions, as outlined below:\looseness=-1

    1) To meet low-latency requirements, we propose two protocols, namely: Bi-Segment Embedding (BSE) and Tri-Segment Embedding (TSE). Both protocols are built on the principles of \emph{skipping} and \emph{relaxed-privacy} to reduce per-hop delays and use Bloom filters as a data structure to embed the data- and spatial-provenance information jointly. The terms \emph{bi-segment} and \emph{tri-segment} refer to the number of segment IDs a node embeds on the packet at a time in order to reduce end-to-end delays: two in BSE as shown in Fig. \ref{fig: network Image_car} (a), and three in TSE as depicted in Fig. \ref{fig: network Image_car} (b,c). For more details, refer to Sections \ref{sec:network_model} and \ref{sec:Bloom Filters}.\looseness=-1
    
    2) Although Bloom filters aid in low-latency recovery, they are accompanied by error rates owing to false positive events, which depend on the Bloom filter size and the number of Hash functions. To select Bloom filter parameters when employing BSE and TSE, we derive analytical expressions for error rates in retrieving spatial-provenance when the underlying wireless technology used in the vehicular network has an arbitrary transmission range (see Section \ref{sec:optimization} and Section \ref{sec:optimization_of_BF2}).\looseness=-1  
    
    3) We validate the evaluation of proposed analytical expressions by comparing them with experimental results, which reveal that the analytical expressions yield near-optimal Bloom filter configurations. After presenting a thorough analysis of the trade-off between privacy and error rates, we compare the performance of our proposed protocols with the baseline method in \cite{ManishTDSC} and show that we get significant benefits in reliability as well as end-to-end delay (see Section \ref{sec:results}).\looseness=-1 

    4) In BSE and TSE, vehicles explicitly share their coarse-grained location information with the neighboring vehicles. To incorporate practical aspects, we explore physical-layer-based localization methods to indirectly infer the segment IDs of neighboring vehicles. Experiments reveal that the proposed methods continue to provide benefits over the baseline protocols (see Section \ref{sec:Imperfect_neighbor}).\looseness=-1

    5) Finally, to understand the practical insights on implementing proposed protocols in realistic vehicular networks, we perform timing analysis using the radio characteristics of ZigBee and 5G stack, and as a takeaway point, we demonstrate that our proposed protocols offer significant benefits in latency as compared to existing techniques in \cite{manish_WCNC, manish_COMSNETS} when implemented over 5G stack. We also evaluate the performance of our proposed protocols under mobility and dynamic changing topology scenarios, and the results confirm their robustness in realistic V2X networks.\looseness=-1

\subsection{Related Works \& Novelty}
\begin{table}[!t]
\centering
\renewcommand{\arraystretch}{1.25}
\scriptsize
\setlength{\tabcolsep}{3pt}

\caption{Comparison of key metrics w.r.t. the baseline LNE \cite{ManishTDSC}.}
\label{tab:LNE_BSE_TSE_Comparison}
\resizebox{6.5 cm}{!}{
\begin{tabular}{|
p{3.5cm}|
>{\centering\arraybackslash}p{3.2cm}|
}
\hline
\textbf{Metric description} & \textbf{Relative ordering} \\ \hline

\textbf{Computation load at nodes} 
& $LNE$ \cite{ManishTDSC} $> BSE > TSE$ \\ \hline

\textbf{Computation load at RSU} 
& $LNE$ \cite{ManishTDSC} $< BSE < TSE$ \\ \hline

\textbf{Error rates} 
& $LNE$ \cite{ManishTDSC} $> BSE > TSE$ \\ \hline

\textbf{Location privacy w.r.t. vehicles} 
& $LNE$ \cite{ManishTDSC} $> BSE = TSE$ \\ \hline

\textbf{Location privacy w.r.t. RSU} 
& $LNE$ \cite{ManishTDSC} $= BSE = TSE$ \\ \hline

\textbf{End-to-end delay} 
& $LNE$ \cite{ManishTDSC} $> BSE > TSE$ \\ \hline

\end{tabular}}
\vspace{-0.4cm}
\end{table}

Contributions most relevant to our study on data-provenance are \cite{Data_provenance_in_vehicle_data_chains, sultana_2011, sultana_2015, Dictionary_based_provenance, IoV_data_plausibility, porkodi_secure_2021, RPL_Based_provenance_IOT, provenance_compression_IoT, multi-hop_povenance_for_IoT, amogh,suraj}, those on spatial-provenance are \cite{5G_multipath, mutipath_localization_1, V2V_assisted_coperative_localization, STAMP, LPPS_2}, and those on joint data- and spatial-provenance are \cite{ManishTDSC,manish_WCNC, manish_COMSNETS}. Data-provenance techniques suggested in \cite{Data_provenance_in_vehicle_data_chains, sultana_2011, sultana_2015, Dictionary_based_provenance, IoV_data_plausibility, porkodi_secure_2021, RPL_Based_provenance_IOT, provenance_compression_IoT, multi-hop_povenance_for_IoT} are not suitable for multihop networks, whereas\cite{amogh,suraj} do not capture the spatial location in multihop networks. Spatial-provenance techniques in \cite{5G_multipath, mutipath_localization_1, V2V_assisted_coperative_localization, STAMP} focus solely on learning the spatial location of the source node. The joint data- and spatial-provenance technique in \cite{ManishTDSC,manish_WCNC, manish_COMSNETS} incurs a high end-to-end delay while attempting to provide relaxed-privacy features with respect to the RSU and the other vehicles. Overall, while \cite{sultana_2015, amogh, suraj} and \cite{ManishTDSC, manish_WCNC, manish_COMSNETS} use Bloom filters for capturing provenance, none of them address a skipping-based joint data- and spatial-provenance technique, which reduces end-to-end delay without sacrificing traceability in privacy-constrained multihop networks. Comparative aspects of our work with existing techniques are summarized in Table \ref{table:summary}.

Since our work addresses joint data- and spatial-provenance, \cite{ManishTDSC} serves as a natural baseline for comparison. Henceforth, throughout this paper, we refer to this baseline as location node embedding (LNE), as provenance information is embedded by every forwarding node. Based on Table \ref{table:summary}, our proposed framework differs from LNE \cite{ManishTDSC} primarily in terms of low-latency, while sharing several high-level design aspects. To compare other common features in detail, Table \ref{tab:LNE_BSE_TSE_Comparison} provides a detailed comparison across computation overhead, error rates, privacy, and end-to-end delay. The results indicate improvements in computation cost at forwarding nodes, error rates, and end-to-end delay, while trading off privacy with respect to neighboring vehicles.\looseness=-1 

\section{Network Model} \label{sec:network_model}
\bl{We consider a localized coverage area of a single RSU, representing a small portion of a vehicular network in which multiple RSUs are interconnected via a secure backhaul. The localized coverage area contains $N$ nodes, comprising $N-1$ mobile nodes and a fixed RSU.}
Mobile nodes want to communicate with the RSU; however, due to limited power constraints, direct communication with the RSU is not possible.
Consequently, nodes communicate with the RSU through multiple intermediate nodes employing a decode-and-forward approach \cite{DecodeAndForward} in a multihop configuration. \bl{To support multihop communication through intermediate nodes, we outline the authentication method employed in our model as given in the next section.}\looseness=-1
\vspace{-0.3 cm}
\subsection{Network Authentication}
\bl{We assume that a vehicle entering the coverage area of the RSU authenticates itself with a gateway node using a pre-shared root key known to the core network. Upon successful authentication, each vehicle derives a unique secret key $K_y,~ y \in [N-1]$ using standard key derivation functions. The RSU maintains the credentials of all the authenticated vehicles within its coverage area. However, the individual secret keys $K_y$  remain private to each vehicle. Subsequently, each vehicle performs neighbor discovery and authenticates with neighboring nodes using a public-key cryptosystem. Only vehicles that have completed this authentication procedure and possess a valid $K_y$ are allowed to participate in multihop communication. For more details on authentication and neighbor discovery mechanisms, the reader may refer to \cite[Section~2]{amogh} and \cite[Section~2.2]{suraj}.\looseness=-1}

To enhance network security and provide advanced network diagnostics, the RSU wants to determine the path traveled by the packets and the precise GPS locations of all network nodes that have relayed them. However, considering privacy concerns, nodes may refrain from disclosing their precise GPS coordinates to either the RSU  or other registered nodes in the network \cite[Section~1]{LPPS}. \bl{To cater to these requirements, the following section formally outlines the privacy requirements and threat model adopted in this work.\looseness=-1}
\vspace{-0.5 cm}
\subsection{Privacy and Threat Model}
\bl{As described above, our network contains $N-1$ authenticated mobile nodes along with the RSU, and may also include other non-authenticated nodes within the coverage area. In our threat model, all entities are assumed to be honest-but-curious, passively observing network traffic to infer vehicle locations. All entities follow the protocol faithfully but may attempt to infer the exact GPS coordinates of other vehicles from observed packets. We assume that no entity modifies packets or launches active attacks. In particular, these entities are classified into three types based on their roles and observation capabilities: forwarding nodes, non-forwarding nodes, and the RSU. Forwarding nodes are authenticated vehicles participating in the packet forwarding. They have access to relayed packets and embedded Bloom filters. Non-forwarding nodes are passive observers with no role in packet delivery and are further divided into two subtypes. Authenticated non-forwarding nodes are registered vehicles within the RSU coverage area. They are not participating in the ongoing packet-forwarding process; however, they have full protocol knowledge and can observe network traffic. Non-authenticated nodes have not registered with the RSU and therefore fall outside the system's trust boundary. They can overhear wireless packets within their communication range; however, they lack knowledge of the protocols. The RSU is a trusted infrastructure node that collects all transmitted packets to recover provenance information, potentially inferring exact vehicle locations.\looseness=-1}
\vspace{-0.3 cm}
\subsection{Approach for Relaxed-Privacy}
\bl{As discussed above, vehicles are unwilling to reveal their exact GPS coordinates; however, they are willing to share coarse-level location information to avail the benefits of LBS.} To reconcile the demands of the RSU with the location privacy of the nodes, we take the approach mentioned in \cite{ManishTDSC, manish_WCNC}, where the RSU partitions its coverage area into $r$ geographical areas, where $r \in \mathbb{N}$, and instructs the nodes to disclose the identities of their region while relaying the packet. In this context, $r$ is an arbitrary quantity, which is \emph{a priori} determined through a conciliatory agreement between the privacy needs of the nodes and the precision demands of the RSU.\looseness=-1

In the context of vehicular networks, we assume that the RSU defines its coverage area along a linear path, partitioning it into smaller parts of uniform regions, henceforth termed as segments. The collection of segments is represented as $\Delta \triangleq \{A_1,A_2,\ldots,A_{r}\}$, as seen in Fig. \ref{fig: network Image_car} for $r =2,5$ and $10$. We assume that the RSU is located in segment $A_1$, specifically at one corner of the coverage area, whereas the remaining mobile nodes are scattered randomly throughout $\Delta$. We also assume that the segments farther away from the RSU are allotted higher index numbers.
The farthest segment from the RSU inside the coverage area is $A_{r}$. The segmentation of the coverage area of the RSU balances the relaxed-privacy requirements of the nodes with the RSU's need for learning the location of the nodes. We assume that the RSU has high transmit power compared to other nodes in the network; therefore, its broadcast messages can propagate to the farthest segment, namely $A_r$, on the downlink. To aid the nodes in learning the segment ID, the RSU disseminates a dictionary that delineates the GPS boundaries corresponding to various segments. As the nodes possess GPS devices, they can independently ascertain their segment ID utilizing the disseminated dictionary. For more information about privately sharing the dictionary via a broadcast message, we refer the reader to \cite{spatial}.\looseness=-1  

\begin{figure}
    \centering
    \includegraphics[scale=1]{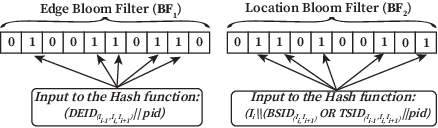}
    \caption{The diagram illustrates the embedding process in CLBF at the intermediate nodes, which consist of edge and location Bloom filters.\looseness=-1}
    \label{fig:CLBF}
    \vspace{-0.5cm}
\end{figure}

To explain the spatial-provenance embedding process, the identity of the source node is marked as $I_s$, the identity of the RSU is marked as $I_{N}$, and the identities of the other mobile nodes are represented as $I_n$, where $n \in [1,~N-2]$. Furthermore, to establish the correspondence between $\Delta$ and the GPS coordinates of the nodes, we utilize the function $g: \{I_1, I_2,\ldots,I_{N-2}, I_s\} \rightarrow \Delta$ to convert node identities into segment identities based on their current locations. \bl{We assume that the function $g$, which defines the segment mapping based on the privacy preferences agreed upon by the participating vehicles, remains constant over a sufficiently long duration. This assumption is motivated by the fact that vehicles typically do not alter their privacy preferences at millisecond time scales during ongoing communication.\footnote{\bl{The frequency of dictionary broadcasts, possible updates to $g$, and their impact on system performance are discussed in Section \ref{sec:mobility analysis}.}}}
Nonetheless, the function $g$ may alter at a subsequent coherence time when another packet is transmitted. The following section highlights the constraints on the routing protocols used by nodes for uplink communication with the RSU.\looseness=-1

\subsection{Network Constraints} \label{sec:Routing Constraints}
When packets are sent from the source node $I_{s}$ to the RSU, we assume that the network complies with the following routing and communication constraints.\looseness=-1

1) Routing Constraint: A routing protocol, such as Ad-hoc On-Demand Distance Vector (AODV) \cite{aodv_original}, is utilized to facilitate minimum-hop packet delivery from the source node to the RSU. As a result, the routing protocol prevents the formation of any loops or back-hops. Formally, the packet flow ensures that a node in segment $A_i$, for $i \in [r]$, does not pass a packet to a node in segment $A_j$, where $j \in [r]$ and $i < j$.\looseness=-1

2) Communication Constraints: We assume that nodes can communicate only within a specific distance, and all nodes have a uniform transmission range. We utilize the parameter $\beta$, where $\beta \in \mathbb{N}$, to represent the maximum number of contiguous segments a mobile node can communicate with. For instance,  a node is in segment $A_4$ and $\beta=3$, it can transmit packets only to nodes inside segments $A_4$, $A_3$, $A_2$, or $A_1$, and correspondingly, it can directly receive packets from nodes in segments $A_4$, $A_5$, $A_6$, or $A_7$. We represent this communication constraint mathematically using the function $S: \Delta \rightarrow [r]$, which maps the segment identity $A_i$ into its numerical identity $i$, for $i \in [r]$. For any two consecutive nodes with identities $I_i$ and $I_j$ in a $h$-hop path; then they must adhere to the communication constraint $ |S(g(I_i))-S(g(I_j))|\leq \beta$, where $i\neq j$ and $i, j \in [N]\backslash \{N-1\} \cup \{s\}$.\looseness=-1  

3) Relaxed-Privacy Constraints: We assume that the nodes have relaxed-privacy constraints with respect to their neighbors, i.e., nodes can explicitly share their segment IDs with their neighbors in contrast to the network model in \cite{ManishTDSC, manish_WCNC}, where the nodes do not share their segment IDs with their neighbors.\looseness=-1

4) Delay Constraints: To support low-latency constraints on the vehicular network, the underlying protocol desires to achieve low end-to-end delay while the packet traverses from $I_S$ to the RSU. Although the method in \cite{ManishTDSC, manish_WCNC} also achieves low end-to-end delay, we assume that delay constraints are more stringent in our network model.\looseness=-1

Given the aforementioned network constraints, the nodes aim to encode information regarding their segment IDs and the path traversed by the packet, enabling spatial-provenance recovery at the RSU upon packet reception. 
In the next section, we introduce a Bloom filter \cite{Bloom_filter_1} based data structure to accomplish the aforementioned objectives.\looseness=-1

\section{Bloom Filter-based Spatial-Provenance Recovery} \label{sec:Bloom Filters}
To recover the path traversed by the packet and the segment IDs of the nodes that have forwarded the packet, we employ Correlated Linear Bloom Filters (CLBF) using two Bloom filters, as illustrated in Fig. \ref{fig:CLBF}, where one Bloom filter encodes the packet path, and the other encodes segment IDs of forwarding nodes. The decision to employ Bloom filters is motivated by their fixed size, no false negatives, and constant time complexity for insertion and query operations.\looseness=-1


To support low-latency operation, we use the deterministic-double-edge (DDE) embedding method \cite{amogh, suraj} to embed the path traversed by the packet. For localization, we use bi-segment embedding (BSE) and tri-segment embedding (TSE), which skip embedding at selected intermediate nodes. Since both DDE and location embedding employ skipping, they collectively reduce end-to-end delay.\looseness=-1
\begin{figure}[!]
     \centering
    \includegraphics[trim={0cm 0cm 0cm 0cm},clip,scale=0.23]{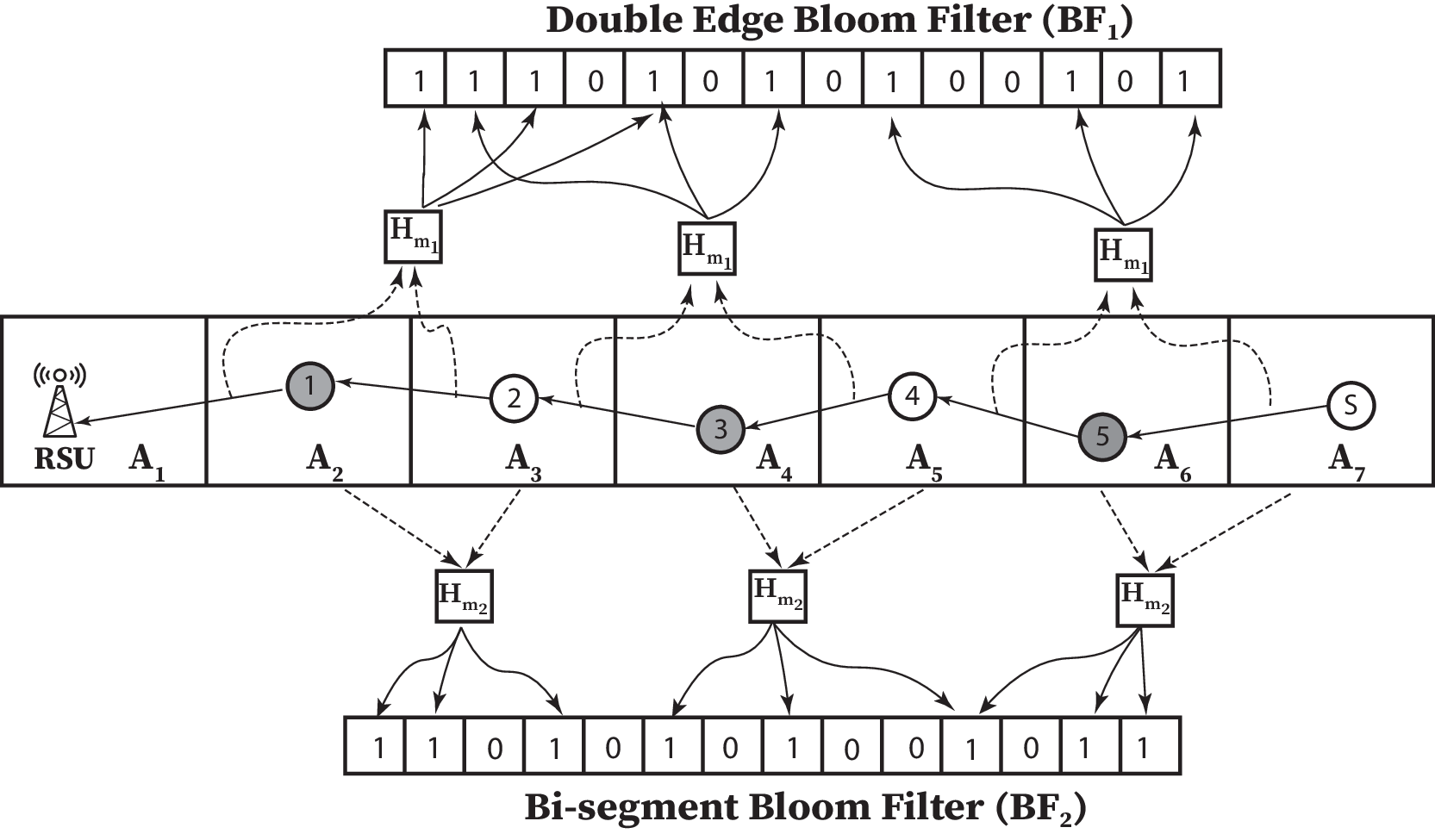}  
        \caption{A depiction of the embedding procedure for BSE.\looseness=-1}
        \label{fig:LSE_embedding}
         \vspace{-0.5cm}
\end{figure}

To aid the provenance embedding process, the packet structure includes a CLBF and a counter to convey hop-length information to the RSU. The CLBF has two Bloom filters, denoted as $\textbf{BF}_{1} \in \{0, 1\}^{m_{1}}$ and $\textbf{BF}_{2} \in \{0, 1\}^{m_{2}}$, wherein $\textbf{BF}_{1}$ embeds the path traveled by the packet, and $\textbf{BF}_{2}$ embeds the segment IDs of the nodes. At a high-level, when a node relays the packet, it can perform one or more of the following tasks: (i) embedding onto $\textbf{BF}_1$ the identity of the DDE that represents the connection between three nodes- the packet forwarding node, the current embedding node, and the next-hop node, using $k_{1}$ number of Hash functions for $1\leq k_1\leq m_1$ \cite{amogh, suraj}, (ii) embedding the segment ID (explained in subsequent section) onto $\textbf{BF}_2$ using $k_{2}$ number of Hash functions for $1\leq k_2 \leq m_2$, and (iii) incrementing the hop-counter. Upon receiving the packet, the RSU can retrieve both the path traveled by the packet and the segment IDs of the forwarding nodes. Next, we describe how path information is embedded onto $\textbf{BF}_1$.\looseness=-1
\vspace{-0.3cm}
\subsection{Ingredients for Embedding Path Information} \label{sec:BF_ingredients}
From the set of $N$ nodes in the network, a packet travels through a set of $h$ nodes, where $h \leq N$, before reaching the RSU. Assume in a $h$-hop path, nodes $I_x$, $I_y$, and $I_z$, where $x,y,z \in [N] \backslash \{N-1\} \cup \{s\}$, are directly connected neighbors within transmission range and the packet is forwarded from node $I_z$ to node $I_y$ and from $I_y$ to node $I_x$, forming $I_z \rightarrow I_y \rightarrow I_x$ packet flow as per the routing protocol defined in Section \ref{sec:Routing Constraints}. For the DDE embedding process onto $\textbf{BF}_{1}$, we assume that each node $I_y$, where $y \in [N] \backslash \{N-1\} \cup \{s\}$, is equipped with a pre-shared key with the RSU, denoted as $K_y$. As part of the packet forwarding process through a $h$-hop path, node $I_y$ dynamically generates a distinct double-edge ID denoted as $DEID_{(x,y,z)}$ using the node ID of its neighbors, i.e., $I_{x}$ and $I_{z}$. One such approach to produce $DEID_{(x,y,z)}$ involves utilizing $I_{x}$, $I_{y}$ and $I_{z}$ in conjunction with the key $K_y$ to derive a string $DEID_{(x, y,z)} = (I_{x}||I_y||I_{z} ||K_{y})$, where $||$ represents a string concatenation operation. Note that the function $DEID_{(\cdot, \cdot,\cdot)}$, is not symmetric, i.e., $DEID_{(x,y,z)}$ is not equal to $DEID_{(z,y,x)}$, where $DEID_{(z,y,x)} = (I_{z} || I_{y}|| I_{x} || K_{y})$ is the double-edge ID derived at node $I_y$ for the double-edge link from node $I_{x}$ to node $I_y$ and from node $I_{y}$ to node $I_{z}$. This double-edge ID creation happens on the fly during the packet transmission along the $h$-hop path. We assume that the RSU has access to the IDs and secret keys of all the nodes, allowing it to obtain a set of all double-edge IDs in the network, represented as $\mathcal{DE} = \{DEID_{(x,y,z)}~|~ \forall ~x,y,z \in [N] \backslash \{N-1\} \cup \{s\}; ~x \neq y,~ y\neq z,~x\neq z \}$. Next, we will explain the ingredients for embedding segment IDs onto $\textbf{BF}_2$.\looseness=-1
\vspace{-0.3cm}
\subsection{Ingredients for Embedding Segment IDs} \label{sec:BF2_ingredients}
We assume that every node in the network knows its segment ID using the dictionary that the RSU broadcasts. To support low-latency, we assume that some intermediate nodes skip embedding their segment IDs on $\textbf{BF}_2$ to reduce end-to-end delay. The nodes that skip the embedding of their segment ID directly on $\textbf{BF}_2$ are termed forwarding nodes, and the nodes that embed their segment ID on $\textbf{BF}_2$ are termed embedding nodes. The forwarding nodes share their segment IDs explicitly with the embedding nodes.
Subsequently, the embedding node will jointly perform the insertion operation of its own segment ID and the segment IDs of its neighbors on $\textbf{BF}_2$. BSE and TSE methods are also termed perfect neighbor location embedding methods because neighbors explicitly share their segment ID with an embedding node. Next, we will proceed towards the ingredients for embedding segment IDs in BSE and TSE.\looseness=-1

\subsubsection{Ingredients for Bi-Segment Embedding (BSE)}\label{sec:Ingredients_BSE}
For bi-segment embedding onto $\textbf{BF}_2$, we assume that each node $I_y$ which has a neighbor node $I_{z}$, for $I_y \leftarrow I_z$ link in a $h$-hop path, where $y,z \in [N]\backslash\{N-1\}\cup \{s\}$, creates a distinct bi-segment ID denoted as $BSID_{(y,z)}$. $BSID_{(y,z)}$ is created by the node $I_y$ dynamically on the fly as the packet is received by $I_y$. One such approach to produce $BSID_{(y,z)}$ involves utilizing the $g(I_y)$, $g(I_{z})$ in conjunction with the key $K_y$ to derive a string $BSID_{(y,z)} = (g(I_y)||g(I_{z})|| K_y)$. Note that the function $BSID_{(\cdot,\cdot)}$ is not symmetric, i.e., $BSID_{(y,z)} \neq BSID_{(z,y)}$, where $BSID_{(z,y)} = (g(I_{z})||g(I_y)|| K_z)$ is the  bi-segment ID derived at the node $I_{z}$ for the bi-segment link from node $I_y$ to the node $I_{z}$. 
We assume that the RSU has access to the segment IDs and secret keys of all the nodes, allowing it to obtain the set of all BSIDs in the network, represented as $\mathcal{BS} = \{BSID_{(y, z)}|~\forall y,z \in [N]\backslash\{N-1\}\cup \{s\}; y\neq z\}$.\looseness=-1

\begin{figure}[!]
     \centering
    \includegraphics[trim={0cm 0cm 0cm 0cm},clip,scale=0.23]{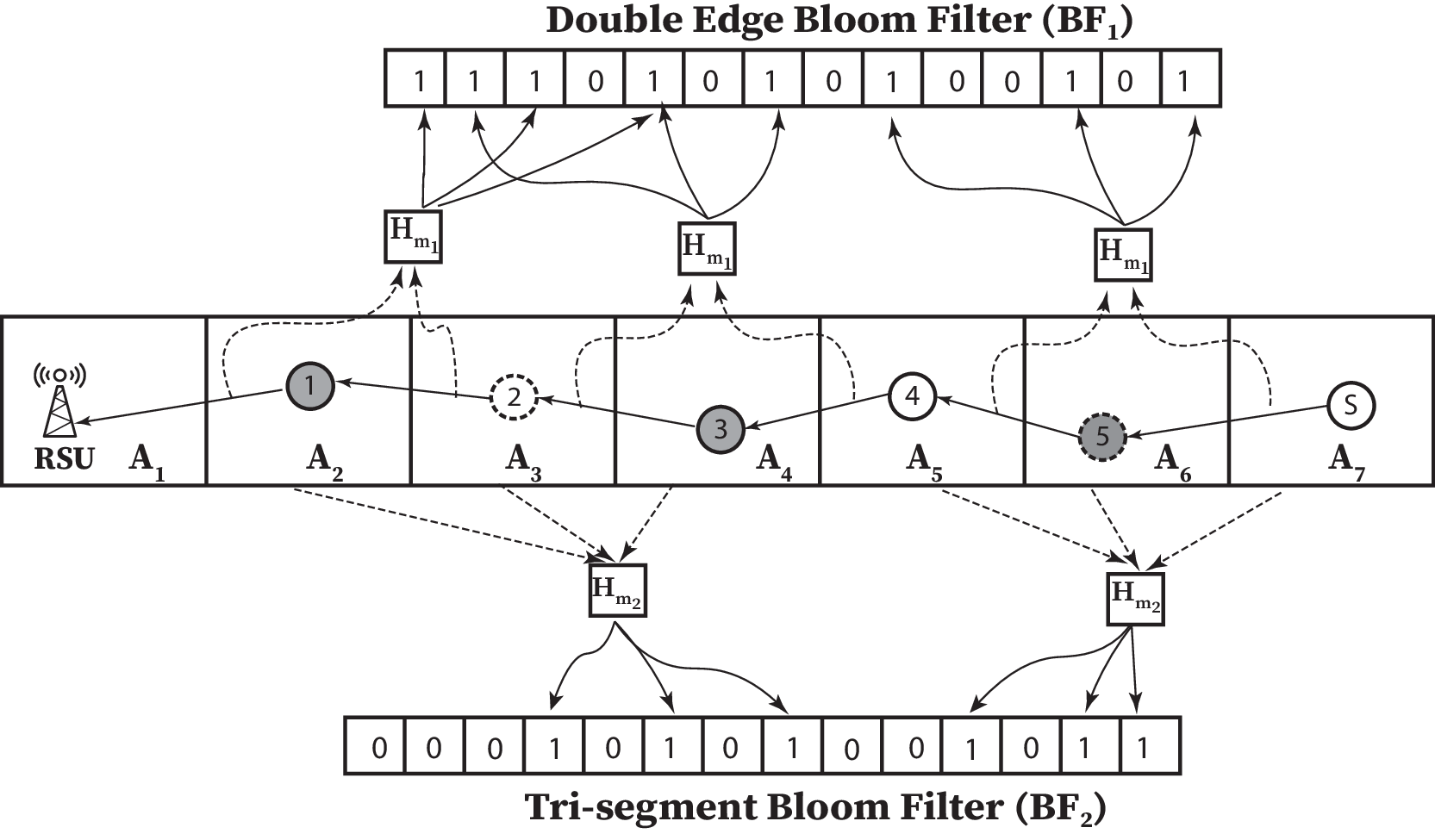}  
        \caption{A depiction of the embedding procedure for TSE.\looseness=-1}
        \label{fig:LDE_embedding}
         \vspace{-0.3cm}
\end{figure}
\subsubsection{Ingredients for Tri-Segment Embedding (TSE)}
\label{sec:Ingredients_TSE}
For tri-segment embedding onto $\textbf{BF}_2$, we assume that each node $I_y$ which has two neighbors $I_{x}$ and $I_{z}$ for $I_z \rightarrow I_y\rightarrow I_x$ link in a $h$-hop path, where $x,y,z \in [N]\backslash\{N-1\} \cup \{s\}$, creates a distinct tri-segment ID denoted as $TSID_{(x,y,z)}$. $TSID_{(x,y,z)}$ is created on the fly at the node $I_{y}$ during packet transmission when node $I_y$ receives the packet from $I_z$ and before forwarding to node $I_x$.\footnote{On packet reception from node $I_z$, node $I_{y}$ may ask the node $I_{x}$ to share its segment ID with node $I_{y}$. The effect on end-to-end delays is explained in Section \ref{sec:results}.} One such approach to produce $TSID_{(x,y,z)}$ involves utilizing the $g(I_{x})$, $g(I_y)$ and $g(I_{z})$ in conjunction with the key $K_y$ to derive a string $TSID_{(x,y,z)} = (g(I_{x})||g(I_y)||g(I_{z})||K_y)$. Note that the function $TSID_{(\cdot, \cdot,\cdot)}$ is not symmetric, i.e., $TSID_{(x,y,z)} \neq TSID_{(z,y,x)}$, where $TSID_{(y,z,x)} = (g(I_{z})||g(I_y)||g(I_{x})||K_y)$ is the tri-segment ID derived at the node $I_{y}$ for the tri-segment link from node $I_{z}$ to the node $I_{y}$ and from $I_{y}$ to node $I_{x}$. We assume that the RSU has access to the segment IDs and secret keys of all the nodes, allowing it to obtain the set of all TSIDs in the network, represented as $\mathcal{TS} = \{TSID_{(I_{x},I_{y},I_{z})}|\forall x,y,z \in [N]\backslash\{N-1\} \cup \{s\}; x\neq y, y\neq z, x\neq z \}$.\looseness=-1
\vspace{-0.3cm}
\subsection{Embedding Process for Spatial-Provenance} 
\label{sec:BF_embedding}
To explain the provenance embedding process, we assume that the packets are forwarded from the source node $I_{s}$ through a $h$-hop path $I_{s} \rightarrow I_{j_{1}} \rightarrow I_{j_{2}} \rightarrow \ldots \rightarrow I_{j_{h-2}} \rightarrow I_{j_{h-1}}$, where $j \in [N]\backslash\{N-1\}$, before reaching the RSU. To start with, both Bloom filters are initialized to all-zero vectors, and the counter is also initialized to zero. Now, we will explain the embedding process of BSE followed by TSE. 

\subsubsection{BSE Process for Spatial-Provenance}\label{sec:BSE_Embedding}
In BSE process, the source node $I_s$ leaves both $\textbf{BF}_1$ and $\textbf{BF}_2$ untouched, increments the hop-counter to one, and transmits the packet to the next node $I_{j_1}$ in the $h$-hop path. 

Upon receiving the packet from node $I_s$, node $I_{j_{1}}$ updates both $\textbf{BF}_{1}$ and $\textbf{BF}_{2}$ as follows. To embed the double-edge ID connecting node $I_{s}$, $I_{j_{1}}$ and $I_{j_{2}}$ onto $\textbf{BF}_{1}$, a Hash function $H_{m_1}(\cdot)$ that randomly picks an index in $[1, m_{1}]$ with uniform distribution is used. Then, node $I_{j_{1}}$ generates $k_{1}$ indices denoted by $\{w_{I_{j_{2}},I_{j_{1}}, I_s }^{(a)} \in [1, m_{1}]| ~1\leq a \leq k_{1}\}$ in an independent manner as
\begin{equation}
\label{eq:embed_BF1}
w_{I_{j_{2}},I_{j_{1}}, I_s }^{(a)} = H_{m_1}(DEID_{(I_{j_{2}},I_{j_{1}},I_{s})}||pid||a),
\end{equation}
where $DEID_{(I_{j_{2}},I_{j_{1}},I_{s})}$ denotes the double-edge ID for the link $I_{s} \rightarrow I_{j_{1}} \rightarrow I_{j_{2}}$ generated at $I_{j_{1}}$, and $pid$ signifies the packet ID. Subsequently, the contents of $\textbf{BF}_{1}$ is set to one as $\textbf{BF}_{1}[w_{I_{j_{2}},I_{j_{1}}, I_s }^{(a)}] = 1$ on the chosen set of indices $\{w_{I_{j_{2}},I_{j_{1}}, I_s }^{(a)}  \in [1,m_1]|~ 1\leq a \leq k_{1}\}$. After embedding $\textbf{BF}_{1}$, node $I_{j_{1}}$ proceeds to embed its BSID onto $\textbf{BF}_{2}$.\looseness=-1

Using a Hash function $H_{m_2}(\cdot)$ that randomly picks an index in $[1, m_{2}]$ with uniform distribution, node $I_{j_1}$ generates $k_{2}$ indices denoted by $\{v_{I_{j_1}, I_s}^{(l)} \in [1, m_{2}]| 1\leq l \leq k_{2}\}$ in an independent manner as\looseness=-1
\begin{equation}
\label{eq:embed_BF2_LEID}
v_{I_{j_1}, I_s}^{(l)} = H_{m_2}(I_{j_1}||BSID_{(I_{j_1}, I_s)}||pid||l),
\end{equation}
where $pid$ denotes the packet ID, and all the other notations are as explained earlier. Subsequently, the contents of $\textbf{BF}_{2}$ is set to one as $\textbf{BF}_{2}[v_{I_{j_1}, I_s}^{(l)}] = 1$ on the chosen set of indices $\{v_{I_{j_1},I_s}^{(l)} | 1\leq l \leq k_{2}\}$. Once $\textbf{BF}_2$ is updated at node $I_{j_1}$, the hop-counter is incremented by one, and the packet is forwarded to the next node $I_{j_{2}}$.\looseness=-1 

Along similar lines, the hop-counter is updated at every node, while $\textbf{BF}_{1}$,  $\textbf{BF}_{2}$ are updated at every alternate node when the packet is en route to the RSU through the rest of the  $h$-hop path, i.e., $I_{j_{2}} \rightarrow I_{j_{3}} \rightarrow \ldots \rightarrow I_{j_{h-2}} \rightarrow I_{j_{h-1}}$.\looseness=-1

\subsubsection{TSE Process for Spatial-Provenance}\label{sec:TSE_Embedding}
In TSE process, the source node $I_s$ leaves both $\textbf{BF}_1$ and $\textbf{BF}_2$ untouched, increments the hop-counter to one, and transmits the packet to the next node $I_{j_1}$ in the $h$-hop path.

Upon receiving the packet from node $I_s$, node $I_{j_{1}}$ updates both $\textbf{BF}_{1}$ and $\textbf{BF}_{2}$ as follows. To embed the double-edge ID connecting node $I_{s}$, $I_{j_{1}}$ and $I_{j_{2}}$ onto $\textbf{BF}_{1}$, node $I_{j_{1}}$ follows the same procedure for embedding double-edge as described for BSE. Subsequently, after embedding $\textbf{BF}_{1}$, node $I_{j_{1}}$ proceeds to embed its TSID onto $\textbf{BF}_{2}$.\looseness=-1

Using a Hash function $H_{m_2}(\cdot)$ that randomly picks an index in $[1, m_{2}]$ with uniform distribution, node $I_{j_1}$ generates $k_{2}$ indices denoted by $\{u_{I_{j_{2}},I_{j_{1}}, I_s }^{(t)} \in [1, m_{2}]| 1\leq t \leq k_{2}\}$ in an independent manner as\looseness=-1
\begin{equation}
\label{eq:embed_BF2_LDEID}
u_{I_{j_{2}},I_{j_{1}}, I_s }^{(t)} = H_{m_2}(I_{j_1}||TSID_{(I_{j_2},I_{j_1}, I_s)}||pid||t),
\end{equation}
where $TSID_{(I_{j_{2}}, I_{j_{1}}, I_{s})}$ denotes the tri-segment ID for the link $g(I_{s}) \rightarrow g(I_{j_{1}}) \rightarrow g(I_{j_{2}})$ generated at $I_{j_{1}}$ and all the other notations are as explained earlier. Subsequently, the bits in $\textbf{BF}_{2}$ are set to one as $\textbf{BF}_{2}[u_{I_{j_{2}},I_{j_{1}}, I_s }^{(t)}] = 1$ on the chosen set of indices $\{u_{I_{j_1}}^{(t)} | 1\leq t \leq k_{2}\}$. Once $\textbf{BF}_2$ is updated at node $I_{j_1}$, the hop-counter is incremented by one, and the packet is forwarded to the next node $I_{j_{2}}$.\looseness=-1 

Similarly, the hop-counter is updated at every node to reflect the number of hops encountered during its journey. $\textbf{BF}_{1}$ is updated at every alternate node, starting from $I_{j_1}$, such that updates occur at $I_{j_1},I_{j_3}, I_{j_5},\ldots,$ in a $h$-hop path until the destination node, i.e., the RSU is reached. On the other hand, $\textbf{BF}_{2}$ is updated at every third node, beginning with $I_{j_1}$, followed by updates at $I_{j_4},I_{j_7},\ldots,$, while the packet reaches the RSU through the rest of the path $I_{j_{2}} \rightarrow I_{j_{3}} \rightarrow \ldots \rightarrow I_{j_{h-2}} \rightarrow I_{j_{h-1}}$. Next, we explain the process of spatial-provenance recovery at the RSU.\looseness=-1

\subsection{Recovery Process for Spatial-Provenance} \label{sec:recovery}
When the packet reaches the RSU, the RSU recovers the path traveled by the packet and the corresponding segment IDs of the forwarding nodes. The recovery procedure has two stages: initially, the RSU retrieves the path traversed by the packet from the received $\textbf{BF}_1$, followed by the extraction of segment IDs from the received $\textbf{BF}_2$.\looseness=-1 

Utilizing $\textbf{BF}_{1}$, the RSU evaluates all prospective double-edges in the network, represented by $\mathcal{DE}$, to compile a set of recovered double-edges represented as $\hat{\mathcal{DE}}$, where $\hat{\mathcal{DE}} \subseteq \mathcal{DE}$. For retrieving the double-edges from $\textbf{BF}_1$, the RSU has all the necessary components, including node IDs $\{I_y\}$, packet ID $pid$, keys $\{K_y\}$, and double-edge IDs $\{DEID_{(x,y,z)} ~|~x,y,z \in [N]\backslash\{N-1\} \cup \{s\}\}$. In this context, a recovered double-edge from node $I_{z}$ to node $I_{y}$ and from node $I_{y}$ to node $I_{x}$ fulfills the requirement of $\textbf{BF}_{1}[p_{x,y,z}^{(a)}] = 1 $ for all $k_1$ indices, defined by $\{p_{x,y,z}^{(a)} \in [1,m_1]~|~ 1\leq a \leq k_1\}$, which are subsequently created at the RSU utilizing the double-edge IDs and all other credentials in a manner analogous to \eqref{eq:embed_BF1}. After that, the RSU employs a depth-first search algorithm on the retrieved set of double-edges $\hat{\mathcal{DE}}$ to derive a candidate set of $h$-hop paths from the source node $I_s$ to the RSU, denoted as $\mathcal{C} \triangleq \{\mathbf{c}_1, \mathbf{c}_2,\ldots,\mathbf{c}_{|\mathcal{C}|}\}$, where each $\mathbf{c}_i$ represents a $h$-hop path and $|\mathcal{C}| \in \mathbb{N}$.\looseness=-1 

From each candidate $h$-hop path $\mathbf{c}_i$ consisting of double-edges $\mathbf{c}_i=\{(I_s, I_{i_1},I_{i_2}), (I_{i_2},I_{i_3},I_{i_4} )\ldots (I_{i_{h-2}},I_{i_{h-1}}, I_{N})\}$, the RSU extracts the set of unique node IDs, which are represented as $\mathcal{V}_i = \{I_{s}, I_{i_{1}},I_{i_{2}}, \ldots, I_{i_{h-1}} \}$.
The set of unique node ID $\mathcal{V}_i$ is the ordered set of node IDs, where the cardinality of the set ${V}_i$ is $h$.\looseness=-1 

\subsubsection{Process of BSE Recovery} \label{sec:BSE_recovery}
For the recovery of segment IDs in BSE, from the set $\mathcal{V}_i$, starting from node $I_{i_1}$, the RSU extracts every alternate node in the set $\mathcal{V}'_i = \{I_{i_1},I_{i_3},I_{i_5},\ldots\}$.
Using the set of unique node IDs in $\mathcal{V}'_i$, the RSU generates the potential BSIDs to check their presence in $\textbf{BF}_2$. For creating the potential BSIDs, for a given node ID $v\in \mathcal{V}'_i$, RSU needs to have the segment ID of the node ID $v$, i.e., $g(v)$, and the segment ID of the node that has forwarded the packet to the node $v$. For the subscript of the segment ID of the particular embedding node $v \in \mathcal{V}'_1$, RSU has $r$ options, or it can be said that the segment ID of node $v$ will have a unique value from the set $\Delta$. The subscript of the segment ID of the forwarding node corresponding to the embedding node $v \in \mathcal{V}'_i$ lies in  $[S(g(v), \min(r, S(g(v))+\beta)]$, because the forwarding node may only lie in the segment which is within the communication range $\beta$ of node $v$ as defined in Section \ref{sec:Routing Constraints}. 
After this, for retrieving the bi-segment information from $\textbf{BF}_2$, the RSU has all the necessary components, including the node IDs of the embedding nodes $\{\mathcal{V}_i'\}$, packet ID $pid$, and the subscripts of the segment IDs to create the corresponding (nodeID, BSID) pairs as explained earlier.
Subsequently, it verifies the existence of each such pair in $\textbf{BF}_2$ by reconstructing the set of $k_{2}$ indices, analogous to \eqref{eq:embed_BF2_LEID}. In this context, a pair (nodeID, BSID) is referred to as a recovered pair if the values in $\textbf{BF}_{2}$ on all the chosen $k_{2}$ locations are set to one. Finally, for a given path $\mathbf{c}_{i}$, the RSU retrieves a set of possible segment IDs for each node, which is denoted as $\Delta_{i} = \{\Delta_{I_s}, \Delta_{I_{i_{1}}}, \Delta_{I_{i_{2}}}, \ldots, \Delta_{I_{i_{h-1}}}\}$. Note that all paths in $\mathcal{C}$ may not have valid bi-segments in $\textbf{BF}_2$, which adhere to the communication constraints defined in Section \ref{sec:Routing Constraints}.\looseness=-1 

\subsubsection{Process of TSE Recovery} \label{sec:TSE_recovery}
For the recovery of segment IDs in TSE, from the set $\mathcal{V}_i$, starting from node $I_{i_1}$, the RSU extracts the node ID of every third node resulting in a set $\mathcal{V}_i'' = \{I_{i_1},I_{i_4},I_{i_7},\ldots\}$. 
Using the set of unique node IDs in $\mathcal{V}_i''$, the RSU generates  TSIDs, where for each node ID in $\mathcal{V}_i''$, the RSU evaluates the $\Delta$ options to determine the possible segment ID of the node. 
Specifically, for each node $v \in \mathcal{V}_i''$, the RSU iteratively checks the set $\Delta$ to identify the valid TSIDs. 
After this, the RSU has all the necessary components, including the node IDs of the embedding nodes $\{\mathcal{V}_i''\}$, packet ID $pid$, and the subscripts of the segment IDs to create the corresponding (nodeID, TSID) pair as explained above. Subsequently, it verifies the existence of each such pair in $\textbf{BF}_2$ by reconstructing the set of $k_{2}$ indices, analogous to \eqref{eq:embed_BF2_LDEID}. In this context, a pair (nodeID, TSID) is referred to as a recovered pair if the values in $\textbf{BF}_{2}$ on all the chosen $k_{2}$ locations are set to one. At last, for the designated path $\mathbf{c}_{i}$, a set with possible segment IDs, which constitutes a subset of $\Delta$, is retrieved for each node, denoted as $\Delta_{i} = \{\Delta_{I_s}, \Delta_{I_{i_{1}}}, \Delta_{I_{i_{2}}}, \ldots, \Delta_{I_{i_{h-1}}}\}$. Note that all paths in $\mathcal{C}$ may not have valid tri-segments in $\textbf{BF}_2$, which adhere to the communication constraints defined in Section \ref{sec:Routing Constraints}.\looseness=-1 

In summary, using $\textbf{BF}_{1}$ and $\textbf{BF}_{2}$, the RSU produces the set $\{(\mathbf{c_{i}}, \Delta_{i})\} ~|~ i = 1, 2, \ldots, |\mathcal{C}|]$ with respect to the received packet ID $pid$. With the knowledge of the communication and routing constraints listed in Section \ref{sec:Routing Constraints}, RSU further filters the list of candidate paths and their associated segment IDs from the set $\{(\mathbf{c_{i}}, \Delta_{i})\} ~|~ i = 1, 2, \ldots, |\mathcal{C}|]$ to obtain $\{(\mathbf{c_{i}}, \bar{\Delta}_{i})\} ~|~ i = 1, 2, \ldots, |\mathcal{C}|]$, where $\bar{\Delta}_{i} = \{\bar{\Delta}_{I_s}, \bar{\Delta}_{I_{i_{1}}}, \bar{\Delta}_{I_{i_{2}}}, \ldots, \bar{\Delta}_{I_{i_{h-1}}}\}$ such that $\bar{\Delta}_{I} \subseteq \Delta_{I} ~\forall \, I \in \{I_s, I_{i_1}, \ldots, I_{i_{h-1}}\}$ are obtained after applying the constraints in Section \ref{sec:Routing Constraints}.\looseness=-1

Since the packet can traverse only a single path, the RSU should ideally recover a single path and unique segment ID for each participating node, i.e., $|\mathcal{C}| = 1$ and $|\bar{\Delta}_{I}| = 1, ~\forall \, I \in \{I_s, I_{j_1}, \ldots, I_{j_{h-1}}\}$. However, Bloom filters are also prone to false positive events, particularly if their parameters $m_1,~ m_2,~k_1$ and $k_2$ are not chosen appropriately. Within the scope of this work, the recovery process of spatial-provenance yields a false-positive event if the RSU identifies multiple solutions for the path and segment IDs that meet the conditions outlined in Section \ref{sec:Routing Constraints}, specifically, if $|\mathcal{C}| > 1$ or $|\mathcal{C}| = 1$ with $|\bar{\Delta}_{I}| > 1,$ for some $I \in \{I_s, I_{j_1}, \ldots, I_{j_{h-1}}\}$. Consequently, to get a single path and a unique segment ID for each participating node with a high probability, it is essential to carefully select the Bloom filter parameters $m_1,~ m_2,~k_1$, and $k_2$. Table \ref{tab:notation} presents a comprehensive list of notations in the protocol definition.

\subsection{Complexity Analysis}\label{sec:complexityAnalysis}
Recall that both BSE and TSE embed DDE in $\textbf{BF}_1$, incurring identical embedding cost $\mathcal{O}(\floor{h/2}k_1)$ along an $h$-hop path. The two schemes differ only in their use of $\textbf{BF}_2$, the embedding cost along an $h$-hop path is $\mathcal{O}(\floor{h/2}k_2)$ for BSE and $\mathcal{O}(\floor{(h+1)/3}k_2)$ for TSE. In contrast, the baseline LNE \cite[Section III-B]{ManishTDSC} incurs a embedding cost of $\mathcal{O}(hk_1)$ for $\textbf{BF}_1$ and $\mathcal{O}(hk_2))$ for $\textbf{BF}_2$.\looseness=-1 

During recovery, both schemes first reconstruct the traversed path from $\textbf{BF}_1$, incurring a computation cost of $\mathcal{O}(N^3k_1)$ at the RSU \cite[Table 1]{amogh}. Subsequently, the RSU verifies segment information using $\textbf{BF}_2$, which incurs a cost of $\mathcal{O}(h r \beta k_2)$ for BSE (refer to Section \ref{sec:BSE_recovery}) and $\mathcal{O}(h r \beta^2 k_2)$ for TSE (refer to Section \ref{sec:TSE_recovery}). In contrast, the baseline LNE \cite[Section III-B]{ManishTDSC} incurs a cost of $\mathcal{O}(N^2k_1)$ for $\textbf{BF}_1$ and $\mathcal{O}((h+1)rk_2)$ for $\textbf{BF}_2$. Compared to LNE \cite{ManishTDSC}, BSE and TSE trade a modest increase in computation at the RSU for a substantial reduction in node-level computation. Given that the RSU is computationally more powerful, this additional overhead has a marginal impact on overall performance. At the RSU, transient storage per received packet is also $m_1 + m_2$ bits. If the RSU stores all matching tuples for verification, the temporary storage requirement is $\mathcal{O}(hr\beta)$ for BSE and $\mathcal{O}(hr \beta^2)$ for TSE. Before proceeding to find the optimal parameters for our network model, we evaluate the relaxed-privacy attributes of our proposed methodology in the next section.\looseness=-1 

\begin{table} 
   \centering
   \caption{\label{tab:notation} Notations used in this paper.}
   \resizebox{6.5 cm}{!}{
   \begin{small}
   \begin{tabular}{|c|c|}
     \hline
     \textbf{Term} & \textbf{Meaning} \\
     \hline
     $N$ & Number of nodes in the network \\
     \hline
      $m_1$ & Size of $\textbf{BF}_1$ (in bits) \\
     \hline
     $k_1$ & Number of the Hash function used in $\textbf{BF}_1$ \\
     \hline
      $m_2$ & Size of $\textbf{BF}_2$ (in bits) \\
     \hline
     $k_2$ & Number of the Hash function used in $\textbf{BF}_2$ \\
     \hline
     $h$ & Number of hops traveled by the packet \\
     \hline
     $I_i$ & ID of $i^{th}$ node \\
     \hline
     $\beta$ & Communication constraint \\
     \hline
     $(x,y,z)$ & Double-edge connecting node $I_z \rightarrow I_y \rightarrow I_x$ \\
     \hline
     $\Delta, ~|\Delta|$ & Set of segment IDs, cardinality of set $\Delta$\\
     \hline
     $\alpha$ & Number of random bits lit in $\textbf{BF}_2$\\
     \hline
     $pid$ & Packet identity (ID)\\
     \hline
     $g ()$ & A function to translate $I_i \rightarrow \Delta$\\
     \hline
     $S()$ & A function to translate $\Delta \rightarrow [r]$\\
     \hline
     \end{tabular}
     \end{small}
     }
     \vspace{-0.3cm}
 \end{table}
\vspace{-0.3cm}
\subsection{Quantification of Relaxed-Privacy in BSE and TSE}

\bl{In this section, we quantify the relaxed-privacy attributes of our method using entropy-based measures. In our model, vehicles reveal the segment ID deterministically, where GPS coordinates are mapped to segment IDs, thereby concealing fine-grained locations. Unlike approaches such as differential privacy and geo-indistinguishability \cite{CCN}, which add noise to location data, our framework implicitly introduces uncertainty since a vehicle’s exact GPS position remains uniformly distributed within a segment. We quantify this privacy using entropy-based methods and relate it to geo-indistinguishability.\looseness=-1}


\bl{Let the RSU coverage area of $y$ square meters be divided into  $M \triangleq \frac{y}{\delta}$ Voronoi regions (as depicted in Fig. \ref{fig: network Image_car}), where $\delta$ is the GPS resolution in square meters. When a vehicle shares its exact GPS location, the RSU resolves an uncertainty of $\log_2(M)$ bits. In our proposed framework, the coverage area is partitioned into $r$ segments, each containing $M/r$ Voronoi regions. Therefore, the RSU recovers only $\log_2(r)$ bits, leaving a residual uncertainty of $\log_2(M/r)$ bits. This residual uncertainty is the measure of relaxed-privacy adopted in this work.}\looseness=-1

\bl{Privacy is quantified as the residual uncertainty of an observer about the precise GPS location of a vehicle after packet observation, and the information gain is denoted as $\Delta H$ bits.  Any forwarding node knows only that a packet originated from one of the $\beta$ upstream segments, giving a general prior uncertainty of $H^{prior} = \log_2(\beta M/r)$ bits. In LNE \cite{ManishTDSC}, no node explicitly shares its location with any neighbor. Therefore, $\Delta H =0$ bits at every forwarding node. In BSE (refer to Section \ref{sec:BSE_Embedding}), only $\lfloor h/2 \rfloor$ embedding nodes out of $h$ forwarding nodes explicitly receive a segment ID from an immediate neighbor, reducing their posterior uncertainty to $H^{post} =\log_2(M/r)$ bits, resulting $\Delta H = \log_2(\beta)$ bits at embedding node. The remaining $\lceil h/2 \rceil$ non-embedding nodes gain no information, i.e., $\Delta H = 0$ bits. A similar analysis can be carried out for TSE, and the resulting uncertainties are summarized in Table~\ref{tab:privacy_summary}. 
Across all three schemes, the RSU is able to recover only the segment IDs of forwarding nodes, resulting in $\Delta H = \log_2(r)$ bits. Non-authenticated nodes, which may act as eavesdroppers, have a wireless range assumed to be approximately $ 2\beta$ segments. Since segment IDs are embedded in Bloom filters and these nodes do not have $K_y$, their $\Delta H = 0$ bits across all three schemes. As observed from Row~3 of Table~\ref{tab:privacy_summary}, the privacy loss with respect to the neighboring nodes in TSE is twice that of BSE, while BSE incurs $\log_2(\beta)$ bits privacy loss with respect to the neighboring nodes compared to LNE \cite{ManishTDSC}.}\looseness=-1


\begin{table}[h]
\centering
{\color{black}
\caption{\bl{Privacy exposure of BSE and TSE relative to LNE \cite{ManishTDSC}.\looseness=-1}}
\label{tab:privacy_summary}
\footnotesize
\setlength{\tabcolsep}{4pt}
\renewcommand{\arraystretch}{1.3}
\resizebox{6.5 cm}{!}{
\begin{tabular}{|c|l|c|c|c|}
\hline
\textbf{No.} & \textbf{Observer / Quantity} & \textbf{LNE \cite{ManishTDSC}} & \textbf{BSE} & \textbf{TSE} \\
\hline
\multicolumn{5}{|l|}{\textit{Embedding node}} \\
\hline
1 & $H^{prior}$ (bits) 
  & $\log_2\!\left(\beta\frac{M}{r}\right)$ 
  & $\log_2\!\left(\beta\frac{M}{r}\right)$ 
  & $2\log_2\!\left(\beta\frac{M}{r}\right)$ \\
2 & $H^{post}$ (bits)  
  & $\log_2\!\left(\beta\frac{M}{r}\right)$ 
  & $\log_2\!\left(\frac{M}{r}\right)$ 
  & $2\log_2\!\left(\frac{M}{r}\right)$ \\
3 & $\Delta H$ (bits)  
  & $0$ 
  & $\log_2(\beta)$ 
  & $2\log_2(\beta)$ \\
\hline
\multicolumn{5}{|l|}{\textit{Non-embedding forwarding node}} \\
\hline
4 & $H^{prior}$ (bits) 
  & $-$ 
  & $\log_2\!\left(\beta\frac{M}{r}\right)$ 
  & $\log_2\!\left(\beta\frac{M}{r}\right)$ \\
5 & $H^{post}$ (bits)  
  & $-$ 
  & $\log_2\!\left(\beta\frac{M}{r}\right)$ 
  & $\log_2\!\left(\beta\frac{M}{r}\right)$ \\
6 & $\Delta H$ (bits)  
  & $-$ 
  & $0$ 
  & $0$ \\
\hline
\multicolumn{5}{|l|}{\textit{RSU}}  \\
\hline
7 & $H^{prior}$ (bits) 
  & $\log_2(M)$ 
  & $\log_2(M)$ 
  & $\log_2(M)$ \\
8 & $H^{post}$ (bits)  
  & $\log_2\!\left(\frac{M}{r}\right)$ 
  & $\log_2\!\left(\frac{M}{r}\right)$ 
  & $\log_2\!\left(\frac{M}{r}\right)$ \\
9 & $\Delta H$ (bits)  
  & $\log_2(r)$ 
  & $\log_2(r)$ 
  & $\log_2(r)$ \\
\hline
\end{tabular}}}
\end{table}

Our privacy framework can also be interpreted through the lens of geo-indistinguishability \cite{CCN}. Each GPS Voronoi area, denoted by $x$, is deterministically mapped to its corresponding segment $A_i$, for $i \in [r]$, and upon observing $A_i$, the RSU’s posterior becomes a uniform distribution over all GPS Voronoi areas within that segment. Consequently, for any two locations $x,x'$ within the same segment $A_i$, their posterior distributions are identical which results in zero Kullback–Leibler (KL) divergence; hence there is zero loss of geo-indistinguishability within the segment. However, for $x \in A_i$ and $x' \in A_j$ for $i \neq j$, the KL divergence is infinite, leading to maximum loss of geo-indistinguishability. In this case, the RSU accurately learns the segment IDs of the vehicles, without knowing the precise GPS Voronoi areas of the vehicles inside each segment. The binary behavior of the KL divergence, i.e., taking only values of zero or infinity, can be easily modeled using entropy-based uncertainty measures, which are sufficient to fully capture the privacy properties of the proposed framework, without loss of generality.\looseness=-1

In the following section, we comprehensively study the false-positive events linked to our methodology and discuss optimizing the Bloom filter parameters for BSE and TSE.\looseness=-1
\vspace{-0.1cm}
\section{Optimization of Bloom Filter Parameters}  \label{sec:optimization} 
\vspace{-0.1cm}
As discussed in Section \ref{sec:recovery}, we now formally analyze the selection of parameters $m_1$, $m_2$, $k_1$, and $k_2$ to minimize the rate of false positive events. When employing the proposed spatial-provenance recovery approach, let $E_B^{(1,2)}$ for BSE and $E_T^{(1,2)}$ for TSE denote a false positive event in which the RSU identifies multiple solutions for the path and segment IDs that meet the constraints outlined in Section \ref{sec:Routing Constraints}.
With such events, the RSU will be unable to ascertain the actual path taken by the packet or the true segment IDs of the forwarding nodes. To provide high reliability in the spatial-provenance recovery process, the probability of false positive events, denoted by $\Pr{(E_B^{(1,2)})}$ for BSE and $\Pr{(E_T^{(1,2)})}$ for TSE, must be minimized over the number of packets traversed through the network. Let $\mathcal{G}_{\beta}$ denote the collection of all valid mappings $g(\cdot)$ between the IDs of the forwarding nodes and $\Delta$ satisfying the constraints in Section \ref{sec:Routing Constraints}. Consequently, before implementing the proposed protocols for a specific hop-length, it is essential to select the parameters $m_1,~ m_2,~k_1$ and $k_2$ for $\textbf{BF}_{1}$ and $\textbf{BF}_{2}$ to minimize the above probability of false positive events, averaged across all valid spatial distributions of the forwarding nodes, denoted as $g \in \mathcal{G}_{\beta}$, that comply with the constraints outlined in Section \ref{sec:Routing Constraints}.\looseness=-1

The recovery procedure, outlined in Section \ref{sec:recovery}, entails path restoration from $\textbf{BF}_1$ and subsequently the recovery of segment IDs from $\textbf{BF}_2$. Therefore, quantifying the probability of false positive events is inherently complex because it involves two Bloom filters. Given that the Bloom filter for embedding double-edge is comprehensively studied in \cite{amogh},\cite{suraj}, we focus on solving a variant of the problem wherein we assume that $\textbf{BF}_1$ is optimized for embedding double-edge and only one path is retrieved from $\textbf{BF}_1$. When recovering the segment IDs for BSE, let $E_B^{(1)}$ represent the false positive event from $\textbf{BF}_1$ in which multiple paths of hop-length $h$ are retrieved. Consequently, the RSU declares an event $E_B^{(1, 2)}$ on the occurrence of $E_B^{(1)}$; otherwise, it continues to verify the segment IDs of all the forwarding nodes retrieved from $\textbf{BF}_{1}$. While using a path retrieved from $\textbf{BF}_{1}$ in BSE recovery, let $E_B^{(2)}$ represent the event in which more than one segment ID sequence adhering to the constraints outlined in Section \ref{sec:Routing Constraints} are retrieved from $\textbf{BF}_{2}$. Similarly, $E_T^{(1)}$, $E_T^{(1, 2)}$ and $E_T^{(2)}$ denotes the corresponding false positive events in TSE recovery process, analogous to the definition of  $E_B^{(1)}$, $E_B^{(1, 2)}$ and $E_B^{(2)}$ in BSE recovery process. Therefore, with $E_{B}^{(1)}$, $E_{B}^{(2)}$ for BSE and using $E_T^{(1, 2)}$ and $E_T^{(2)}$ for TSE, the overall probability of false positive event $\Pr(E_{B}^{(1,2)})$ for BSE and $\Pr(E_{T}^{(1,2)})$ for TSE are bounded by
\begin{IEEEeqnarray*}{rcl}
\Pr(E_{B}^{(1,2)}) \leq \Pr(E_{B}^{(1)}) + (1-\Pr(E_{B}^{(1)})) \times \Pr(E_{B}^{(2)}),\\
\Pr(E_{T}^{(1,2)}) \leq \Pr(E_{T}^{(1)}) + (1-\Pr(E_{T}^{(1)})) \times \Pr(E_{T}^{(2)}),
\end{IEEEeqnarray*}
where $\Pr(E_{B}^{(1)})$ represents the probability of false positive event in $\textbf{BF}_1$, whereas $\Pr(E_{B}^{(2)})$ signifies the probability of false positive event in $\textbf{BF}_2$ conditioned on no false positive event from $\textbf{BF}_{1}$ in BSE recovery process. Note that if the size of $\textbf{BF}_1$ is sufficiently large, and the number of Hash functions within it is already optimized\footnote{\bl{Bloom filters for embedding the double-edges of a path have been thoroughly investigated for large networks in \cite{amogh, suraj}. Thus, solutions for optimizing the number of Hash functions $k_{1}$ may be identified in \cite{amogh, suraj}. By using proven double-edge embedding techniques, our approach applies to large networks, as $\textbf{BF}_{1}$ absorbs the scalability feature.}}, then $\Pr(E_{B}^{(1)}) \approx 0$. Consequently, we may approximate $\Pr(E_{B}^{(1,2)}) \approx \Pr(E_{B}^{(2)})$ under such assumptions. Similarly, $\Pr(E_{T}^{(1)})$ and $\Pr(E_{T}^{(2)})$ follows for TSE recovery process analogous to the $\Pr(E_{B}^{(1)})$ and $\Pr(E_{B}^{(2)})$ of BSE recovery process, and with the similar arguments $\Pr(E_{T}^{(1,2)}) \approx \Pr(E_{T}^{(2)})$ under similar assumptions as explained above.\looseness=-1

Utilizing the relaxation as mentioned above, we present Problem \ref{problem2} that optimizes the parameters of $\textbf{BF}_{2}$.\looseness=-1
{\begin{problem_stmt}{}\label{problem2}
Given $N$, $h$, $\Delta$, $m_1$, $m_2$ and $\beta$ solve:
\begin{IEEEeqnarray}{rCl}
k^{+}_2 = \arg \min_{\{k_2\}} \underset{g\in \mathcal{G}_{\beta}}{\mathbb{E}} [\Pr(E_{B}^{(2)})],
\end{IEEEeqnarray}
\begin{IEEEeqnarray}{rCl}
k^{-}_2 = \arg \min_{\{k_2\}} \underset{g\in \mathcal{G}_{\beta}}{\mathbb{E}} [\Pr(E_{T}^{(2)})].
\end{IEEEeqnarray}
\end{problem_stmt}}

For clarity and conciseness in our explanation, we introduce a unified notation to discuss both methods, i.e., BSE and TSE, which share similar data structures, i.e., $\textbf{BF}_1$ and $\textbf{BF}_2$ along with almost similar recovery procedures on a broad level. We denote the terms related to false positive events as $E_X$ and related probability notation with $X$ subscript instead of $B$ or $T$, where $X \in \{B, T\}$, with $B$ corresponding to BSE and $T$ corresponding to TSE. This notation allows us to present a common theoretical framework while distinguishing between the two protocols, thereby streamlining the explanation without sacrificing precision.\looseness=-1
\vspace{-0.2cm}
\section{Optimization of $\textbf{BF}_2$ Parameters} \label{sec:optimization_of_BF2}
In this section, we will initially obtain the expression for $\underset{g\in \mathcal{G}_{\beta}}{\mathbb{E}} [\Pr(E_{X}^{(2)})]$ by characterizing how false positives arise from both spatial path ambiguity and Bloom-filter collisions. In particular, Section \ref{sec:closed_form_objective_fucntion} derives the main expression for $\underset{g\in \mathcal{G}_{\beta}}{\mathbb{E}} [\Pr(E_{X}^{(2)})]$ by averaging the probability of false positives over all valid segment sequences, which are enumerated using a $\beta$-ary tree, and by quantifying collision-induced verifications at the RSU. Thereafter, Section \ref{sec:Low_complexity_C_j_any_beta} analyzes how Bloom-filter false positives translate into segment sequence ambiguity by counting, for a given true segment sequence, the number of alternative segment sequences that can be recovered due to one or more unintended recoveries. Finally, Section \ref{sec:solving_problem_2} combines these results to obtain a closed-form analytical expression for the optimal solution of Problem \ref{problem2}.
\vspace{-0.3cm}
\subsection{Closed-form Expression for $\underset{g\in \mathcal{G}_{\beta}}{\mathbb{E}} [\Pr(E_{X}^{(2)})]$ } \label{sec:closed_form_objective_fucntion}
To solve Problem \ref{problem2}, we need to determine $\underset{g\in \mathcal{G}_{\beta}}{\mathbb{E}} [\Pr(E_{X}^{(2)})]$. 
Assume that for a specific spatial distribution $g \in \mathcal{G}_{\beta}$ of forwarding nodes, a single path with hop-length $h$ has been successfully reconstructed from $\textbf{BF}_1$, and the nodes along this path have embedded their segment IDs in $\textbf{BF}_2$ for the RSU to ascertain the locations of the forwarding nodes. Based on the first principle,\looseness=-1
\begin{IEEEeqnarray}{rCl}
\label{eqn:obejctive1}
    \underset{g\in \mathcal{G}_{\beta}}{\mathbb{E}} [\Pr(E_{X}^{(2)})] = \sum\limits_{g \in \mathcal{G}_{\beta}} \Pr(g)\Pr(E_{X}^{(2)}|g),
\end{IEEEeqnarray}
\noindent \noindent where $\Pr(E_{X}^{(2)}|g)$ represents the probability of false positive event from $\textbf{BF}_2$ conditioned that spatial distribution of the forwarding nodes is $g$, and $\Pr(g)$ represents the probability of $g$. Assuming $\Pr(g) = \frac{1}{|\mathcal{G}_{\beta}|}$, we write
\begin{IEEEeqnarray}{rCl}
\label{eqn:objective2}
    \underset{g\in \mathcal{G}_{\beta}}{\mathbb{E}} [\Pr(E_{X}^{(2)})] = \frac{1}{|\mathcal{G}_{\beta}|}\sum\limits_{g \in \mathcal{G}_{\beta}} \Pr(E_{X}^{(2)}|g).
\end{IEEEeqnarray}

Given that every node which is embedding the segment ID in $\textbf{BF}_2$ utilizes $k_{2}$ Hash functions, the total number of bits lit in $\textbf{BF}_{2}$ during the packet's traversal is a random variable with a maximum value of $min(m_2, k_2h^*_X)$, where $h^*_X$ for $X \in \{B,T\}$, represents the number of nodes which embeds the segment IDs in $\textbf{BF}_2$.\looseness=-1 

Formally, in BSE protocol (refer to Section \ref{sec:BSE_Embedding}), every alternate node performs the insertion operation of segment IDs in $\textbf{BF}_2$; therefore, the value of $h^*_{B}$ is given by\looseness=-1
\begin{IEEEeqnarray}{rCl}
\label{eqn:h^*_B}
    h^*_{B} =\floor{h/2},
\end{IEEEeqnarray}
where $\floor{.}$ denotes the floor function. Whereas in TSE protocol (refer to Section \ref{sec:TSE_Embedding}), every third node performs the insertion operation of segment IDs in $\textbf{BF}_2$; therefore, the value of  $h^*_{T}$ is 
\begin{IEEEeqnarray}{rCl}
\label{eqn:h^*_T}
    h^*_{T} =\floor{(h+1)/3}.
\end{IEEEeqnarray}
With $\alpha$ representing the number of bits lit in $\textbf{BF}_2$ by all the forwarding nodes, each term in the summation given in the RHS of \eqref{eqn:objective2} is given in the following proposition.\looseness=-1
\begin{proposition} \label{prop:Pr_Efp} The expression for $\Pr(E_{X}^{(2)}|g)$, can be written using Bayes' theorem as:
\begin{IEEEeqnarray}{rCl}
\label{Eq:Pr_Efp2}
    \Pr(E_{X}^{(2)}|g) = \sum\limits_{\alpha=1}^{\min(m_{2},k_{2}h^*_X)}\Pr(E_{X}^{(2)}|\alpha,g)\Pr(\alpha),
\end{IEEEeqnarray}
where $\Pr(E_{fp}^{(2)}|\alpha,g)$ represents the probability of false positive event conditioned on $\alpha$ bits being lit in $\textbf{BF}_{2}$, and $\Pr(\alpha)$ signifies the probability of $\alpha$ bits lit in $\textbf{BF}_2$, as given in \cite{suraj}:\looseness=-1
\begin{IEEEeqnarray}{rCl}
\label{eqn:Pr_alpha}
\Pr(\alpha)=  \frac{{m_2 \choose \alpha}  \sum\limits_{\gamma=0}^{\alpha}(-1)^{\gamma} {{\alpha}\choose{\gamma}}(\alpha-\gamma)^{k_2 h^*_X}}{m_2^{k_2 h^*_X}}.
\end{IEEEeqnarray}
\end{proposition}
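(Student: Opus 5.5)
The plan has two parts. First, obtain \eqref{Eq:Pr_Efp2} by conditioning on the number of lit bits. Second, derive \eqref{eqn:Pr_alpha} by a counting and inclusion--exclusion argument.

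For the first part, fix $g\in\mathcal{G}_\beta$ and let $\alpha$ be the random number of ones in $\textbf{BF}_2$ after all $h^*_X$ embedding nodes have inserted. Each embedding node contributes $k_2$ index draws via \eqref{eq:embed_BF2_LEID} (BSE) or \eqref{eq:embed_BF2_LDEID} (TSE). In both cases we model $H_{m_2}$ as an ideal hash: the $k_2h^*_X$ draws are i.i.d.\ uniform on $[1,m_2]$. The inputs are distinct because the strings contain distinct $(I_y,\cdot,pid,l)$ tuples. Under this model $\alpha$ takes values in $\{1,\ldots,\min(m_2,k_2h^*_X)\}$. Because the hash outputs are independent of which segment IDs are hashed, the law of $\alpha$ does not depend on $g$, so $\Pr(\alpha\mid g)=\Pr(\alpha)$. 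The events $\{\alpha=a\}$ partition the sample space, so the law of total probability gives
\[
\Pr(E_X^{(2)}\mid g)=\sum_{\alpha}\Pr(E_X^{(2)}\mid\alpha,g)\Pr(\alpha\mid g).
\]
Substituting $\Pr(\alpha\mid g)=\Pr(\alpha)$ yields \eqref{Eq:Pr_Efp2}. Note that $\alpha\ge 1$ whenever $h^*_X\ge1$, so the lower summation limit is $1$.

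For the second part, write $n=k_2h^*_X$. The sequence of $n$ draws is uniform over $m_2^{n}$ equally likely sequences, which gives the denominator. The numerator counts sequences whose image has size exactly $\alpha$. There are $\binom{m_2}{\alpha}$ ways to choose the image set $T$. For a fixed $T$, the number of sequences in $T^n$ that hit every element of $T$ is the number of surjections $[n]\to T$. By inclusion--exclusion over the subset $\Gamma\subseteq T$ of missed indices, with $|\Gamma|=\gamma$, this count is $\sum_{\gamma=0}^{\alpha}(-1)^\gamma\binom{\alpha}{\gamma}(\alpha-\gamma)^n$. Multiplying and dividing gives \eqref{eqn:Pr_alpha}, matching the result cited from \cite{suraj}.

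The main obstacle is justifying the modeling step, not the algebra. We must check that the independence and uniformity assumptions on $H_{m_2}$ hold across nodes, and that conditioning on $g$ leaves the distribution of lit positions unchanged. This relies on the hash inputs being distinct strings, which the inclusion of $I_y$, $pid$ and the index $l$ guarantees. If two embedding inputs coincided, $n$ would overcount the effective draws. I would address this with a remark that the keyed inputs are pairwise distinct for distinct embedding nodes. The expression $\Pr(E_X^{(2)}\mid\alpha,g)$ itself is left for the subsequent analysis.
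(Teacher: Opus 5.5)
Your proposal is correct and takes essentially the same route as the paper, whose proof simply defers to \cite{suraj}. You condition on the number $\alpha$ of lit bits via the law of total probability (what the paper calls ``Bayes' theorem''), and you compute $\Pr(\alpha)$ as $\binom{m_2}{\alpha}$ times the inclusion--exclusion count of surjections from the $k_2h^*_X$ uniform hash draws onto the chosen $\alpha$ positions, divided by $m_2^{k_2h^*_X}$. Your explicit justification that $\Pr(\alpha\mid g)=\Pr(\alpha)$ under ideal hashing with pairwise-distinct inputs fills in a step the paper leaves implicit.
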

\begin{proof}
    The proof is analogous to the one presented in \cite{suraj}.
\end{proof}
By substituting \eqref{Eq:Pr_Efp2} into \eqref{eqn:objective2}, the objective function of Problem \ref{problem2} can be expressed as:\looseness=-1
\begin{small}
\begin{IEEEeqnarray}{rCl}
\label{eqn:objective5}
\sum\limits_{\alpha=1}^{\min(m_{2},k_{2}h^*_X)}  \Pr(\alpha) \bigg[\frac{1}{|\mathcal{G}_{\beta}|} \sum\limits_{g \in \mathcal{G}_{\beta}}  \Pr(E_{X}^{(2)}|\alpha,g)\bigg].
\end{IEEEeqnarray}
\end{small}
Here, \eqref{eqn:objective5} represents the probability of false positive events, averaged over all valid segment sequences $\mathcal{G}_{\beta}$ that satisfy the system constraints.
Subsequently, in the rest of the section, we present an expression for $\frac{1}{|\mathcal{G}_{\beta}|} \sum\limits_{g \in \mathcal{G}_{\beta}} \Pr(E_{X}^{(2)}|\alpha,g)$.\looseness=-1  

Computing the average requires enumerating all valid segment sequences, as defined below.\looseness=-1

\begin{definition}\label{def:valid_sequecnes}
A valid sequence of segments is a sequence from $\Delta$ wherein the indices of the segments are arranged in non-decreasing order, subject to the constraint that the difference between consecutive indices is not more than $\beta$.\looseness=-1 
\end{definition}
For example, with a hop-length $h = 5$, $\beta = 3$, and $|\Delta| = 7$, a valid segment sequence is $\{A_{1}, A_{2}, A_{3}, A_{5}, A_{6}, A_{7}\}$, whereas $\{A_{1}, A_{2}, A_{3}, A_{3}, A_{7}, A_{7}\}$ is not valid.
We define the collection of all valid sequences of subscripts of segment IDs as $\mathcal{P}_{\beta}$. Observe that there exists a one-to-one mapping between the set $\mathcal{G}_{\beta}$ and the set $\mathcal{P}_{\beta}$, whereby one represents the mapping of node IDs to segment IDs, while the other denotes the ordered sequence of subscripts of segment IDs. Consequently, under the communication constraint $\beta$, the equality $|\mathcal{G}_{\beta}| = |\mathcal{P}_{\beta}|$ holds. The next lemma delineates the result regarding the cardinality of $\mathcal{P}_{\beta}$ as a function of $|\Delta|,~\beta$, and $h$.\looseness=-1  

\begin{lemma} \label{lemma:path}
We can enumerate the elements of $\mathcal{P_{\beta}}$ for given $h$, $|\Delta|=r$ and $\beta$, by constructing a $\beta$-ary tree.\looseness=-1
\end{lemma}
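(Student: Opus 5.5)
We enumerate $\mathcal{P}_{\beta}$ by a rooted tree whose root-to-leaf paths correspond bijectively to the valid sequences of Definition~\ref{def:valid_sequecnes}. A valid sequence for an $h$-hop path has $h+1$ entries: the RSU, which is fixed at $A_1$, followed by the $h$ forwarding nodes ordered from the RSU back to the source. We read it in this order so that the indices are non-decreasing.

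\textbf{Construction.} The root carries subscript $1$. A node at depth $d$ carrying subscript $i$ gets one child for each admissible next subscript $j$. In the natural reading of the constraints, $j$ ranges over $\{i+1,\ldots,\min(r,i+\beta)\}$, which gives at most $\beta$ children. If equal consecutive indices are allowed, the range becomes $\{i,\ldots,\min(r,i+\beta)\}$; this gives at most $\beta+1$ children, and the tree is relabelled accordingly. In both cases a node has exactly $\beta$ (respectively $\beta+1$) children unless the cap at $r$ truncates the range. We grow the tree to depth $h$ and take the leaves at depth exactly $h$.

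\textbf{Step 1 (soundness).} Every root-to-leaf path of length $h$ yields a sequence that starts at $A_1$, is non-decreasing, and has consecutive differences at most $\beta$. This holds by induction on depth, since each child's subscript lies in the allowed window of its parent. The sequence also stays inside $\Delta$, because the cap at $r$ keeps every subscript at most $r$.

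\textbf{Step 2 (completeness).} Conversely, take any valid sequence $(1=i_0,i_1,\ldots,i_h)$. Each step $i_{d}\to i_{d+1}$ lies in the admissible window of $i_d$, so there is a child of the depth-$d$ node labelled $i_{d+1}$. Following these children from the root traces a path to a depth-$h$ leaf.

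\textbf{Step 3 (uniqueness).} Siblings carry distinct subscripts. Hence different root-to-leaf paths give different sequences, and each sequence is traced by exactly one path.

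From these three steps, $|\mathcal{P}_{\beta}|$ equals the number of depth-$h$ leaves, and by the bijection stated just before the lemma it also equals $|\mathcal{G}_{\beta}|$. For completeness we would add a counting recursion. Let $T_d(i)$ be the number of depth-$d$ descendants reachable from a node with subscript $i$. Then $T_0(i)=1$ and $T_d(i)=\sum_{j\in W(i)}T_{d-1}(j)$, where $W(i)$ is the admissible window of $i$, and $|\mathcal{P}_{\beta}|=T_h(1)$. If $r\ge 1+h\beta$, the cap never binds and the count is $\beta^h$, a full $\beta$-ary tree.

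\textbf{Main obstacle.} The difficulty is definitional rather than technical. The paper's example admits strictly increasing steps, but the routing constraint (no back-hops) also seems to permit two nodes in the same segment, so the window must be fixed consistently. We would resolve this by following Definition~\ref{def:valid_sequecnes} literally. We would also explain that "$\beta$-ary" refers to the maximal branching, since the boundary effect at $A_r$ prunes subtrees near the far end of the coverage area.
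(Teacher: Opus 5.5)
Your construction is the paper's own: $\mathcal{P}_{\beta}$ is read off as the set of root-to-leaf paths of a depth-$h$ tree rooted at subscript $1$ and pruned by $r$. The paper prunes only at the leaves (``leaf node values do not exceed $r$''). That is equivalent to your level-by-level cap, because the paths are non-decreasing. Your soundness, completeness and uniqueness steps, together with the recursion for $T_d$, spell out what the paper states in one line and defers to Corollary~\ref{lemma:pathbeta}.

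The one thing you must fix is the window, which you leave open and partly get wrong. The strictly increasing window $\{i+1,\ldots,\min(r,i+\beta)\}$ is not a ``natural reading''; it contradicts Definition~\ref{def:valid_sequecnes}, which says ``non-decreasing''. The paper also uses repeated subscripts throughout:
\begin{itemize}
\item the first row $11\ldots1$ of Table~\ref{T:falsepositiveCalculation};
\item the repeated $A_5$ in the worked example;
\item the condition $S(y)\le S(z)$ in Definition~\ref{def:bi-segment tuple}.
\end{itemize}
With the strict window your Step~2 fails outright: the all-ones sequence, or any sequence with a repeat, has no corresponding root-to-leaf path. So commit to $W(i)=\{i,\ldots,\min(r,i+\beta)\}$.

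Once you do, your closing remarks no longer hold. The branching is up to $\beta+1$; for $\beta=1$ the tree is binary, not unary. Hence ``$\beta$-ary'' in the lemma is loose wording for $(\beta+1)$-ary, not a statement about maximal branching. Likewise, in the uncapped regime $r\ge 1+h\beta$ the count is $(\beta+1)^h$, not $\beta^h$.
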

\begin{proof}
The set $\mathcal{P}_\beta$ of valid $h$-hop segment sequences is obtained by enumerating all root-to-leaf paths of a $\beta$-ary tree that has depth $h$ and has the constraint that leaf node values do not exceed $r$. This completes the construction.
\end{proof}
 
\begin{corollary} \label{lemma:pathbeta}
Using $\mathcal{P}_{\beta}$ enumerated from Lemma \ref{lemma:path}, we can easily determine $|\mathcal{P_{\beta}}|$ by counting arguments.\looseness=-1 
\end{corollary}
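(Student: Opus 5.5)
The plan is to turn the tree enumeration of Lemma~\ref{lemma:path} into a counting recursion. A valid sequence in $\mathcal{P}_\beta$ is determined by its starting index $p_0$ and its increments $d_1,\ldots,d_h$. Here $p_t = p_{t-1}+d_t$, each $d_t\in\{0,1,\ldots,\beta\}$ by Definition~\ref{def:valid_sequecnes}, and the constraint $p_h\le r$ is imposed at the leaves. Each internal vertex of the tree in Lemma~\ref{lemma:path} branches on the admissible increments. Counting root-to-leaf paths is then the same as counting admissible increment vectors.

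\textbf{Step 1: fix the anchoring.} I would first pin down the first segment from the convention that the RSU sits in $A_1$. Reading the sequence from the RSU outward, this fixes $p_0=1$, so the root of the tree is labelled $1$.

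\textbf{Step 2: set up a dynamic program.} Let $f_t(j)$ be the number of root-to-depth-$t$ paths that end at label $j$. Then $f_0(j)=\mathbb{1}[j=1]$, and
\[
f_t(j)=\sum_{d=0}^{\beta} f_{t-1}(j-d), \qquad 1\le j\le r,
\]
with $f_{t-1}(i)=0$ for $i<1$. This gives
\[
|\mathcal{P}_\beta|=\sum_{j=1}^{r} f_h(j),
\]
which is exactly the leaf count of the truncated tree.

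\textbf{Step 3: closed form by counting.} Since $p_h = 1+\sum_t d_t$ and the truncation only requires $\sum_t d_t\le r-1$, the count equals the number of integer solutions of
\[
d_1+\cdots+d_h+s=r-1,\qquad 0\le d_t\le\beta,\ s\ge0,
\]
where $s$ is a slack variable. Inclusion--exclusion over the upper bounds $d_t\le\beta$ then gives
\[
|\mathcal{P}_\beta|=\sum_{q\ge0}(-1)^q\binom{h}{q}\binom{r-1-q(\beta+1)+h}{h},
\]
where the sum runs over those $q$ with $r-1-q(\beta+1)\ge0$.

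\textbf{Main obstacle.} The hard part will be reconciling the exact index conventions in the paper. The tree is described as $\beta$-ary, yet the increments $0,\ldots,\beta$ give $\beta+1$ children per vertex. The paper's example also suggests a sequence of length $h+1$ for an $h$-hop path, and it is unclear whether "same segment" moves are allowed. I would first settle these by checking against the example ($h=5$, $\beta=3$, $r=7$), and adjust the increment range to $\{1,\ldots,\beta\}$ if needed. Such an adjustment changes the inclusion--exclusion only by a shift of the variables.
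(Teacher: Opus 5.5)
Your argument is correct, and it is more explicit than the paper. The paper gives no separate proof of this corollary. $|\mathcal{P}_\beta|$ is simply the number of root-to-leaf paths of the depth-$h$ tree in Lemma~\ref{lemma:path}, read off after enumerating those paths one by one. The paper needs that enumeration anyway, because $C_{X,x,1}$ is computed separately for every sequence $\mathbf{p}_x$.

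You instead collapse the tree into the recursion $f_t(j)=\sum_{d=0}^{\beta}f_{t-1}(j-d)$. This costs $\mathcal{O}(hr\beta)$ rather than visiting up to $(\beta+1)^h$ leaves. You then reduce it further to a stars-and-bars/inclusion--exclusion closed form, which checks out on small cases (e.g.\ $h=2$, $\beta=1$, $r=3$ gives $6-2=4$). Your route gives the normaliser $1/|\mathcal{G}_\beta|$ explicitly, without enumeration. The paper's route gives the sequences themselves, which the later $C_{X,x,j}$ computations require.

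Your ``main obstacle'' is already settled by the paper, in favour of your first choice:
\begin{itemize}
\item Definition~\ref{def:valid_sequecnes} says the indices are non-decreasing.
\item Appendix~\ref{apx:sigma_B} lets $p_{x_{i+1}}$ range over $[p_{x_i},\min(p_{x_i}+\beta,r)]$.
\item The worked example repeats $A_5$ three times.
\item Table~\ref{T:falsepositiveCalculation} begins with the all-ones sequence.
\item The appendix fixes $p_{x_1}=1$ for the RSU.
\end{itemize}
So increments lie in $\{0,\ldots,\beta\}$, and sequences have $h+1$ entries starting at $1$. ``$\beta$-ary'' in Lemma~\ref{lemma:path} is loose wording for up to $\beta+1$ children per vertex. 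You should not switch to $\{1,\ldots,\beta\}$.

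Also, the $h=5$, $\beta=3$, $r=7$ example alone could not have decided this, since its valid sequence contains no repeated segment.
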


Before calculating the expression for $\Pr(E_{X}^{(2)}|\alpha,g)$, we introduce the necessary definitions below.\looseness=-1 
\begin{definition}\label{def:bi-segment tuple}
A bi-segment tuple refers to a three-element tuple $(w,y,z)$, where $w \in [N] \backslash \{N-1\} \cup \{s\}$ is the node ID; $y,z \in \Delta $ are the segment IDs, where $S(y) \leq S(z),~ |S(z)-S(y)|\leq \beta$. In BSE, a bi-segment tuple represents the relationship between a node ID and the corresponding segment IDs embedded in $\textbf{BF}_2$.\looseness=-1 
\end{definition}
\begin{definition}\label{def:tri-segment tuple}
A tri-segment tuple refers to a four-element tuple $(w,x,y,z)$, where $w \in [N] \backslash \{N-1\} \cup \{s\}$ is the node ID; $x,y,z \in \Delta $ are the segment IDs, where $S(x) \leq S(y) \leq S(z),~ |S(z)-S(y)|\leq \beta$ and $|S(y)-S(x)|\leq \beta$. A tri-segment tuple represents the relationship between a node ID and the corresponding segment IDs embedded in $\textbf{BF}_2$.\looseness=-1 
\end{definition}
The probability of false positive events further depend on the Bloom filter size and the occurrence of collision events. 

\begin{definition}\label{def:collision} A collision event in a Bloom filter is defined as an event where we recover an object from the Bloom filter that is not initially embedded onto the Bloom filter.\looseness=-1
\end{definition}
The probability of a collision event in a Bloom filter of size $m$ bits, with $\alpha$ bits lit, is expressed as $(\frac{\alpha}{m})^{k}$, where $k$ is the number of Hash functions used to embed each item, assuming that Hash functions are independent and uniformly select bit positions in $m$.  For $\textbf{BF}_{2}$, given $m_2$, and $k_2$, the probability of a collision event is $p_1 = \left(\frac{\alpha}{m_2}\right)^{k_2}$, and the probability of non-collision events is $p_2 = 1 - p_{1}$. The collision events are essential in enumerating the false positive events in $\textbf{BF}_{2}$. For example, a packet traverses a network where $h = 4$, $\beta = 2$, and $|\Delta| = 7$ and the path followed is $\{(I_6,A_{6}) \rightarrow (I_4,A_{4}) \rightarrow (I_3,A_{3}) \rightarrow (I_2,A_{2}) \rightarrow (I_1,A_{1}) \}$. Using BSE, only two bi-segment tuples: $(I_4, A_{4}, A_{6})$ and $(I_2,A_{2}, A_{3})$ are embedded in $\textbf{BF}_2$. During the recovery of a path at the RSU, a false positive event may occur if the bi-segment tuple $(I_4, A_5, A_6)$, which was not embedded, is recovered from $\textbf{BF}_2$. In this scenario, the RSU has an alternative path $\{A_{1}, A_{2}, A_{3}, A_{5}, A_{6}\}$ along with the embedded path, which satisfies the constraints in Section \ref{sec:Routing Constraints}. Under similar conditions, using TSE, only one tri-segment tuple $(I_4,A_{3}, A_{4}, A_{6})$ is embedded in $\textbf{BF}_2$. During the recovery of a path at the RSU, a false positive event may occur if there is a recovery of the tri-segment tuple $(I_4,A_3, A_5, A_6)$ from $\textbf{BF}_2$, which was not embedded during TSE.\looseness=-1

Based on Section \ref{sec:Bloom Filters}, during the recovery process from $\textbf{BF}_2$, the RSU verifies all the possible bi-segment tuples in BSE recovery, while in TSE recovery, it verifies all the possible tri-segment tuples. Next, we will enumerate the number of bi-segment and tri-segment tuples the RSU verifies.\looseness=-1 

\begin{proposition} \label{prop:sigma_B}
In BSE, the number of bi-segment tuples, denoted as $\sigma_{B}$, is given by:\looseness=-1 
\begin{IEEEeqnarray}{rcl}\label{eq:sigma_B}
   \sigma_{B} = \floor{\frac{h}{2}} \sum \limits_{p_{x_i}=1}^{r}\bigl(\min(r,p_{x_i}+\beta) -p_{x_i} +1\bigl).
\end{IEEEeqnarray}
\end{proposition}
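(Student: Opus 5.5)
The plan is a direct counting argument that follows the BSE recovery procedure of Section \ref{sec:BSE_recovery}. Given a single recovered path $\mathbf{c}_i$, the RSU extracts the embedding nodes $\mathcal{V}'_i = \{I_{i_1}, I_{i_3}, \ldots\}$. The first step is to show that $|\mathcal{V}'_i| = h^*_B = \floor{h/2}$, which is \eqref{eqn:h^*_B}. This holds because, as described in Section \ref{sec:BSE_Embedding}, the source does not embed and every alternate node starting from $I_{j_1}$ does. So among the $h$ forwarding nodes $I_s, I_{j_1}, \ldots, I_{j_{h-1}}$, exactly the odd-indexed ones $I_{j_1}, I_{j_3}, \ldots$ with index at most $h-1$ embed, and there are $\floor{h/2}$ of them.

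The second step counts, for one fixed embedding node $w \in \mathcal{V}'_i$, how many bi-segment tuples $(w,y,z)$ in the sense of Definition \ref{def:bi-segment tuple} the RSU must test. Let $p_{x_i} = S(y)$ be the segment subscript of $w$ itself; since $g(w)$ is unknown to the RSU, $p_{x_i}$ ranges over all of $[r]$. The forwarding node $z$ sent the packet to $w$, so by the routing constraint of Section \ref{sec:Routing Constraints} it satisfies $S(z) \ge S(y)$. By the communication constraint it also satisfies $S(z) \le S(y)+\beta$, and it must lie inside the coverage area, so $S(z) \le r$. Hence $S(z)$ takes exactly the integer values in $[p_{x_i}, \min(r, p_{x_i}+\beta)]$, which is $\min(r,p_{x_i}+\beta) - p_{x_i} + 1$ choices. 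This interval always contains $p_{x_i}$, so the count is at least $1$, and we do not need to worry about an empty range. Summing over $p_{x_i}$ then gives the per-node count $\sum_{p_{x_i}=1}^{r}(\min(r,p_{x_i}+\beta)-p_{x_i}+1)$.

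The final step observes that this per-node count depends neither on the identity of $w$ nor on its position in the path. The candidate set for each embedding node is built independently, because the RSU does not chain segments across embedding nodes at the verification stage; chaining and filtering happen only afterwards. The total $\sigma_B$ is therefore the product of $\floor{h/2}$ and the per-node sum, which is \eqref{eq:sigma_B}.

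The main obstacle is the bookkeeping in the first step. We have to make sure the parity convention and the boundary near the RSU give exactly $\floor{h/2}$ embedding nodes, and not $\lceil h/2\rceil$ or a count that includes the RSU. We also have to justify that tuples are counted per embedding node rather than jointly. I will resolve both points by appealing to the explicit embedding schedule in Section \ref{sec:BSE_Embedding} and the definition of $h^*_B$.
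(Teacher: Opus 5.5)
Your proposal is correct and follows essentially the same argument as the paper. Both count the $\lfloor h/2\rfloor$ embedding nodes, note that for each value $p_{x_i}\in[1,r]$ of the embedding node's subscript the forwarder's subscript ranges over $[p_{x_i},\min(p_{x_i}+\beta,r)]$ by the definition of a bi-segment tuple, then sum over $p_{x_i}$ and multiply; the only difference is cosmetic, as the paper indexes tuples from the RSU side ($i=h-2t$) while you index embedding nodes from the source side.
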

\begin{proof}
    The proof is provided in Appendix \ref{apx:sigma_B}.
\end{proof}
\vspace{-0.3cm}
\begin{proposition} \label{prop:sigma_T}
In TSE, the number of tri-segment tuples, denoted as $\sigma_{T}$, is given by:\looseness=-1 

\begin{small}
\begin{IEEEeqnarray}{rcl}\label{eq:sigma_T}
   \sigma_{T} =
   \left\{ 
\begin{array}{ll}

            \floor{\frac{h+1}{3}}\left( \sum \limits_{p_{x_i}=1}^{r} H(p_{x_i})\right), & \hspace{-2em} (h+1)~mod~ 3 \neq 0,\\

            \left(\floor{\frac{h+1}{3}}- 1 \right)\left( \sum \limits_{p_{x_i}=1}^{r} H(p_{x_i})\right) &   \hspace{-1em}\\
            + \sum\limits_{p_{x_2}=1}^{\min(1 + \beta, r)} 
            (\min(p_{x_2} + \beta, r) - p_{x_2} + 1), &  \hspace{-0.5em}(h+1)~ mod~3 = 0,
\end{array}
\right.
\end{IEEEeqnarray}
\end{small}

\noindent \noindent where \begin{small}$H(p_{x_i})=(p_{x_{i}}-\max(1, p_{x_{i}}-\beta)+1 )( \min(p_{x_{i}}+\beta, r)-p_{x_{i}}+1)$.\end{small}
\end{proposition}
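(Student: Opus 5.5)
The count decomposes as a product over embedding nodes. By the TSE recovery procedure of Section \ref{sec:TSE_recovery}, the RSU tests only the embedding nodes in $\mathcal{V}_i''=\{I_{i_1},I_{i_4},\ldots\}$. For each such node it enumerates every admissible tri-segment tuple $(w,x,y,z)$ in the sense of Definition~\ref{def:tri-segment tuple}. Here $y$ is the node's own segment, $x$ is the segment of the next-hop node (closer to the RSU), and $z$ is the segment of the node that forwarded the packet to it. So I would first count the tuples for a single generic embedding node, then multiply by the number of embedding nodes, and finally treat separately the boundary embedding node whose next hop is the RSU.

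\emph{Step 1 (single interior embedding node).} Fix the middle subscript $p_{x_i}=S(y)\in[r]$. The routing and communication constraints of Section \ref{sec:Routing Constraints} require $S(x)\le S(y)\le S(z)$, together with $S(y)-S(x)\le\beta$ and $S(z)-S(y)\le\beta$. Hence $S(x)$ ranges over $[\max(1,p_{x_i}-\beta),p_{x_i}]$, which gives $p_{x_i}-\max(1,p_{x_i}-\beta)+1$ choices. Independently, $S(z)$ ranges over $[p_{x_i},\min(p_{x_i}+\beta,r)]$, which gives $\min(p_{x_i}+\beta,r)-p_{x_i}+1$ choices. The product of these two counts is exactly $H(p_{x_i})$. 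Summing over $p_{x_i}=1,\ldots,r$ gives the per-node count $\sum_{p_{x_i}}H(p_{x_i})$. This step mirrors the BSE argument of Proposition~\ref{prop:sigma_B}, except that there is now a second, downstream factor.

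\emph{Step 2 (number of embedding nodes and the boundary case).} By \eqref{eqn:h^*_T}, the number of embedding nodes is $\floor{(h+1)/3}$, located at positions $1,4,7,\ldots$ along the path. The last embedding node sits at position $3\floor{(h+1)/3}-2$, and its next hop is at position $3\floor{(h+1)/3}-1$. When $(h+1)\bmod 3=0$, this next hop is position $h$, i.e., the RSU. The RSU lies in $A_1$ and its identity is known, so the downstream segment is pinned to $S(x)=1$ and the lower factor of $H$ collapses to $1$. The upstream constraint $S(y)-S(x)\le\beta$ then forces $p_{x_2}:=S(y)\in[1,\min(1+\beta,r)]$, and $S(z)$ still has $\min(p_{x_2}+\beta,r)-p_{x_2}+1$ choices. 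This yields the extra sum in the second case, added to $(\floor{(h+1)/3}-1)\sum H$ for the remaining interior embedding nodes. When $(h+1)\bmod 3\neq 0$, every embedding node's next hop is a genuine mobile node, so all $\floor{(h+1)/3}$ nodes contribute the full $\sum H$.

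\emph{Main obstacle.} The hard part is pinning down the indexing convention that makes the boundary case land exactly at $(h+1)\bmod 3=0$. This requires reconciling the hop count $h$ with the path labels $I_s,I_{j_1},\ldots,I_{j_{h-1}}$ and the RSU. I would verify the convention on small examples, such as $h=2$ and $h=5$, by listing the embedding positions explicitly. I would also need to confirm that the RSU's fixed location in $A_1$ is what the recovery procedure actually exploits, rather than it enumerating all of $\Delta$ for the RSU slot. Once that is settled, the rest is the additive counting above, since tuples belonging to different embedding nodes are distinct (they carry different node IDs $w$) and so are counted without overlap.
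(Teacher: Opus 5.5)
Your proposal is correct and follows essentially the same route as the paper's proof. The paper also counts $H(p_{x_i})$ admissible (downstream, upstream) segment choices per middle segment, sums over $p_{x_i}\in[r]$, and multiplies by the $\floor{(h+1)/3}$ embedding nodes; when $(h+1)\bmod 3=0$ it replaces one of these factors by the reduced sum, because the tuple adjacent to the RSU has its downstream segment pinned to $A_1$. Your explicit check (the last embedding node sits at position $3\floor{(h+1)/3}-2$, and its next hop is the RSU exactly when $3\mid h+1$) simply spells out what the paper asserts about the ``leftmost'' tuple.
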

\begin{proof}
    The proof is provided in Appendix \ref{apx:sigma_T}.
\end{proof}

Hereafter, we collectively term bi-segment and tri-segment tuples as segment tuples and use specific terminology only when distinction is needed. Based on the above propositions, during the recovery process, the RSU verifies $\sigma_{X}$ segment tuples, for $X \in \{B,T\}$, out of which only $h^*_X$ segment tuples were embedded in $\textbf{BF}_2$ during the embedding process. We define the remaining segment tuples, $\sigma_{X}-h^*_X$ in number, as false segment tuples with a formal definition provided below.\looseness=-1

\begin{definition} \label{def:setF}
For a given valid segment sequence represented in $\mathcal{P}_{\beta}$, the set of segment tuples that were not initially embedded in $\textbf{BF}_2$, however, is checked for membership in $\textbf{BF}_2$ at the RSU, is called the set of false segment tuples, denoted as $\mathcal{F}_X$ where $X \in \{B,T\}$ represents BSE and TSE respectively.\looseness=-1
\end{definition}

For BSE and TSE, the cardinality of $\mathcal{F}_{B}$ and $\mathcal{F}_{T}$, are respectively given by
\begin{IEEEeqnarray}{rcl}\label{eq:F_B}
|\mathcal{F}_{B}| = \sigma_{B}- \floor{\frac{h}{2}}, ~~|\mathcal{F}_T| = \sigma_T - \floor{\frac{h+1}{3}}.
\end{IEEEeqnarray}
We want to underscore that $\mathcal{F}_X$ will vary for each sequence in $\mathcal{P}_{\beta}$. Nonetheless, $|\mathcal{F}_X|$ remains constant independent of the sequence in $\mathcal{P}_{\beta}$.\looseness=-1
\begin{definition} \label{def:setR}
For a given valid segment sequence represented in $\mathcal{P}_{\beta}$, the set of segment tuples retrieved from $\textbf{BF}_2$, which are not part of the embedded segment sequence, is referred to as extra recoveries, represented by $\mathcal{R}_X$ where $X \in \{B,T\}$ represents BSE and TSE respectively. Note that $\mathcal{R}_X \subseteq \mathcal{F}_X$ trivially.\looseness=-1
\end{definition}
While $|\mathcal{F}_X|$ remains constant for every sequence in $\mathcal{P}_{\beta}$, the set $\mathcal{R}_X, ~X \in \{B,T\}$ constitutes a random subset of $\mathcal{F}_X$, contingent upon the collision event defined in Definition \ref{def:collision}. Next, an example is presented for $\mathcal{F}_X$ and $\mathcal{R}_X$.\looseness=-1
\begin{example}
Consider a path traversed by a packet $[(I_{1},A_{1})\rightarrow (I_{2},A_{2})\rightarrow (I_{3},A_4)\rightarrow (I_{4},A_5) \rightarrow (I_{5},A_5),\rightarrow (I_{6},A_5),\rightarrow (I_{7},A_7)]$, for network parameters $\beta =2$, $h=6$ and $r=7$. During packet traversal, the bi-segment tuples which are embedded in $\textbf{BF}_2$ are $\{(I_{2},A_{2}, A_{4} ),(I_{4},A_{5},A_{5}),(I_{6},A_5,A_7)\}$. For this path, $\mathcal{F}_B = \{I_2,I_4,I_6\}\times \{(A_i,A_j)~|~A_i,A_j \in \Delta,1\leq i \leq j \leq \min(i+\beta,r), |j-i|\leq \beta\} \backslash \{(I_{2},A_{2}, A_{4} ),(I_{4},A_{5},A_{5}),(I_{6},A_5,A_7)\}$. If $\mathcal{R}_B=\{(I_4,A_7, A_7)\}$ or $\mathcal{R}_B=\{(I_4,A_2,A_3)\}$, this will not result in a false positive event in BSE recovery, since substituting the extra recovery in the original path would violate the node communication constraint in Section \ref{sec:Routing Constraints}. However, if $\mathcal{R}_B = \{(I_4,A_4,A_5)\}$ or $\mathcal{R}_B = \{(I_4,A_4,A_4)\}$, then replacing this with $(I_4, A_5,A_5)$ will lead to a false positive event. For similar path traversal using TSE, we have $\mathcal{F}_T =\{I_3,I_6\} \times \{(A_x,A_y,A_z) ~|~ A_x,A_y,A_z \in \Delta; 1\leq x\leq y\leq z\leq \min(y+\beta,r); |y-x|,|z-y|,|z-x|\leq \beta\} \backslash \{(I_3,A_2,A_4,A_5), (I_6,A_5,A_5,A_7)\}$, where the recovery of $\mathcal{R}_T= \{(I_6,A_5,A_6,A_7)\}$ will cause a false positive event.\looseness=-1
\end{example}

\begin{table}[h!] 
\caption{ Illustrating the computation of false positive events as a function of extra recovery, where $X \in \{B,T\}$.}
\centering
\resizebox{8.8 cm}{!}{
 \begin{tabular}{|c | c |  c | c| c|} 
 \hline
 Valid Paths  & Extra 1 Recovery  & Extra 2 Recovery  & \ldots & Extra $|\mathcal{F}|$ Recovery\\
 ($\mathcal{P}_{\beta}$) & ($\mathcal{R}_X =1$) & ($\mathcal{R}_X =2$) & & ($\mathcal{R}_X =|\mathcal{F}_X|$)
 \\[0.5ex] 
 \hline\hline
 1111\ldots$1_{h}$ & $C_{1,1_X}$ & $C_{1,2_X}$ & \ldots & $C_{1,|\mathcal{F}_X|_{X}}$ \\ 
 1111\ldots$2_{h}$ & $C_{2,1_X}$ & $C_{2,2_X}$ & \ldots &  $C_{2,|\mathcal{F}_X|_{X}}$ \\
 . & . & . & \ldots & . \\
 . & . & . & \ldots & . \\
. & . & . & \ldots  & . \\
 123\ldots $r$ \ldots$r_{h}$ & $C_{|\mathcal{P}_{\beta}|,1_X}$ & $C_{|\mathcal{P}_{\beta}|,2_X}$ & \ldots &  $C_{|\mathcal{P}_{\beta}|,|\mathcal{F}_X|_{X}}$\\ [1ex] 
 \hline\hline
 Total & $C_{1_X} = \sum_{x=1}^{|\mathcal{P}_{\beta}|}C_{X,x,1} $ &  $C_{2_X} = \sum_{x=1}^{|\mathcal{P}_{\beta}|}C_{x,2_X} $ & \ldots &  $C_{(|\mathcal{F}_X|_{X})}= \sum_{x=1}^{|\mathcal{P}_{\beta}|}C_{x,|\mathcal{F}_X|_{X}}$\\
 \hline
 \end{tabular}}
 \label{T:falsepositiveCalculation}
\end{table}
For a given segment sequence within $\mathcal{P}_{\beta}$, a false positive event happens when any other sequence in $\mathcal{P}_{\beta}$ is retrieved from $\textbf{BF}_2$. As other sequences in $\mathcal{P}_{\beta}$ are obtained through extra recoveries, it is necessary to correlate these extra recoveries with the false positive events. The number of extra recoveries, which is $|\mathcal{R}_X|$, during the recovery process from $\textbf{BF}_{2}$ is a random variable. Consequently, we must evaluate all potential scenarios of extra recoveries such that $|\mathcal{R}_X| \leq |\mathcal{F}_X|$. Let $\mathcal{P}_{\beta} = \{\mathbf{p}_{1}, \mathbf{p}_{2}, \ldots, \mathbf{p}_{|\mathcal{P}_{\beta}|}\}$. Assume that $\mathbf{p}_{x}$, for some $x \in [1, |\mathcal{P}_{\beta}|]$, represents the valid sequence of subscripts of segment IDs and sequence $\mathbf{p}_{x}$ is embedded in $\textbf{BF}_2$ by $h^*_X$ nodes. Let $C_{X,x,j}$ denote the number of sequences in $\mathcal{P}_{\beta}\backslash \{\mathbf{p}_{x}\}$ that can be retrieved from $\textbf{BF}_2$, given that $j = |\mathcal{R}_X|$ extra recoveries are lit in $\textbf{BF}_2$. If $C_{X,x,j}$ can be calculated for all $j \in [1, |\mathcal{F}_X|]$, then the probability of false positive events for the sequence $\mathbf{p}_{x}$ may be determined. Let $C_{X,j} \triangleq \sum_{x = 1}^{|\mathcal{P}_{\beta}|}C_{X,x,j}$ denotes the sum of all $C_{X,x,j}$ values over every sequence in $\mathcal{P}_{\beta}$ for a specified number of extra recovery $|\mathcal{R}_X|=j$. Subsequently, we employ the following theorem to calculate the probability of a false positive event using $C_{X,j}$, for $j \in [1,|\mathcal{F}_X|]$.\looseness=-1
\begin{theorem} \label{Theorem-Main_eq}
The average probability of the false positive event from \eqref{eqn:objective5}, given $g$ and $h$, can be expressed as:\looseness=-1

\begin{small}\begin{IEEEeqnarray*}{rcl}
\label{overall_fp2_theorem}
\frac{1}{|\mathcal{G}_{\beta}|}\sum\limits_{g \in \mathcal{G}_{\beta}}\Pr(E_{X}^{(2)}|\alpha,g) = \frac{1}{|\mathcal{G}_{\beta}|}\sum_{{j}=1}^{|\mathcal{F}_X|} p_1^{j}\times p_2^{|\mathcal{F}_X|-{j}}\times \left(\sum_{x = 1}^{|\mathcal{G}_{\beta}|}C_{X,x,j}\right),
\end{IEEEeqnarray*}\end{small}

\noindent\noindent where $p_1 = (\alpha/m_2)^{k_2}$ and $p_2=1-p_1$, $|\mathcal{G}_{\beta}|$ is taken from Lemma \ref{lemma:path} and $|\mathcal{F}_X|$ is taken from \eqref{eq:F_B} for BSE or TSE.\looseness=-1
\end{theorem}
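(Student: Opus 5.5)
Fix a valid sequence $\mathbf{p}_x \in \mathcal{P}_\beta$ (equivalently, by the one-to-one correspondence noted before Lemma \ref{lemma:path}, a mapping $g \in \mathcal{G}_\beta$) and condition on $\alpha$ bits being lit in $\textbf{BF}_2$. The RSU tests the $\sigma_X$ segment tuples counted in Propositions \ref{prop:sigma_B} and \ref{prop:sigma_T}. The $h^*_X$ genuinely embedded tuples are always recovered, since Bloom filters have no false negatives. The remaining $|\mathcal{F}_X|$ tuples, given by \eqref{eq:F_B}, are the false segment tuples of Definition \ref{def:setF}.

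Each false tuple is hashed to $k_2$ independent, uniformly chosen positions in $[1,m_2]$. I would therefore model it as being recovered (a collision in the sense of Definition \ref{def:collision}) with probability $p_1=(\alpha/m_2)^{k_2}$, and not recovered with probability $p_2=1-p_1$. I would treat these $|\mathcal{F}_X|$ events as mutually independent given $\alpha$, which is the standard Bloom-filter idealisation. Under this model, for any fixed subset $\mathcal{R}\subseteq\mathcal{F}_X$ with $|\mathcal{R}|=j$, we get
\[
\Pr(\mathcal{R}_X=\mathcal{R}\mid \alpha,g)=p_1^{j}p_2^{|\mathcal{F}_X|-j}.
\]

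Next I would decompose the false positive event $E_X^{(2)}$ according to the realised set of extra recoveries $\mathcal{R}_X$. Given $\mathcal{R}_X=\mathcal{R}$, the set of recovered tuples is deterministic. Hence it is also deterministic which sequences of $\mathcal{P}_\beta\setminus\{\mathbf{p}_x\}$ become consistent with $\textbf{BF}_2$ and with the constraints of Section \ref{sec:Routing Constraints}. Summing over $\mathcal{R}$ and grouping by $j=|\mathcal{R}|$ gives
\[
\Pr(E_X^{(2)}\mid\alpha,g)=\sum_{j=1}^{|\mathcal{F}_X|}p_1^jp_2^{|\mathcal{F}_X|-j}\,C_{X,x,j}.
\]
Here $C_{X,x,j}$ plays the role assigned to it in the text: it counts the configurations of $j$ extra recoveries that produce an alternative valid sequence. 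The $j=0$ term vanishes because no extra recovery means only $\mathbf{p}_x$ is retrieved.

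Finally I would average over $g\in\mathcal{G}_\beta$ with the uniform prior from \eqref{eqn:objective2}. The key point is that $|\mathcal{F}_X|$, and hence the weights $p_1^jp_2^{|\mathcal{F}_X|-j}$, do not depend on $x$, as remarked after \eqref{eq:F_B}. This lets me interchange the two sums and pull the weights out, which gives the inner sum $\sum_x C_{X,x,j}=C_{X,j}$ and the claimed expression.

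The main obstacle is the interpretation of $C_{X,x,j}$ in the middle step. For the identity to be exact, $C_{X,x,j}$ must count the subsets $\mathcal{R}$ of size $j$ that trigger a false positive, i.e.\ for which at least one other valid sequence survives. It must not count the number of alternative sequences, since that would overcount when one subset enables several sequences. I would make this precise by defining $C_{X,x,j}$ as that subset count, which is consistent with its definition up to this reading, and by checking that Table \ref{T:falsepositiveCalculation} is built this way. If instead the paper intends the count of sequences, the identity becomes a union-bound upper bound, and I would state it as such. The independence assumption across false tuples given $\alpha$ would be stated explicitly as the modeling approximation underlying $p_1$.
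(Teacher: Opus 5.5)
Your proposal is correct and follows the same route as the paper's proof. Each configuration of $j$ extra recoveries from $\mathcal{F}_X$ has probability $p_1^{j}p_2^{|\mathcal{F}_X|-j}$; this is weighted by $C_{X,x,j}$ and then averaged uniformly over $\mathcal{G}_{\beta}$, using that $|\mathcal{F}_X|$ does not depend on $x$. Your partition by the realised set $\mathcal{R}_X$, with $C_{X,x,j}$ read as the number of $j$-subsets of $\mathcal{F}_X$ that trigger a false positive (the way the proof of Proposition~\ref{prop:value extension any beta} counts $\mathcal{R}_X$), is a sounder justification than the paper's claim that the alternative sequences are ``independent and mutually exclusive''. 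Summing over alternative sequences instead would in general give only a union bound.
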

\begin{proof} The proof is provided in Appendix \ref{apx:Theorem-Main_eq}.\end{proof}
Utilizing the above result, the expression for $\Pr(E_{X}^{(2)}|\alpha,g)$ can be computed provided $C_{X,j} \triangleq \sum_{x = 1}^{|\mathcal{P}_{\beta}|}C_{X,x,j}$ values are taken from Table \ref{T:falsepositiveCalculation}. Subsequently, we provide a technique to populate the values in Table \ref{T:falsepositiveCalculation}. To fill in the values in Table \ref{T:falsepositiveCalculation}, it is not feasible to get the closed-form expression for $C_{X,x,j}$. 
Next, we will introduce low-complexity algorithms to populate Table \ref{T:falsepositiveCalculation}, where we calculate the number of false but valid segment sequences due to  $\textbf{BF}_2$.\looseness=-1 

\vspace{-0.2cm}
\subsection{A Method to Compute $C_{X,x,j}$ for BSE and TSE}
\label{sec:Low_complexity_C_j_any_beta}
In this section, we propose a methodology to calculate the value of $C_{X,x,j}$, for $X \in\{B,T\}$, $x \in [1,|\mathcal{P}_{\beta}|]$ and $j\geq 1$. 
First, we propose low-complexity algorithms to calculate the exact values for $j=1$, i.e., $C_{X,x,1}$, for $x \in [1, |\mathcal{P}_{\beta}|]$, which captures the impact of an extra recovery $j=1$ from $\textbf{BF}_2$ on the segment sequence ambiguity at the RSU.
After that, using the exact values of $C_{X,x,1}$, we propose a closed-form expression to obtain the lower bound on the values of $C_{X,x,j}$, for $j\geq 2$ and $x \in [1, |\mathcal{P}_{\beta}|]$.\looseness=-1


\begin{proposition}\label{prop:value_of_C_Xx1}
Given that $j=1$ for $\mathbf{p}_x,~x\in[1,|\mathcal{P}_{\beta}|]$, one can compute $C_{X,x,1}$ using an algorithm through counting arguments.\looseness=-1
\end{proposition}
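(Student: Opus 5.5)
The plan is to fix a true valid sequence $\mathbf{p}_x=(p_0,p_1,\ldots,p_h)\in\mathcal{P}_\beta$, enumerated via the $\beta$-ary tree of Lemma \ref{lemma:path}. I would then show that $C_{X,x,1}$ equals the number of false segment tuples $f\in\mathcal{F}_X$ whose single extra recovery produces at least one new valid sequence, weighted by how many such sequences each $f$ produces. The key structural observation is locality. Each segment tuple in $\textbf{BF}_2$ is bound to one embedding node $v\in\mathcal{V}'_i$ (for BSE) or $v\in\mathcal{V}''_i$ (for TSE). It fixes only the segments of $v$ and of its one or two path neighbours.

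Under this observation, a single extra recovery $f=(w,\cdot)$ can change only the segments attached to node $w$. All other embedded tuples remain the only recovered tuples for their nodes. Therefore a candidate alternative sequence $\mathbf{p}'\neq\mathbf{p}_x$ is recovered if and only if two conditions hold. First, $\mathbf{p}'$ agrees with $\mathbf{p}_x$ outside the window covered by $f$ and equals the segments prescribed by $f$ inside it. Second, $\mathbf{p}'$ satisfies Definition \ref{def:valid_sequecnes}: it is non-decreasing, consecutive differences are at most $\beta$, and all indices lie in $[1,r]$.

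For BSE, the window is the pair (embedding node, its upstream forwarder). Nodes skipped by BSE (every other node) are constrained only through the tuple of the adjacent embedding node. In the example, replacing $(I_4,A_5,A_5)$ by $(I_4,A_4,A_5)$ is exactly such a local substitution. For TSE, the window is the triple (downstream neighbour, embedding node, upstream neighbour), and the same reasoning applies.

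The algorithm then loops over each embedding position $i$ and each candidate tuple $f$ at that position. For BSE there are $\sum_{p}(\min(r,p+\beta)-p+1)$ candidates per position, as counted in Proposition \ref{prop:sigma_B}. For TSE the candidates are the $H(p)$ terms of Proposition \ref{prop:sigma_T}, with the boundary case $(h+1)\bmod 3=0$ treated separately. For each candidate, the algorithm substitutes $f$ into $\mathbf{p}_x$ and checks validity only against the fixed segments immediately before and after the window. This is an $O(1)$ test, so the work per sequence is $O(h^*_X\,r\,\beta)$ for BSE and $O(h^*_X\,r\,\beta^2)$ for TSE, matching the complexities of Section \ref{sec:complexityAnalysis}.

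Summing the indicator of a successful check gives $C_{X,x,1}$. Summing $C_{X,x,1}$ over all leaves of the tree gives $C_{X,1}$, which fills the first column of Table \ref{T:falsepositiveCalculation}.

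The main obstacle is the set of nodes whose segment is not pinned by any tuple. These are the source $I_s$ and, in BSE and TSE, trailing nodes when $h$ is not a multiple of the embedding period. Such a node's segment is constrained only by the routing and communication constraints, so it could a priori generate alternatives even with zero extra recoveries. My approach is to adopt the paper's convention that the RSU resolves unpinned positions only through the constraints of Section \ref{sec:Routing Constraints}, and to count only alternatives that arise from $f$ itself. Concretely, an alternative is counted only when it coincides with $\mathbf{p}_x$ on every pinned position except those in $f$'s window. I would also argue that two different windows never overlap in a node. This holds by the spacing of embedding nodes, since windows use alternate nodes in BSE and every third node in TSE. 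It guarantees that no alternative sequence is double counted, so the count is exact rather than a bound.
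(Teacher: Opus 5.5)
Your proposal is correct and follows essentially the same route as the paper. Both fix $\mathbf{p}_x$, substitute a single candidate false tuple into one embedded window while keeping the rest of $\mathbf{p}_x$ unchanged, and count the substitutions that stay valid against the adjacent fixed segments (the paper's $4$-element window for BSE and $5$-element window for TSE). One small correction: the source $I_s$ is not unpinned, because $I_{j_1}$'s BSID/TSID contains $g(I_s)$, as your own window definition already implies; only the node adjacent to the RSU can be unpinned (BSE with $h$ odd, TSE with $h\equiv 1 \pmod{3}$), and this does not affect your algorithm.
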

\begin{proof} The proof is provided in Appendix \ref{apx:prop:value CXx1}.
\end{proof}

Further, we will explain the method of computing a lower bound on the values of $C_{X,x,j}$ for $j \geq 2$.\looseness=-1
\begin{proposition}\label{prop:value extension any beta}
Given there are $|\mathcal{R}_X|=j, j\geq 2,~X\in\{B,T\}$, extra recoveries for a given valid segment sequence $\mathbf{p}_x,~x\in[1,|\mathcal{P}_{\beta}|]$, the value of  $C_{X,x,j}$, for $j\geq 2$, is given by:\looseness=-1 
\begin{IEEEeqnarray}{rCl}
\label{eq:cxj_beta_expression}
C_{X,x,j} \geq \sum_{l=1}^{C_{X,x,1}}\binom{|\mathcal{F}_X| - l}{j-1}.
\end{IEEEeqnarray}
where $|\mathcal{F}_X|$ is obtained from Definition \ref{def:setF},    $C_{X,x,1}$ is obtained from Proposition \ref{prop:value_of_C_Xx1}.\looseness=-1 
\end{proposition}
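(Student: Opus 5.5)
The plan rests on how $C_{X,x,j}$ enters Theorem \ref{Theorem-Main_eq}. There each $j$-element set of extra recoveries occurs with probability $p_1^{j}p_2^{|\mathcal{F}_X|-j}$. So $C_{X,x,j}$ has to be read as the number of $j$-element subsets $\mathcal{R}_X\subseteq\mathcal{F}_X$ that produce a false positive event for the embedded sequence $\mathbf{p}_x$, i.e., that make at least one sequence of $\mathcal{P}_{\beta}\backslash\{\mathbf{p}_x\}$ recoverable. Likewise, $C_{X,x,1}$ (Proposition \ref{prop:value_of_C_Xx1}) is the number of single false segment tuples $f\in\mathcal{F}_X$ whose recovery alone already causes a false positive event. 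I will call these \emph{bad} tuples and denote them $f_1,\ldots,f_{c}$, with $c=C_{X,x,1}$.

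The first step is a monotonicity observation. Recovery at the RSU only adds candidates: a segment sequence is retrieved if and only if all of its segment tuples test positive in $\textbf{BF}_2$. Hence if $\mathcal{R}\subseteq\mathcal{R}'$, every alternative valid sequence recoverable under $\mathcal{R}$ is also recoverable under $\mathcal{R}'$, because the embedded tuples are always recovered (no false negatives). Consequently, any $j$-subset of $\mathcal{F}_X$ that contains at least one bad tuple $f_l$ causes a false positive event. The reason is that substituting $f_l$ into $\mathbf{p}_x$ already yields a valid sequence, by the definition of $C_{X,x,1}$.

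The second step counts the $j$-subsets that contain at least one bad tuple, and does so without double counting. Order $\mathcal{F}_X$ so that $f_1,\ldots,f_c$ come first, followed by the remaining $|\mathcal{F}_X|-c$ tuples. Partition the qualifying subsets according to the smallest index $l$ of a bad tuple they contain. A subset whose minimal bad element is $f_l$ contains $f_l$ and none of $f_1,\ldots,f_{l-1}$. Its remaining $j-1$ elements are chosen freely from the $|\mathcal{F}_X|-l$ tuples after $f_l$, which gives $\binom{|\mathcal{F}_X|-l}{j-1}$ subsets. Summing over $l=1,\ldots,C_{X,x,1}$ gives exactly the right-hand side of \eqref{eq:cxj_beta_expression}.

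The final step, and the only delicate one, is to justify the inequality rather than equality. Some $j$-subsets with $j\geq 2$ contain no bad tuple individually yet still cause a false positive jointly. For example, in TSE two extra tuples at consecutive embedding nodes may together realise a valid alternative sequence, while each one alone violates the constraints of Section \ref{sec:Routing Constraints}. Such subsets are counted in $C_{X,x,j}$ but not in the sum, so the sum is a lower bound. Care is needed only to confirm that the partition by minimal bad index is disjoint and exhaustive over the subsets containing a bad tuple, which the ordering argument guarantees.
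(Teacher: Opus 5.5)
Your proposal is correct and follows the same idea as the paper's proof. Both count the $j$-element sets $\mathcal{R}_X\subseteq\mathcal{F}_X$ that contain at least one of the $C_{X,x,1}$ tuples which already cause a false positive on their own, fixing these tuples one at a time and drawing the remaining $j-1$ from the $|\mathcal{F}_X|-l$ tuples not yet excluded. Your ordering by minimal bad index and your monotonicity observation make explicit what the paper's brief sequential-fixing description leaves implicit: the count has no double counting, and every counted set does produce a false positive event.
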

\begin{proof} The proof is provided in Appendix \ref{apx:prop:value extension any beta}.
\end{proof}
Using the above result, we are now ready to have a lower bound on the expression for $\{C_{X,j}, j > 1\}$, given by\looseness=-1
\begin{IEEEeqnarray}{rCl}
\label{eq:cj_beta_expression1}
C_{X,j} \geq \sum_{x=1}^{|\mathcal{P}_{\beta}|} \sum_{l=1}^{C_{X,x,1}}\binom{|\mathcal{F}_X| - l}{j-1}.
\end{IEEEeqnarray}
From the above lower bound, it is clear that as long as we have $C_{X,x,j}$, we can compute $\{C_{X,j}\}$ for any $\beta$. The complexity of calculating \eqref{eq:cj_beta_expression1} is $\mathcal{O}(|\mathcal{P}_{\beta}|C_{X,x,1})$. For BSE, by plugging the values of $C_{B,x,1}$ from Proposition \ref{prop:value_of_C_Xx1} in \eqref{eq:cxj_beta_expression}, we obtain $C_{B,x,j}$, which is used in \eqref{overall_fp2} to obtain a lower bound on the expression for $\underset{g\in \mathcal{G}_{\beta}}{\mathbb{E}} [\Pr(E_{B}^{(2)}|a,g)]$. Then the obtained lower bound on $\underset{g\in \mathcal{G}_{\beta}}{\mathbb{E}} [\Pr(E_{B}^{(2)}|a,g)]$ is plugged in \eqref{eqn:objective5} to obtain a lower bound on the overall average probability of false-positive event for BSE. Similarly, for TSE, by plugging the value of $C_{T,x,1}$ from Proposition \ref{prop:value_of_C_Xx1} in \eqref{eq:cxj_beta_expression}, we obtain a lower bound on $C_{T,x,j}$, which is used in \eqref{overall_fp2} to obtain a lower bound on the expression for $\underset{g\in \mathcal{G}_{\beta}}{\mathbb{E}} [\Pr(E_{T}^{(2)}|a,g)]$. Then the obtained lower bound on $\underset{g\in \mathcal{G}_{\beta}}{\mathbb{E}} [\Pr(E_{T}^{(2)}|a,g)]$ is plugged in \eqref{eqn:objective5} to obtain a lower bound on the overall average probability of false-positive event for TSE.\looseness=-1

In this section, we presented a method for calculating a lower bound on $C_{X,j}$ for any $\beta$. Next, we use this lower bound to derive the solution for Problem \ref{problem2}.\looseness=-1 
\vspace{-0.2cm}
\subsection {On Solving Problem \ref{problem2}}
\label{sec:solving_problem_2}
For BSE, by plugging the values of $C_{B,x,1}$ from Proposition \ref{prop:value_of_C_Xx1} into \eqref{eq:cxj_beta_expression}, we obtain a lower bound on $C_{B,x,j}$. Now, using the value of $C_{B,x,1}$ from Proposition \ref{prop:value_of_C_Xx1} and the obtained value of $C_{B,x,j}$ along with the value of $|\mathcal{F}_B|$ from \eqref{eq:F_B} into \eqref{overall_fp2}, we obtain a lower bound on the expression for $\underset{g\in \mathcal{G}_{\beta}}{\mathbb{E}} [\Pr(E_{B}^{(2)}|\alpha,g)]$. Finally, we use this lower bound to obtain a lower bound on the average probability of false positive events. To obtain a near-optimal solution to Problem \ref{problem2} for BSE, we employ the computed lower bound on $\underset{g\in \mathcal{G}_{\beta}}{\mathbb{E}} [\Pr(E_{B}^{(2)})]$ as the objective function to determine the values around $k_2^{+}$ using a gradient descent algorithm. Likewise for TSE, upon determining a lower bound on $\underset{g\in \mathcal{G}_{\beta}}{\mathbb{E}} [\Pr(E_{T}^{(2)})]$; we employ a gradient descent algorithm to determine the near-optimal value of \eqref{eqn:objective5} by varying the parameter $k_2$. In the subsequent part, we will offer results to validate the efficacy of the solution mentioned above.\looseness=-1

\section{Experimental Results } \label{sec:results}
In this section, we will present results on the near-optimality of analytical expressions of BSE and TSE discussed in Section \ref{sec:optimization_of_BF2}.  Subsequently, we will also present the performance of our methods with respect to the baseline LNE \cite{ManishTDSC, manish_WCNC}, followed by some results on practical aspects of BSE and TSE.\looseness=-1

\subsection{False Positive Rate Analysis of BSE and TSE}
\bl{To validate the accuracy of the proposed analytical expressions, we compare the expression in Section \ref{sec:Low_complexity_C_j_any_beta} with the probability of false positive events produced by simulation results for various values of $h$, $\beta$, $|\Delta|$, $m_1$, and $m_2$.\footnote{\bl{Since existing wireless standards do not reserve dedicated bits for provenance, information related to spatial provenance must be accommodated within the packet payload. Therefore, for our experiments, $m_1$ and $m_2$ are chosen arbitrarily to drive down the probability of false-positive events to a very small value. In particular, the choice of $m_1$ follows prior work in \cite{amogh,suraj}, which studied it with respect to network density and payload budget.\looseness=-1}}} Henceforth, throughout the section, false positive rates refer to the probability of false positive events of spatial provenance recovery. To generate the results, we utilize a network with parameters $N=11, ~h = 10$, $|\Delta| = 3$, and $\beta=1$. The dimensions of $\textbf{BF}_1$ are selected to provide minimal false positive rates of the order of $10^{-4}$, thereby fulfilling the criterion $\Pr(E_{X}^{(1,2)}) \approx \Pr(E_{X}^{(2)})$, as discussed in Section \ref{sec:optimization}. Henceforth, we represent the probability of a false positive event $\Pr(E_{X}^{(2)})$ as $P_{X}, ~X \in \{B,T\}$, where $B,~T$ represent BSE and TSE, respectively. For generating simulation results, $\textbf{BF}_2$ of size $m_2=30$ bits is utilized. \bl{For each value of $k_2$, $10^9$ packets are transmitted across the multihop network to obtain $P_X$ values. Note that the number of packets in the simulation is kept large to validate the analytical expressions rather than to emulate realistic network traffic.} The associated $P_{X}$ values are depicted as a function of $k_2$ in Fig. \ref{fig:LSE_LDE_h_10_delta_3_beta_1}. Fig. \ref{fig:LSE_LDE_h_10_delta_3_beta_1} illustrates that the simulation curve and the analytical lower bound attain their minima around similar values and at closely located $k_2$ values. Comparable tests are performed for TSE, demonstrating that the analytical expressions for TSE also provide good approximations of the outcomes produced by simulation results. Analogous experiments are performed by varying the network parameters with the outcomes shown in Fig. \ref{fig:LSE_LDE_h_8_delta_3_beta_3}, Fig. \ref{fig:LSE_LDE_h_8_delta_7_beta_3}, and Fig. \ref{fig:LSE_LDE_h_8_delta_11_beta_3}. The analysis reveals that TSE outperforms BSE for identical network parameters with respect to $P_{X}$. The graphs presented in Fig. \ref{fig:LSE_LDE_initial} also demonstrate that the minimum $P_{T}$ occurs at a higher $k_2$ compared to the minimum $P_{B}$, which occurs at a lower $k_2$. This observation suggests that $P_{X}$ benefit is accompanied by a slight increase in computational complexity, which is directly associated with the optimal number of Hash functions $k_2$ required at each embedding node.\looseness=-1
\begin{figure}[ht!]
\vspace{-0.3cm}
\centering
\begin{subfigure}{0.49\columnwidth} 
\caption{}
\label{fig:LSE_LDE_h_10_delta_3_beta_1}
    \includegraphics[trim={0.3cm 0 1.4cm 1cm},clip,scale=0.3, width =\textwidth]{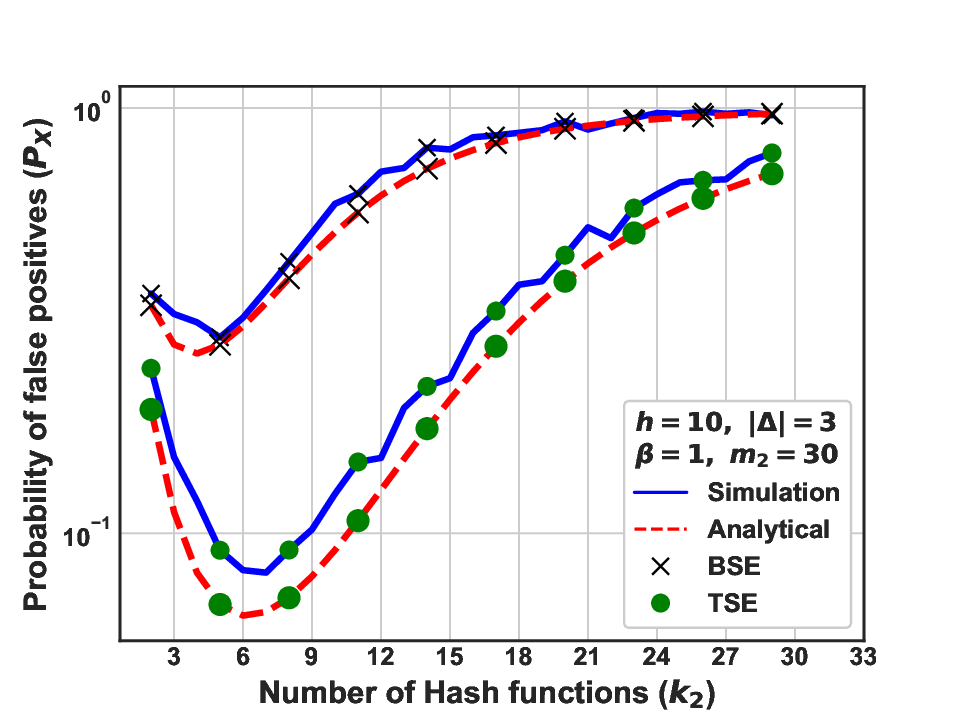}   
\end{subfigure}
\begin{subfigure}{0.49\columnwidth}
\caption{}
\label{fig:LSE_LDE_h_8_delta_3_beta_3}
    \includegraphics[trim={0.3cm 0 1.4cm 1cm},clip,scale=0.3, width =\textwidth]{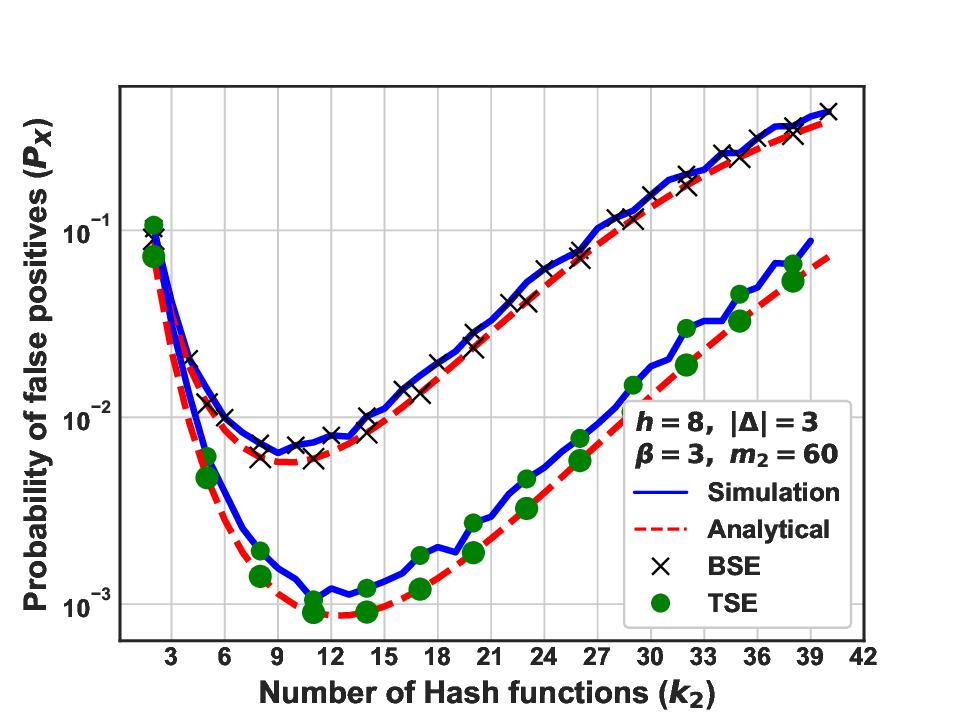}
    
\end{subfigure} 
\begin{subfigure}{0.49\columnwidth}
\caption{}
\label{fig:LSE_LDE_h_8_delta_7_beta_3}
    \includegraphics[trim={0.3cm 0 1.4cm 1cm},clip,scale=0.3, width =\textwidth]{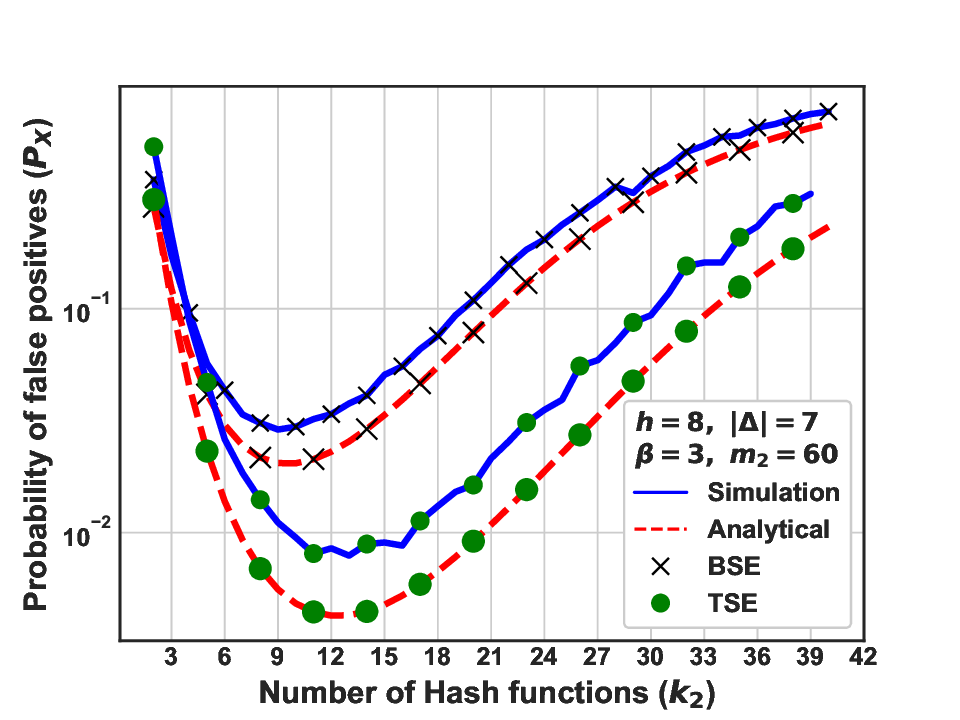}
     
\end{subfigure}  
\begin{subfigure}{0.49\columnwidth}

\caption{}
\label{fig:LSE_LDE_h_8_delta_11_beta_3}
    \includegraphics[trim={0.3cm 0 1.4cm 1cm},clip,scale=0.3, width =\textwidth]{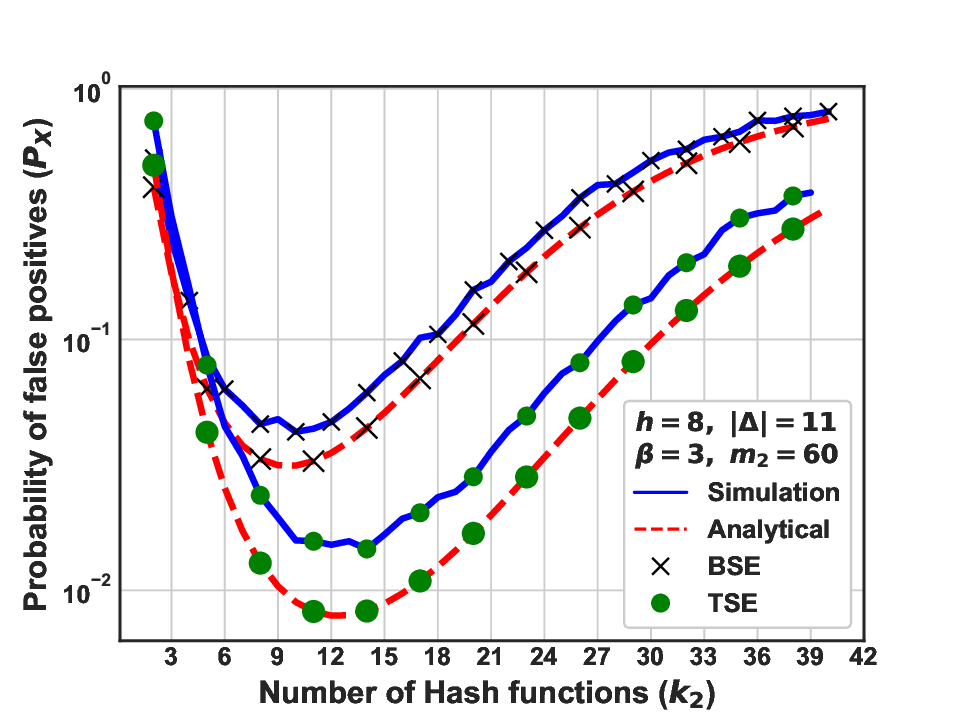}
    
\end{subfigure} 
\caption{Comparison of the analytical bound with the simulation results for BSE and TSE for the network setting of (a) $N=11,~\beta =1,~m_2=30, ~|\Delta| =3 $, (b) $N=9,~\beta =3,~m_2=60, ~|\Delta| =3 $, (c) $N=9,~\beta =3,~m_2=60, ~|\Delta| =7 $, (d) $N=9,~\beta =3,~m_2=60, ~|\Delta| =11$.\looseness=-1} 
\vspace{-0.3cm}
\label{fig:LSE_LDE_initial}
\end{figure}

\begin{figure}
\centering
\begin{subfigure}{0.49\columnwidth} 
\caption{}
 \label{fig:Tradeoff_delta_Vs_pfp}
    \includegraphics[trim={1cm 0.2cm 1.5cm 0.9cm},clip,scale=0.32, width =\textwidth]{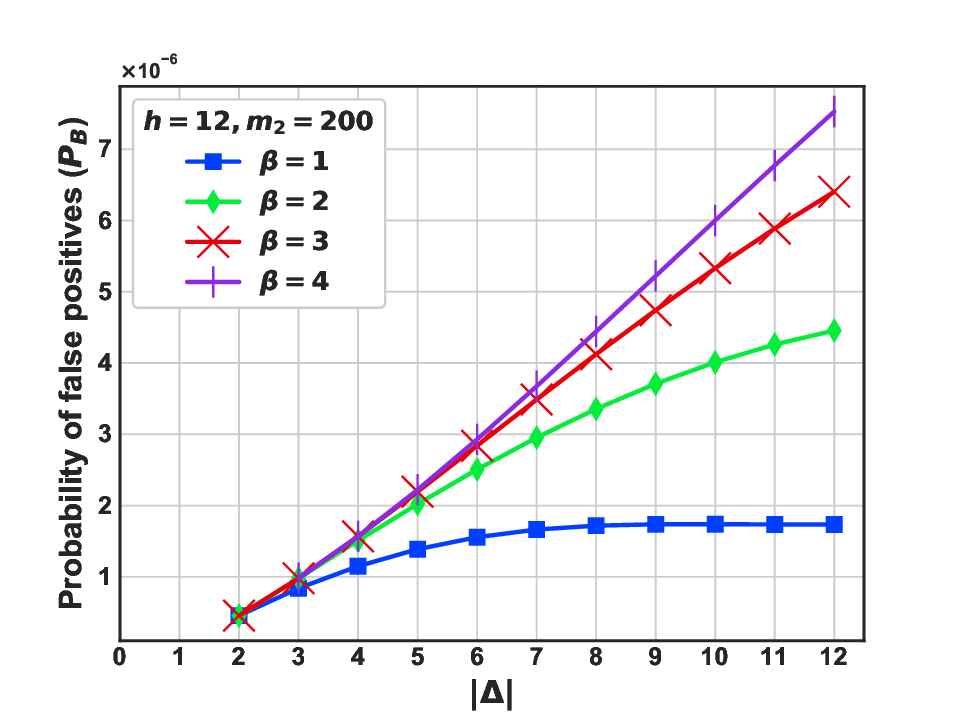}   
\end{subfigure}
\begin{subfigure}{0.49\columnwidth}
\caption{}
\label{fig:Tradeoff_Delta_Vs_m2}
     \includegraphics[trim={0.5cm 0.2cm 1.5cm 1cm},clip,scale=0.32, width =\textwidth]{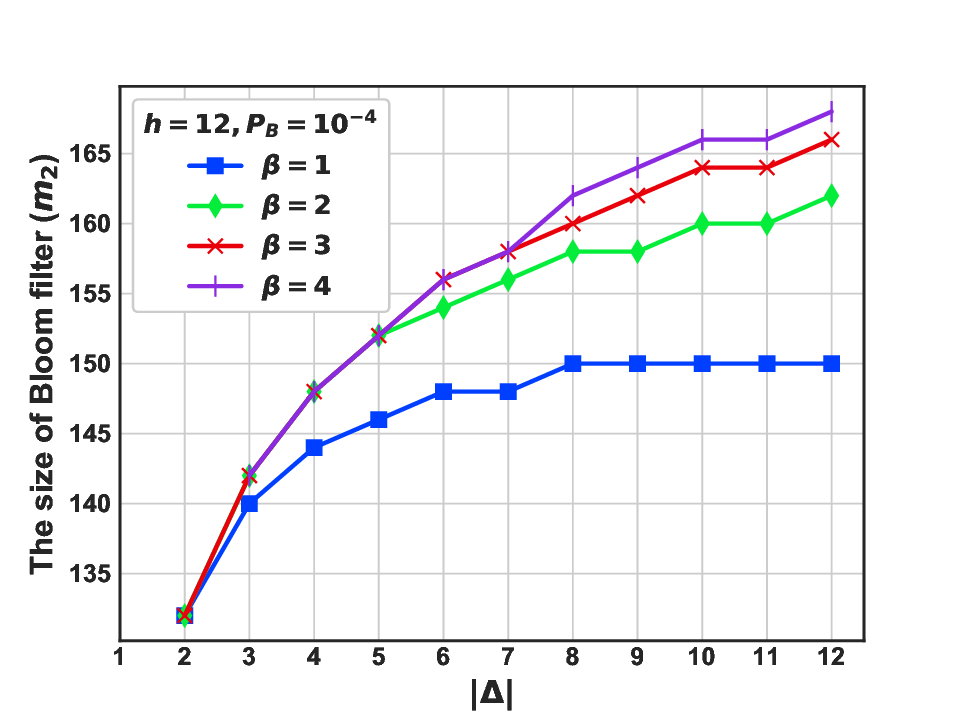}  
\end{subfigure} 
\begin{subfigure}{0.49\columnwidth} 
\caption{}
 \label{fig:Tradeoff_delta_Vs_pfpLDE}
    \includegraphics[trim={1cm 0.2cm 1.5cm 0.9cm},clip,scale=0.32, width =\textwidth]{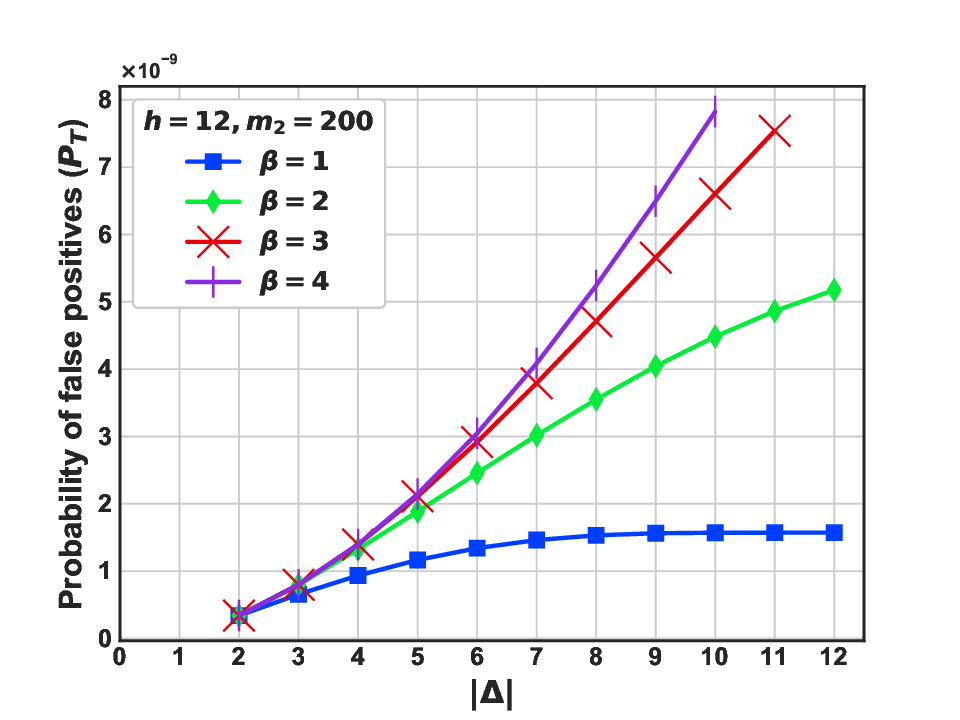}   
\end{subfigure}
\begin{subfigure}{0.49\columnwidth}
\caption{}
\label{fig:Tradeoff_Delta_Vs_m2LDE}
     \includegraphics[trim={0.5cm 0.2cm 1.5cm 1cm},clip,scale=0.32, width =\textwidth]{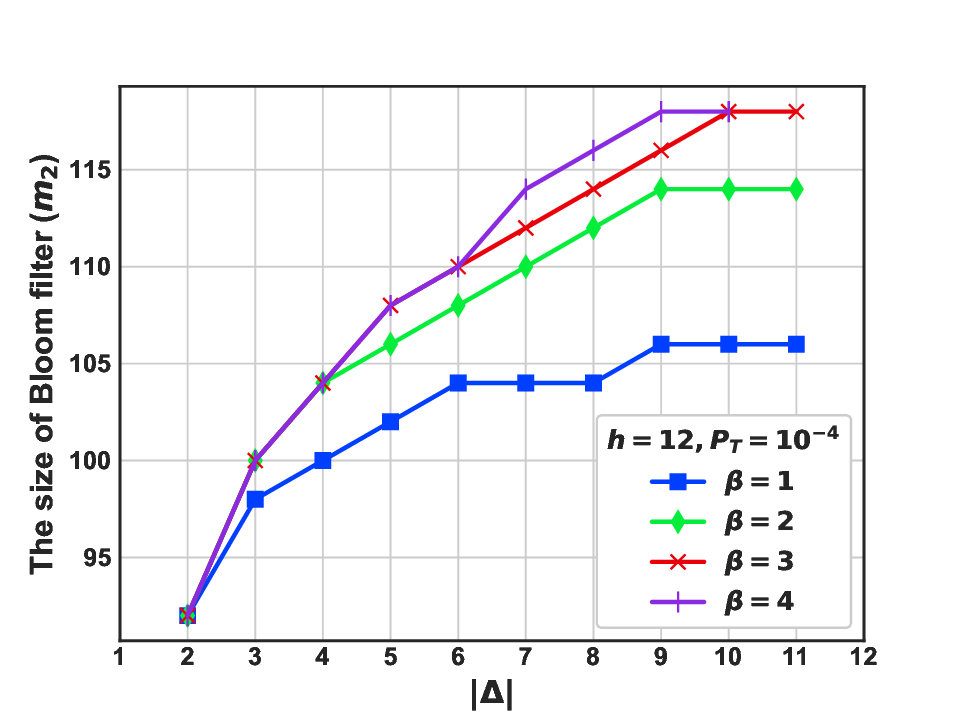}  
\end{subfigure}  
\caption{Analysis of the correlation between $P_{X}$ and $|\Delta|$ for a network configuration with $h = 12$ and $m_2 = 200$ bits (a) BSE, (c) TSE. Analysis of the relationship between the required size of $\textbf{BF}_2$ and $|\Delta|$ for a network setup with $h = 12$ and a fixed $P_{X} = 10^{-4}$. The magnitude represented on the y-axis is measured in bits (b) BSE, (d) TSE.\looseness=-1} 
\label{fig:Tradeoff}
\vspace{-0.4cm}
\end{figure}
We examine the influence of privacy on $P_{X}$ as a function of $\beta$. In this study, with optimized $\textbf{BF}_1$, we fix $m_2=200$ bits for a fixed hop network with $h=12$ and $N=13$. For a specified $\beta$ and constant $|\Delta|$, namely $\beta = 1$ and $|\Delta| = 2$, we transmit a large number of packets across a multihop network for $h=12$. We vary the number of Hash functions within the range $2\leq k_2\leq m_2$ and record the minimum $P_{X}$ for this network configuration, plotting the results against $|\Delta|$ in Fig. \ref{fig:Tradeoff_delta_Vs_pfp}. We conduct a similar experiment for $2 \leq |\Delta| \leq 12$ and plot the corresponding minimum $P_{X}$ values against each $|\Delta|$ for a specified $\beta$. We conduct identical experiments utilizing varying values of $\beta$, the results of which are also plotted in Fig. \ref{fig:Tradeoff_delta_Vs_pfp}. When the coverage area of the RSU is divided into $|\Delta| = 2$ segments, the spatial privacy of the nodes is higher compared to the case when the coverage area of the RSU is divided into $|\Delta| = 12$ segments, where the spatial privacy of the nodes is lower. For a specific $\beta$, representing the transmission power of the nodes, i.e., how far a node can transmit, Fig. \ref{fig:Tradeoff_delta_Vs_pfp} illustrates that $P_{X}$ increases with $|\Delta|$. An increase in $P_{X}$ corresponding to an increase in $|\Delta|$ results from the augmentation of segment tuples with the increasing $|\Delta|$. In summary, for a fixed size $\textbf{BF}_2$ and a specified number of nodes $N$, relaxing location privacy results in poor reliability of recovery due to an increase in minima values of $P_{X}$. Similar experiments are performed for TSE, results of which are depicted in Fig. \ref{fig:Tradeoff_delta_Vs_pfpLDE}, which also shows similar behavior.\looseness=-1 

We also examine the trade-off between the required size of $\textbf{BF}_2$ and $|\Delta|$. With optimized $\textbf{BF}_1$, we fixed $P_{B} = 10^{-4}$ for the multihop network of $h=12$ with $N=13$. For given $\beta$, we vary the number of segments from $|\Delta| = 2$ to $|\Delta| = 12$. For each $|\Delta|$, we determine the required size of $m_2$ by transmitting a large number of packets from the source node to the RSU, and plot the corresponding $m_2$ as a function of $|\Delta|$ in Fig. \ref{fig:Tradeoff_Delta_Vs_m2}. Fig. \ref{fig:Tradeoff_Delta_Vs_m2} illustrates that increased spatial resolution ($|\Delta|$) necessitates a large number of $\textbf{BF}_2$ bits ($m_2$) to maintain $P_{B}=10^{-4}$. Similar tests are conducted for TSE, and the results illustrate analogous behavior, as shown in Fig. \ref{fig:Tradeoff_Delta_Vs_m2LDE}.\looseness=-1 

\begin{figure}
    \centering
    \includegraphics[trim={0cm 0.0cm 1.4cm 0.5cm},clip,scale=0.4]{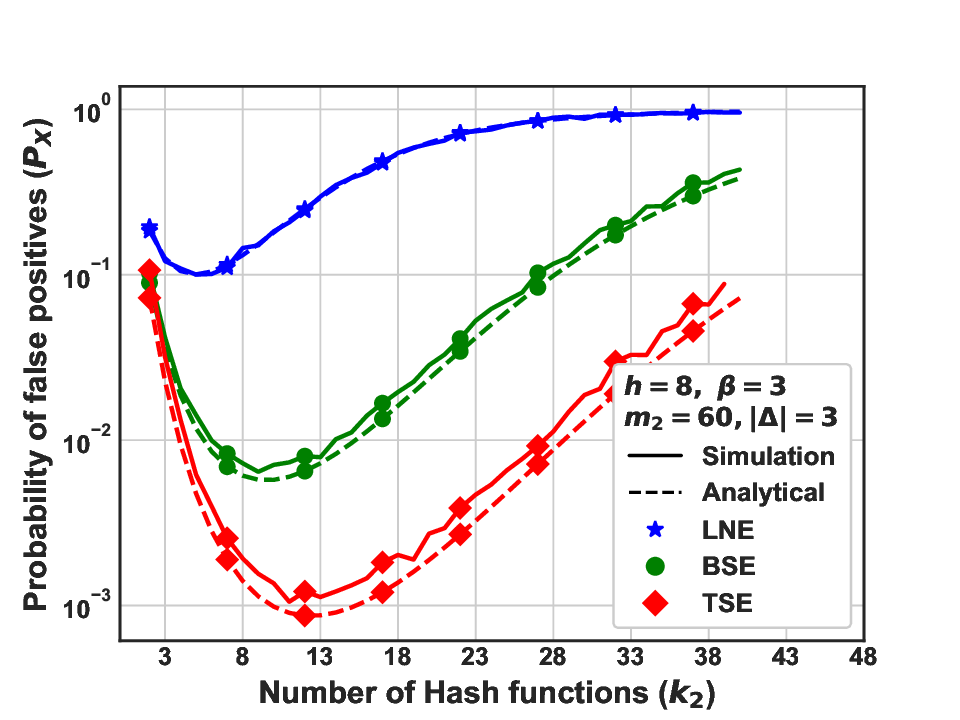}
    \caption{Comparison of $P_{X}$ for LNE \cite{ManishTDSC}, BSE and TSE for the network parameters $N = 9, ~\beta = 3, m_2=60$ with $|\Delta| = 3$ and $h=8$.}\looseness=-1
    \label{fig:LN_LSE_LDE_Comparison}
    \vspace{-0.5cm}
\end{figure}

For a baseline comparison, we evaluate the efficacy of our protocols against LNE in \cite{ManishTDSC, manish_WCNC}. For comparison, we simulate a network with $h=8$, $\beta=3$, and $|\Delta| =3$. Utilizing the optimized $\textbf{BF}_1$, we select $m_2=60$ bits. A large number of packets are transmitted from the source node to the RSU for each value of $k_2$. The associated $P_{X}$ values are plotted as a function of $k_2$ in Fig. \ref{fig:LN_LSE_LDE_Comparison}. The graph demonstrates that, for identical network parameters, TSE significantly outperforms BSE and LNE \cite{ManishTDSC} with respect to $P_{X}$. 
This is because of fewer number of embedding of segment tuples in TSE as compared to BSE and LNE, i.e., for hop-length $h$, there are $h$ embeddings in $\textbf{BF}_2$ for LNE \cite{ManishTDSC}, $\floor{\frac{h}{2}}$ embedding in case of BSE, whereas $\floor{\frac{h+1}{3}}$ embedding in TSE. Therefore, $\textbf{BF}_2$ in TSE is sparsely filled compared to BSE and LNE \cite{ManishTDSC}, leading to a lower collision rate, resulting in lower $P_{X}$. It is to be noted that the benefit in $P_{X}$ that we observe for TSE comes at the cost of privacy, where two neighboring nodes explicitly share their segment ID with an embedding node. Additionally, sharing the segment ID by neighbor nodes with an embedding node results in communication overhead (see Section \ref{sec:Testbed_timimgs}).\looseness=-1

\begin{figure}
    \centering
    \includegraphics[trim={0.1cm 0 1.4cm 0.9cm},clip,scale=0.4]{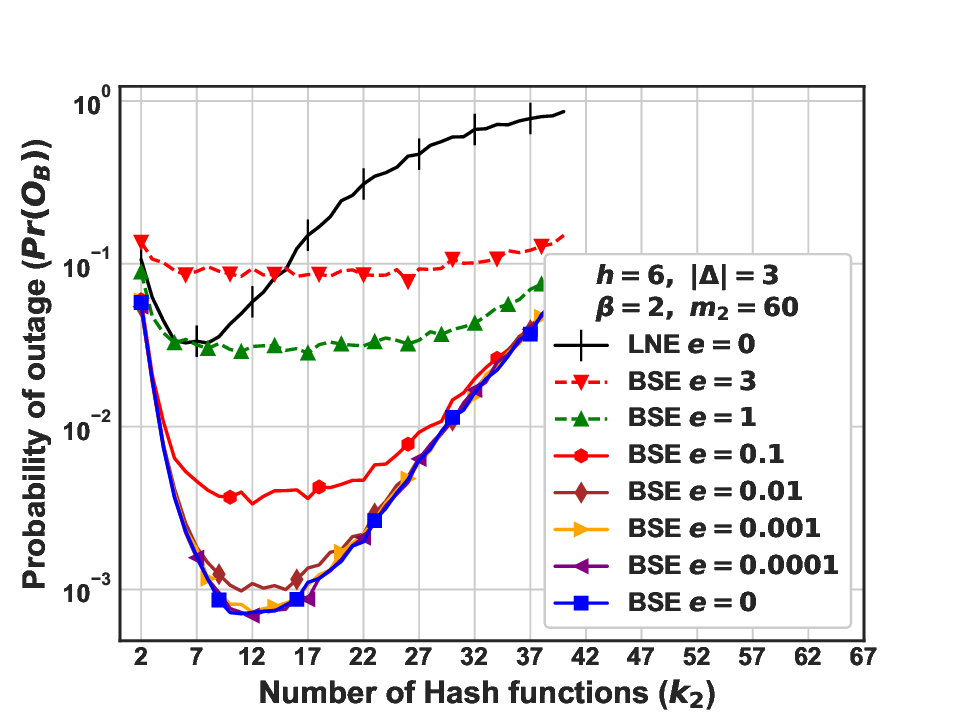}
    \caption{Comparison of $P_{X}$ for BSE with varying error values on a network of $N = 7, ~\beta = 2$ with $|\Delta| = 3$ and $h=6$.\looseness=-1}
    \label{fig:LSE_error_plot}
    \vspace{-0.5cm}
\end{figure}

\subsection{Analysis with Neighbors' Imperfect Segment Information} \label{sec:Imperfect_neighbor}
As defined in Section \ref{sec:network_model}, our network model assumes that the neighboring nodes explicitly share their segment IDs directly with the embedding nodes without error, thereby enabling their segment IDs to be communicated to the RSU. However, we assume a scenario of stringent privacy, where the neighbor nodes in the network do not explicitly share their segment ID with the embedding nodes. Therefore, to incorporate the suggested protocols, i.e., BSE or TSE, to support the low-latency constraint of the network, an embedding node needs to know the segment ID of its neighbors. Since the segment IDs are not shared directly, an embedding node must infer them indirectly. This segment ID inference requires knowledge of the relative positions of neighboring nodes. To determine these relative positions, we propose the embedding node to employ RSSI-based distance estimation techniques \cite{RSSI_distance, RSSI_Survey_indoor} to estimate the approximate distance of its neighbors. Subsequently, an embedding node will use the estimated distances in conjunction with its own GPS coordinates to infer the segment IDs of its neighbors with the help of the dictionary disseminated by the RSU. However, RSSI-based distance estimation techniques have an associated error \cite{RSSI_Error} in distance estimation, which may result in an error in inferring the segment ID of the node. In the rest of this section, we study the performance of BSE and TSE when the embedding nodes employ the above mentioned physical-layer methods for localizing their neighbors.\looseness=-1  

At an embedding node, the inference of the segment IDs of the neighboring nodes is facilitated by the RSSI-based techniques defined above. Let $e$ denote the probability of error in inferring the segment IDs of neighbors at an embedding node. When the packet is recovered at the RSU, the RSU is said to be in \emph{state of outage} if it is unable to accurately infer the original segment sequence taken by the packet. This occurs in either of the following scenarios: (i) when the RSU identifies a false positive, regardless of whether the embedded segment IDs contain errors; or (ii) when the RSU recovers a single segment sequence that is in error, i.e., it does not match the actual segment sequence traversed by the packet. The probability of outage for BSE, denoted by $\Pr(O_B)$, represents the fraction of packets for which the RSU is in the outage state. In a $h$-hop path for BSE, the segment ID of every alternate node is inferred; therefore, the maximum number of segment IDs that may be in error is $\floor{h/2}$. Assuming that an identical RSSI-based segment ID inference technique is employed at the embedding nodes and that the error probability of inferring segment IDs, $e$, remains constant across the embedding nodes, the probability of a recovered segment sequence being in error is $\Pr(E)= 1-(1-e)^{\floor{h/2}}$. False positive event ($E^{(2)}_B$), refer Section \ref{sec:optimization}, and error event ($E$) are independent, i.e., $\Pr(E \cap E^{(2)}_B) = \Pr(E) \Pr(E^{(2)}_B)$; therefore, the probability of outage $\Pr(O_B)$ is given by $\Pr(O_B) = 1 - (1-e)^{\floor{h/2}}(1-P_{B}),$ where $P_{B}$ is the probability of a false positive event for BSE without error. Similarly, for TSE, the probability of outage for TSE, $\Pr(O_T)$, is given by $\Pr(O_T) = 1 - (1-e)^{2\floor{(h+1)/3}}(1-P_{T}),$ where $P_{T}$ is the probability of a false positive event for TSE without error.\looseness=-1 

In the rest of this section, we compare two approaches: BSE/TSE under relaxed-privacy conditions, where neighboring nodes explicitly share their segment ID with an embedding node, and BSE/TSE under strict privacy constraints, where the embedding nodes infer the segment IDs of neighboring nodes through RSSI. We conduct an experiment for BSE, utilizing a network configuration with $h=6$, $|\Delta|=3$ and $\beta=2$. With optimized $\textbf{BF}_1$, we use $m_2=60$ bits for BSE. For a specified $e$, we transfer a large number of packets from the source node to the RSU and obtain the value of $\Pr(O_B)$. Fig. \ref{fig:LSE_error_plot} presents $\Pr(O_B)$ as a function of $k_2$. We repeat the same experiment with different values of error probability $e$ and plot them in Fig. \ref{fig:LSE_error_plot}. From Fig. \ref{fig:LSE_error_plot}, it is observed that BSE, assisted by RSSI-based techniques, may accommodate an error probability of up to $e=0.1\%$ in neighbor segment ID inference without substantially affecting $\Pr(O_B)$. However, when the error probability increases to $e=1\%$, the performance of BSE, assisted by RSSI-based techniques, closely resembles with LNE technique \cite{ManishTDSC}. This demonstrates a significant deterioration in accuracy and efficiency as evidenced by an increase in $\Pr(O_B)$. This experiment helps to identify a threshold of $e$ beyond which the performance of BSE assisted by RSSI-based techniques deteriorates significantly. Similar experiments, when conducted for TSE, revealed that beyond the threshold error probability of $0.01\%$, the performance of TSE closely resembles that of BSE.\looseness=-1

\subsection{Testbed Based Timing Analysis of BSE and TSE Techniques} \label{sec:Testbed_timimgs}
This section provides a thorough comparison of the communication overhead incurred by proposed spatial-provenance embedding protocols and compares it with the baseline scheme LNE \cite{ManishTDSC}.\looseness=-1
\begin{figure}[!h]
    \centering
    \includegraphics[width = 0.25\textwidth]{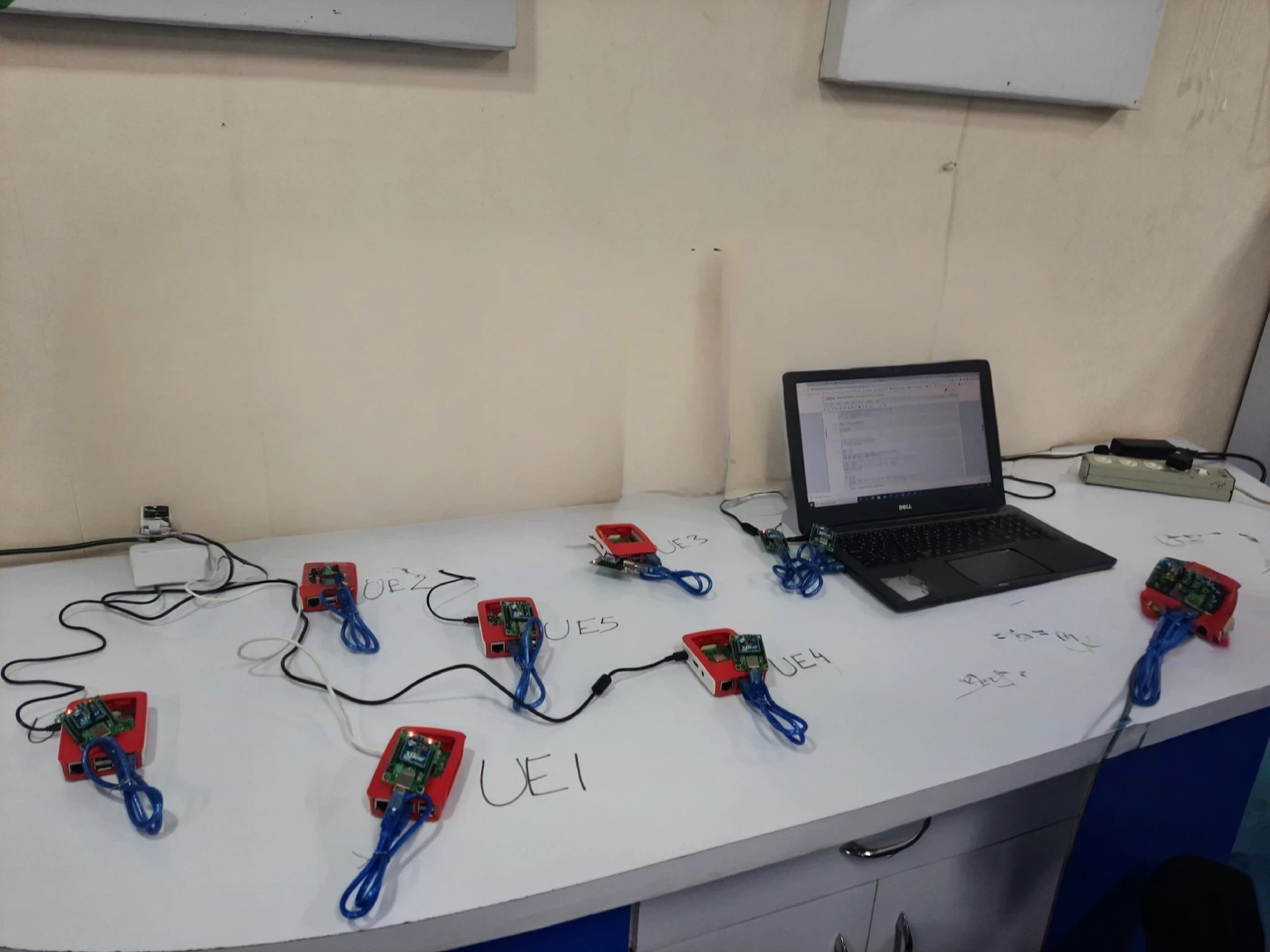}
    \setlength{\belowcaptionskip}{-10pt} 
    \caption{A XBee testbed for BSE and TSE.\looseness=-1}
    \label{fig:XBee}
\end{figure}

For hardware measurements, the testbed system shown in Fig. \ref{fig:XBee} utilizes XBee S2C \cite{XBee_S2C} devices that operate on the ZigBee protocol within the ISM band. We use Raspberry Pis and high-performance computing devices, such as laptops, to meet our computational requirements. A high-performance computing device functions as the RSU and utilizes XBee for wireless communication. Whereas the vehicular node is a combination of a Raspberry Pi interfaced with an XBee. We strategically distribute the vehicular nodes across a specified geographical area to establish the experimental setup.\looseness=-1 

To compare the communication overhead in terms of end-to-end delay for the three methods of spatial-provenance, i.e., LNE \cite{ManishTDSC}, BSE, and TSE, we consider that the packet traverses from the source node to the RSU via an $h$-hop path. In LNE \cite{ManishTDSC}, the deterministic-edge-embedding technique \cite{ManishTDSC, manish_WCNC} encodes the logical path traveled by the packet onto $\textbf{BF}_1$. Whereas, segment ID is embedded onto $\textbf{BF}_2$ by each node along the forwarding path \cite{ManishTDSC, manish_WCNC}. In BSE, as defined in Section \ref{sec:BSE_Embedding}, each forwarding node appends its segment ID to the tail of the packet to share it with an embedding node. Assuming the segment ID occupies 1 byte, this action temporarily increases the packet size by one byte.
When the packet reaches an embedding node, its own segment ID along with the segment ID of the forwarding node is embedded onto $\textbf{BF}_2$, restoring the packet to its original size. This alternating process continues along the path, where a forwarding node increases the size of the packet by one byte, and an embedding node reverts it to its original size. Other processes, like updating the hop counter and $\textbf{BF}_1$, follow the instructions in Section \ref{sec:BSE_Embedding}.\looseness=-1

In TSE, a forwarding node shares its segment ID with an embedding node by appending it to the tail of the packet, similar to BSE described above. Whereas, the next-hop receiver explicitly shares its segment ID with an embedding node through an appropriate segment ID sharing mechanism.
For simplicity, it is assumed that this explicit sharing by next-hop is of size one byte. Other processes, like updating the hop counter and $\textbf{BF}_1$, follow the instructions in Section \ref{sec:TSE_Embedding}.\looseness=-1

For timing analysis, we evaluate the following time components that contribute to the total latency: (1) $TE_{0}$ - time required at a node solely to increment the hop counter; (2) $TE_{1}$ - time required for a node to update the hop counter, compute the Hash values for a single Bloom filter (either $\textbf{BF}_1$ or $\textbf{BF}_2$), and update the corresponding Bloom filter with the relevant Hash values; (3) $TE_{2}$ - time required to increment the hop counter, compute Hash values using $k_1$ and $k_2$ Hash functions, and subsequently update both $\textbf{BF}_1$ and $\textbf{BF}_2$ accordingly; (4) $TP_{x}$ - total time required for a packet of size $x$ bytes to traverse from the application layer of the transmitting node to the application layer of the receiving node. Further, we use $m_1=m_2=30$ bytes and $k_1=k_2=3$. Additionally, headers and other control information add an extra 20 bytes to the payload, bringing the total packet size to 80 bytes. With this packet size, the total end-to-end delay incurred in transmitting a packet across an $h$-hop path for a given protocol can be computed in a straightforward manner.\looseness=-1

Towards computing the above numbers, we executed a large number of iterations on a high-performance computer to ascertain the average values of the time components, which are $TE_{0} \approx 2 \mu s$, $TE_{1} \approx 10 \mu s$, and $TE_{2} \approx 17 \mu s$. We calculated $TP_{x}$ by averaging the results of transmitting a large number of packets between two devices where wireless communication was enabled by XBee S2C devices placed one meter apart. The measured $TP_{x}$ for a packet with size 80 bytes is $TP_{80}=24,840.34 \mu s$, for a packet of size 81 bytes is $TP_{81}=25,046.02 \mu s$, and for a packet with size 21 bytes (for a one-byte packet) is $TP_{21}=17,064.36 \mu s$. By plugging the relevant delay values in the end-to-end delay numbers of LNE, BSE and TSE, the corresponding end-to-end delay in the ZigBee use case is presented as a function of $h$ in Fig. \ref{fig:ZIgBee_total_propagation_time}. It is observed from Fig. \ref{fig:ZIgBee_total_propagation_time} that end-to-end delay for BSE and LNE \cite{ManishTDSC} is similar across varying hop counts, whereas TSE exhibits higher delay. This observation suggests that BSE and LNE \cite{ManishTDSC} outperform TSE in terms of end-to-end delay.  These results also suggest that under similar network conditions, when transmission delay dominates over hardware delay, BSE and LNE \cite{ManishTDSC} are efficient choices. However, BSE may be preferred between the two as it has a lower value of $P_X$ and better reliability than LNE \cite{ManishTDSC}, as demonstrated in Fig. \ref{fig:LN_LSE_LDE_Comparison}.\looseness=-1 


For 5G, assuming a data rate of 100 Mbps and a bandwidth of $20$ MHz, and assuming the distance between adjacent nodes is 1 meter, the total transmission time for a packet of size 80 bytes is $T_{80} = 6.4 \mu s + 3.3ns \approx 6.403 \mu s$. Similarly, $TP_{21} = 1.6833 \mu s$ and $TP_{81} = 6.4833 \mu s$. By plugging the relevant delay values (as defined above) in the end-to-end delay expressions, the corresponding end-to-end delay in the 5G use case is presented as a function of $h$ in Fig. \ref{fig:5G_total_propagation_time}. It is observed from Fig. \ref{fig:5G_total_propagation_time} that BSE and TSE have lower end-to-end delays compared to LNE \cite{ManishTDSC} under identical network parameters in the 5G scenario. This demonstrates the advantage of the skipping technique employed in BSE and TSE, which is highlighted by a significant reduction in transmission delay in 5G networks, making hardware delay the primary contributor to the overall end-to-end delay. TSE is anticipated to outperform both BSE and LNE \cite{ManishTDSC} since every third node performs the embedding in TSE, thereby minimizing hardware delay. However, we observe that TSE performs similarly to BSE, which indicates that the performance of TSE is hampered by the overhead of propagating segment IDs from the next-hop node to the embedding node. TSE may perform better if the segment IDs of neighbor nodes can be shared with the embedding node using a method that incurs less communication overhead. Between BSE and TSE under the same network parameters, when hardware delay dominates transmission delay, TSE is preferable as it offers lower $P_X$ and better reliability (as seen in Fig. \ref{fig:LN_LSE_LDE_Comparison}). In contrast, when transmission delay dominates, BSE is preferable, as it offers a lower $P_X$ over LNE \cite{ManishTDSC}, while maintaining similar delay values. Overall, both BSE and TSE outperform LNE \cite{ManishTDSC} in terms of end-to-end delay and reliability.\looseness=-1
\vspace{-0.3cm}
\begin{figure}[!tbp]
  \begin{subfigure}[b]{0.49\columnwidth}
  \caption{}
    \includegraphics[trim={0cm 0.0cm 1.5cm 0.5cm},clip,scale=0.30]{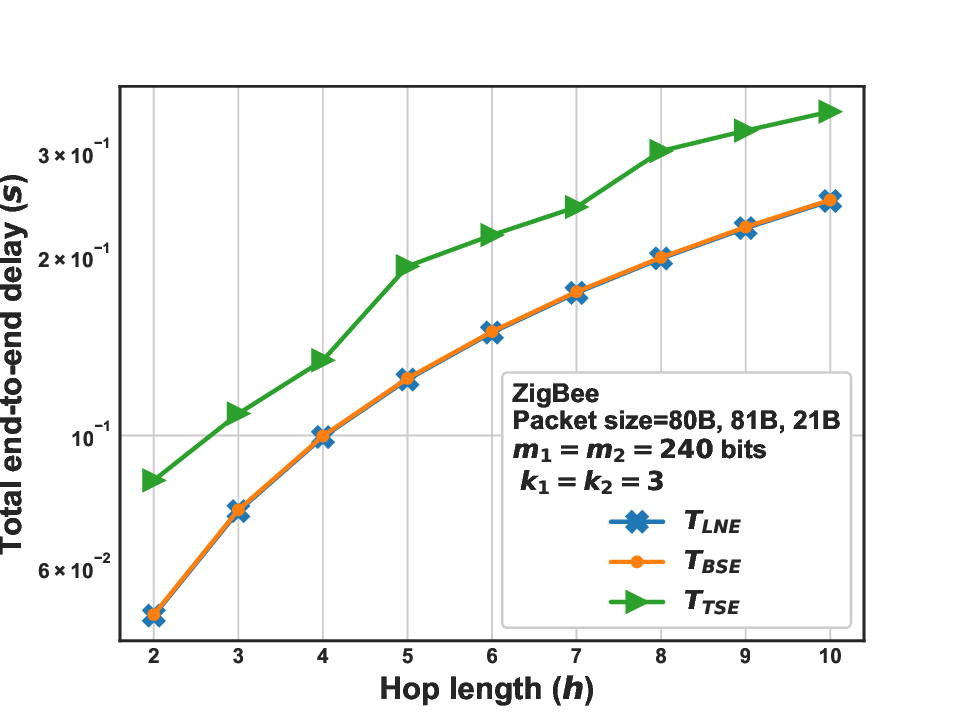}
     \label{fig:ZIgBee_total_propagation_time}
  \end{subfigure}
  \hfill
  \begin{subfigure}[b]{0.49\columnwidth}
  \caption{}
    \includegraphics[trim={0cm 0.0cm 1.5cm 0.5cm},clip,scale=0.30]{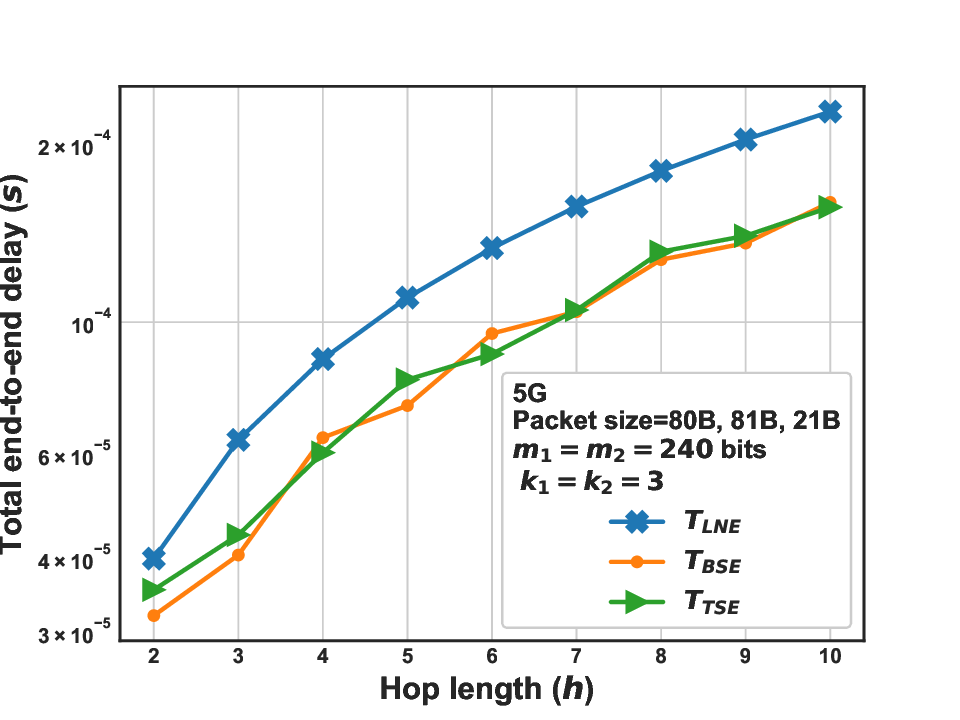}
    \label{fig:5G_total_propagation_time}
  \end{subfigure}
  \vspace{-0.5cm}
  \caption{Comparison of end-to-end delay in: (a) ZigBee (b) 5G use case.\looseness=-1 }
  \vspace{-0.4cm}
\end{figure}
\subsection{Robustness in Dynamic Vehicular Scenarios}\label{sec:mobility analysis}
\bl{To study the impact of mobility and dynamically changing topology, we evaluate our proposed protocols under a realistic vehicular mobility model. For the linear road model considered in Section \ref{sec:network_model}, we assume that the vehicles move in the same direction, i.e., towards the RSU, with constant, though possibly heterogeneous, speeds. In this section, we address two aspects of mobility. The first aspect concerns the effect of vehicle mobility on the frequency of dictionary broadcast and possible changes in privacy levels. The second aspect concerns how mobility impacts the consistent exchange of segment IDs between neighbors in BSE and TSE.\looseness=-1}


\bl{As stated in Section \ref{sec:network_model}, the dictionary $g$ is determined through mutual agreement between the RSU and the participating vehicles. We assume that $g$ remains stable for a considerable period, since in practical networks, vehicles do not change their privacy parameters rapidly. We assume that the RSU periodically broadcast $g$ across its coverage area to help vehicles entering the coverage region, particularly from the farthest segment, obtain the current dictionary and use it to embed the segment ID in packets. If participating vehicles agree to new privacy levels, the RSU recomputes and broadcast the updated dictionary in the next broadcast cycle. Let $\tau_s$ denote the segment retention time, defined as the time a vehicle spends within a single segment at a given speed, $\tau_b$ denote the broadcast interval of the dictionary, and $\tau_{e2e}$ denote the end-to-end delay experienced by a packet while traversing from the source node to the RSU. For the consistent operation of the proposed protocols, we assume that the timing hierarchy $\tau_{e2e} < \tau_b < \tau_s$ is maintained in the network. The condition $\tau_b < \tau_s$ ensures that a vehicle receives the current dictionary before exiting its segment. The validity of  $\tau_b < \tau_s$ under different mobility scenarios has been studied in \cite{ManishTDSC}. The condition $\tau_{e2e} < \tau_b$ ensures that by the time any dictionary update is broadcast, the packet has already reached the RSU, eliminating the risk of mid-path inconsistency in embedding segment ID by the forwarding vehicles.\looseness=-1}

\bl{If $\tau_{e2e} > \tau_b$, a broadcast cycle elapses while a packet is still in transit. This is not a problem in itself because the dictionary does not change with every broadcast. The only problematic case arises when a dictionary change occurs while a packet is in transit, causing some forwarding vehicles to use the updated dictionary while other vehicles on the same path use the previous version of the dictionary. This inconsistency can be resolved by having vehicles maintain two versions of the dictionary and include a small flag in the packet header indicating whether the dictionary was refreshed during the lifetime of the packet. In case where $\tau_b > \tau_s$, a vehicle may exit its segment before receiving a valid dictionary, violating the precondition of our proposed protocols. However, such scenarios are very rare in practical networks. The timing hierarchy $\tau_{e2e} < \tau_b < \tau_s$ is satisfied under typical mobility scenarios, ensuring consistent operation of the proposed protocols.\looseness=-1}

\bl{For the second aspect, several timing components determine how long the embedded segment information remains valid for each packet as it traverses the network. Let $\tau_a$ seconds denote the packet arrival time after a vehicle enters the RSU coverage area. When a packet arrives at the vehicle, a processing delay of $\tau_d$ seconds occurs in embedding or extracting information from the packet before it is forwarded to the next hop. For the packet to reach the next vehicle, the total time taken is $\tau_{tx} +\tau_{prop}$ seconds, where $\tau_{tx} =\frac{8B}{R}$ seconds denotes the transmission time for a packet of size $B$ bytes transmitted at data rate $R$ (bits per second).
The propagation delay $\tau_{prop}$ represents the time taken by the signal to propagate between vehicles. 
Since vehicles continue to move during the combined duration $\tau_a+\tau_d+ \tau_{tx} +\tau_{prop}$ seconds, a neighbor may cross a segment boundary before the packet is fully received and processed, rendering the embedded segment information incorrect.\looseness=-1}

\bl{Using the above timing components, we define a failure event when a packet carrying a segment ID from a forwarding vehicle reaches the embedding vehicle after the forwarding vehicle and/or the embedding vehicle has moved such that the embedded segment IDs violate the $\beta$-hop communication constraint. Henceforth, the probability of occurrence of such events is referred to as failure probability.\looseness=-1}



\bl{In BSE protocol, a failure event occurs if the embedding vehicle crosses a segment boundary before completing packet reception, processing, and embedding. In TSE protocol, the definition of a failure event remains unchanged; however, the effective communication window is larger because TSE introduces an additional interaction phase involving the next-hop vehicle. This additional phase increases the time before the embedded information is finalized, thereby increasing the likelihood of a failure event. In our evaluation, this additional protocol overhead is modeled as a fixed delay that is added to the BSE completion time. Finally, the failure probability is estimated as the fraction of packet dissemination events that violate the segment validity condition described above, while excluding overtaking events, so that only the impact of mobility and timing effects is captured.\looseness=-1}

\bl{For experiments, we consider that the RSU coverage area of 5 km is equally segmented into $r=10$ segments of length $l=500$ m. For BSE, we consider two vehicles initially located within the last two segments, i.e., $A_9$ and $A_{10}$, with initial positions drawn uniformly at random over this region and the embedding vehicle always ahead. The speed of the forwarding vehicle is fixed at 100 kmph, while the speed of the embedding vehicle is varied in the range 50 kmph to 200 kmph, to capture different relative mobility scenarios. The packet of size $B=81$ bytes is transmitted at $R=10$ Mbps, and the processing delay at each vehicle is $\tau_d= 2$ ms. To model the packet arrival, we consider two stochastic processes with identical mean inter-arrival time $1/\lambda$: a Poisson process and a normalized uniform process over $[0,2/\lambda]$. Normalization for uniform process is required to ensure the same average traffic load, which enables a direct comparison of the effect of arrival-time variability. For TSE, an additional delay of $5$ ms is added to model the communication delay incurred between the embedding vehicle and the next-hop vehicle. For a given embedding-vehicle speed and a given packet-arrival rate $\lambda$, we consider $10^6$ packet transmissions. For each packet, we check whether the failure event occurs.
The empirical failure probability is calculated as the ratio of the number of failures to the total number of packets transmitted. For each value of $\lambda$, the resulting failure probabilities are plotted to obtain the curves shown in Fig. \ref{fig:mobility_scenario}.\looseness=-1}

\begin{figure}[ht!]
 \centering

    \includegraphics[trim={0 0 0cm 0cm},clip,scale=0.25]{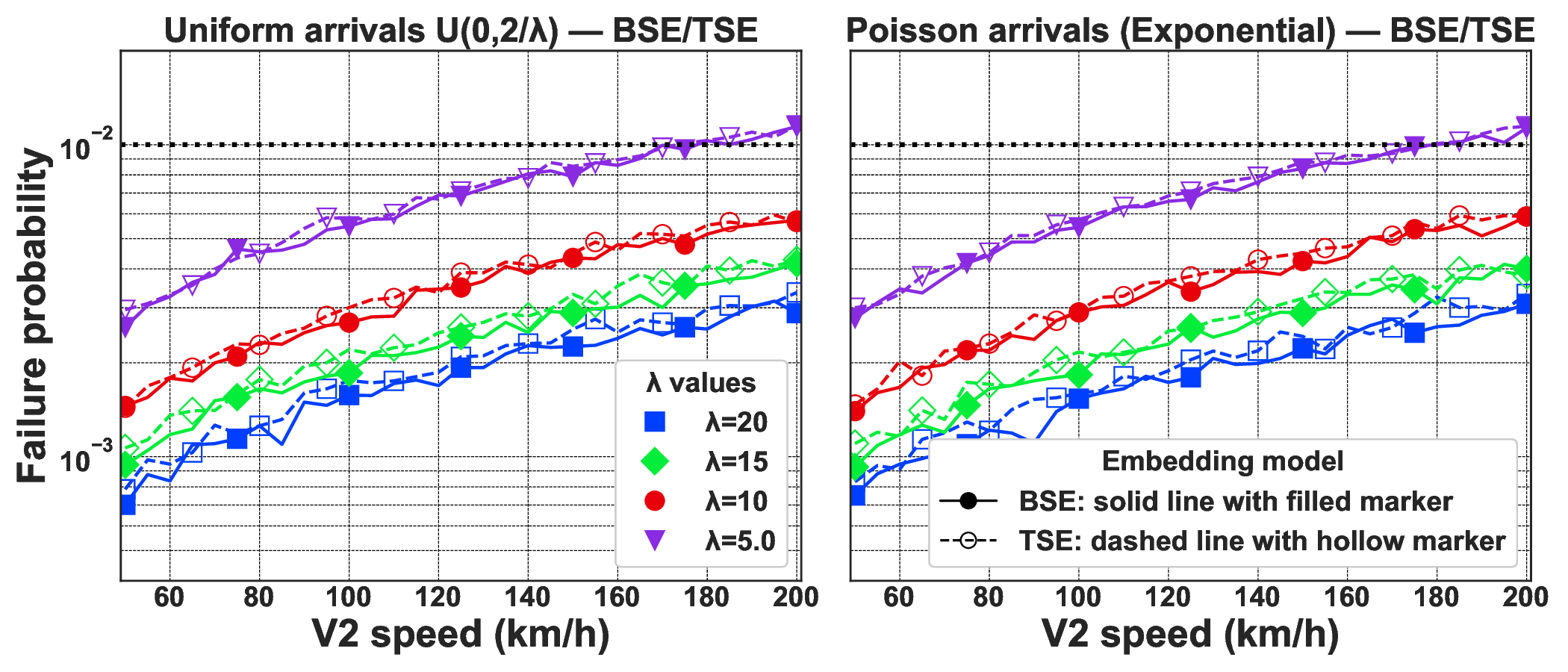}
  \caption{\bl{(a) Failure probability of BSE and TSE under normalized uniform packet arrivals for different values of $\lambda$ (b) Failure probability of BSE and TSE under Poisson (exponential) packet arrivals for the same $\lambda$ values.\looseness=-1}}
  \label{fig:mobility_scenario}
  \vspace{-0.4cm}
\end{figure}

\bl{The results in Fig. \ref{fig:mobility_scenario} show that BSE consistently depicts a lower failure probability than TSE at the same vehicle speeds, while both schemes exhibit similar increasing trends with mobility. The trends observed in both packet arrival models are comparable, indicating that failure probability is primarily affected by timing constraints induced by mobility. Overall, our proposed protocols are robust in dynamic changing topology, making them suitable for practical deployments.\looseness=-1}
\section{Discussion \& Future Direction}\label{discussion}
This study aimed to collectively address the localization requirements of the RSU and the privacy concerns of the vehicles for low-latency V2X networks. \bl{Our proposed protocols can be integrated on top of existing V2X stacks \cite{802.11p, WAVE}, assuming a reliable link layer that subsumes lower-layer mechanisms such as error correction codes, and retransmission schemes. In practice, integrating the proposed framework with PHY/MAC layers introduces additional challenges such as congestion, packet loss, and unreliable neighbor sharing. However, the analysis presented in this work, including the false-positive characterization, the optimization of Bloom filter parameters, and the techniques for choosing Hash functions, remains relevant and applicable in such settings. The only expected consequence of such integration is a marginal increase in end-to-end delay, which is natural. Furthermore, no existing V2X standard \cite{802.11p, WAVE} reserve dedicated bits for provenance. Therefore, for practical implementation within existing standards, the proposed framework must be accommodated within the payload portion of the underlying standard. Recall that the choice of the Bloom filter size in our analysis was chosen arbitrarily, and not aligned with any existing standard. However, our experiments demonstrate that the overheads introduced by inserting the information on location provenance is marginal. For instance, a 100-bit Bloom filter was shown to offer less than 1\% error rate in a 10-hop network while occupying only $5\%$ of a typical V2X payload of 250 bytes, confirming the feasibility of integrating the proposed framework into existing wireless standards.} 

From the perspective of scalability, the error rates of the underlying data structure, i.e., Bloom filters, increase with the number of hops and network size. Therefore, there is a tradeoff of error rates with the increase in packet size overhead. This motivates several directions for future work, such as \bl{studying the joint effect of PHY/MAC layer integration, mapping the proposed framework onto existing wireless standards}, exploring scalability through adaptive Bloom filter sizing, parameter optimization, or hierarchical deployment strategies.\looseness=-1
 

\appendices
\section{Proofs of Main Results}
\subsection{Proof of Proposition \ref{prop:sigma_B}} \label{apx:sigma_B}
\begin{proof}   
All the $h+1$ nodes of an $h$-hop path have associated segment IDs, where subscripts of the segment ID of each node can take values in $[1,r]$ as specified in Section \ref{sec:Routing Constraints}. Assume a valid $h$-hop sequence of subscripts of segment IDs $\mathbf{p}_x = \{p_{x_1}, p_{x_2},\ldots, p_{x_{h+1}}\} \in \mathcal{P}_{\beta}$, where $x \in [1,|\mathcal{P}_{\beta}|]$, $p_{x_z} \in [1,r]~|~ z \in [1,h+1]$, and $p_{x_1}=1$ always, as $p_{x_1}$ denotes the subscript of the segment ID of the RSU. In BSE, $\floor{\frac{h}{2}}$ non-overlapping bi-segment tuples are embedded onto $\textbf{BF}_2$, as segment IDs are embedded by every alternate node in an $h$-hop path. The non-overlapping bi-segment tuples enumerated from the source node to the RSU in $\mathbf{p}_x$ are of the form $\{(p_{x_i},p_{x_{i+1}})~|~ i=h-2t,t \in [0,\floor{\frac{h}{2}}-1]\}$.
Based on Definition \ref{def:bi-segment tuple} of bi-segment tuples, for each value of $p_{x_i} \in [1,r]$, the corresponding $p_{x_{i+1}}$  takes the values in $[p_{x_i}, \min(p_{x_i}+\beta,r)]$. Hence, the number of valid $(p_{x_i},p_{x{i+1}})$ bi-segment tuples for a specific $p_{x_i}$ is given by $(\min(p_{x_i}+\beta,r) -p_{x_i}+1)$. By summing over all the possible values of $p_{x_i}$ from $1$ to $r$, and multiplying by the number of such bi-segment tuples in the $h$-hop path, we arrive at the total number of bi-segment tuples using \eqref{eq:sigma_B}. This completes the proof.\looseness=-1
\end{proof}
\vspace{-0.3cm}
\subsection{Proof of Proposition \ref{prop:sigma_T} } \label{apx:sigma_T}
\begin{proof}  
In TSE, there is $\floor{\frac{h+1}{3}}$ non-overlapping tri-segment tuples in a $h$-hop path, $\mathbf{p}_x \in \mathcal{P}_{\beta}$, because every third node embeds the segment ID (refer Section \ref{sec:TSE_Embedding}). The non-overlapping bi-segment tuples enumerated from the source node to the RSU in $\mathbf{p}_x$ are of the form $\{(p_{x_{i-1}}, p_{x_{i}},p_{x_{i+1}}) ~|~ i =h-2-3t, t \in [0, \floor{\frac{h-2}{3}}]\}$. 
Based on Definition \ref{def:tri-segment tuple} of tri-segment tuple, we observe that for each value of $p_{x_{i}} \in [1,r]$, the corresponding values taken by $p_{x_{i-1}} \in [\max(1, p_{x_{i}}-\beta), p_{x_{i}}]$ and $p_{x_{i+1}} \in [p_{x_{i}}, \min(p_{x_{i}}+\beta, r)]$. The number of such combinations for a given value of $p_{x_{i}}$ is given by $H(p_{x_i})=(p_{x_{i}}-\max(1, p_{x_{i}}-\beta)+1 )( \min(p_{x_{i}}+\beta, r)-p_{x_{i}}+1)$. However, in a special case when $(h+1)~mod~3=0$, in the leftmost tri-segment tuple, i.e., $(p_{x_1},p_{x_2}, p_{x_3})$, the value of $p_{x_1}=1$ as it denotes the subscript of the segment ID of the RSU. Therefore, in this tri-segment tuple, the subscript of the segment ID of the first node is fixed, and the summation for this tri-segment tuple carried over only the values of $p_{x_2}$ and $p_{x_3}$ that satisfy the constraint  $p_{x_2} \in [1, \min(1+\beta,r)]$ and for a given value of $p_{x_2}$, the value of $p_{x_3} \in [p_{x_2}, \min(p_{x_2}, r)]$. Therefore, the total number of valid tri-segment tuples of segment ID, denoted as $\sigma_{T}$, for an $h$-hop path is given by \eqref{eq:sigma_T}. This completes the proof.\looseness=-1
\end{proof}
\vspace{-0.3cm}
\subsection{Proof of Theorem \ref{Theorem-Main_eq}} \label{apx:Theorem-Main_eq}
\begin{proof}
When a packet traverses a $h$-hop path, there will be $h^*_X$ embeddings in $\textbf{BF}_2$. However, due to the characteristics of the Bloom filter, let us assume that there are $j$ extra recoveries from $\textbf{BF}_2$ as described in Definition \ref{def:setR}. The probability of retrieving $j$ elements from $\mathcal{F}_X$ while not recovering the remaining $(|\mathcal{F}_X|-j)$ elements is expressed as $p_{1}^{j} p_{2}^{|\mathcal{F}_X|-j}$.
Because of $j$ extra recoveries, a specific valid segment sequence $\mathbf{p}_x, ~x \in [1, |\mathcal{P}_{\beta}|]$, may result into one or more additional segment sequence such as $\mathbf{p}_{x_1}, \mathbf{p}_{x_2}, \ldots,\mathbf{p}_{x_{C_{x,{j_X}}}}$, i.e., $E_{X}^{(2)}|\alpha,\mathbf{p}_x=\mathbf{p}_{x_1} \bigcup \mathbf{p}_{x_2}\bigcup \ldots\bigcup\mathbf{p}_{x_{C_{x,{j_X}}}}$. However, the appearance of all the additional segment sequences is an independent and mutually exclusive event. Therefore, $\Pr(E_{X}^{(2)}|\alpha,\mathbf{p}_x)=\Pr(\mathbf{p}_{x_1}) + \Pr(\mathbf{p}_{x_2})+\ldots+\Pr(\mathbf{p}_{x_{C_{x,{j_X}}}})$.
Using Definition \ref{def:collision}, Definition \ref{def:setF} and Definition \ref{def:setR}, the probability of false positive event can be expressed as\looseness=-1

\begin{small}\begin{IEEEeqnarray*}{rcl}
\Pr(E_{X}^{(2)}|\alpha, g)= \Pr(E_{X}^{(2)}|\alpha, \mathbf{p}_{x}) = \sum_{{j}=1}^{|\mathcal{F}_X|} p_1^{j}\times p_2^{|\mathcal{F}_X|-{j}}\times C_{X,x,j}.
\end{IEEEeqnarray*}\end{small}

\noindent \noindent Thus, the false positive probability averaged over all possible valid segment sequences can be computed as\looseness=-1
\begin{small}\begin{IEEEeqnarray}{rcl}
\label{overall_fp}
\frac{1}{|\mathcal{P}_{\beta}|}\sum_{{j}=1}^{|\mathcal{F}_X|} p_1^{j}\times p_2^{|\mathcal{F}_X|-{j}}\times \left( \sum_{x = 1}^{|\mathcal{P}_{\beta}|}C_{X,x,j}\right),
\end{IEEEeqnarray}\end{small}
where uniform distribution is assumed on $\mathbf{p}_{x}$. We can express the above equation with $\mathcal{G}_{\beta}$ as\looseness=-1 
\begin{small}\begin{IEEEeqnarray}{rcl}
\label{overall_fp2}
\frac{1}{|\mathcal{G}_{\beta}|}\sum_{{j}=1}^{|\mathcal{F}_X|} p_1^{j}\times p_2^{|\mathcal{F}_X|-{j}}\times (\sum_{x = 1}^{|\mathcal{G}_{\beta}|}C_{X,x,j}).
\end{IEEEeqnarray}\end{small}\looseness=-1
\end{proof}
\vspace{-0.6cm}
\subsection{Proof of Proposition \ref{prop:value_of_C_Xx1}} \label{apx:prop:value CXx1}
\begin{proof}
For a valid $h$-hop $\mathbf{p}_x,~x\in[1,|\mathcal{P}_{\beta}|]$, one can compute $C_{B,x,1}$ by considering a subsequence of $\mathbf{p}_x$ as $(p_{x_{i-1}}, p_{x_{i}}, p_{x_{i+1}}, p_{x_{i+2}}) ~|~ i \in \{2,3,\ldots,h-2\}$. For each pair position $i \in \{2,3,\ldots,h-2\}$ in $\mathbf{p}_x$, count the number of alternative values that can be assigned to pair $(p_{x_{i}}, p_{x_{i+1}})$ while rest of the $\mathbf{p}_x$ remains unchanged and valid. The values of $p_{x_{i}}$ and $p_{x_{i+1}}$ can only change within the bounds defined by $p_{x_{i-1}}$ and $p_{x_{i+2}}$, while still satisfying the $\beta$ constraint between consecutive elements. The value of $C_{B,x,1}$ is computed by counting all such permissible modifications across the entire $\mathbf{p}_x$.\looseness=-1

To compute $C_{T,x,1}$, follow the same approach as defined above for BSE, with the only difference that three consecutive elements $(p_{x_{i}}, p_{x_{i+1}}, p_{x_{i+2}})$  will be evaluated with in the subsequence $(p_{x_{i-1}}, p_{x_{i}}, p_{x_{i+1}}, p_{x_{i+2}},p_{x_{i+3}})$ of $\mathbf{p}_x$.\looseness=-1
\end{proof}
\vspace{-0.5cm}
\subsection{Proof of Proposition \ref{prop:value extension any beta}} \label{apx:prop:value extension any beta}
\begin{proof}
This equation is derived by sequentially choosing one of the $j$ segment tuples and subsequently selecting the remaining $j-1$ segment tuples from the remaining false segment tuples in $\mathcal{F}_X$ to form the set $\mathcal{R}_X$. Subsequently, two segment tuples from the $j$ segment tuples are fixed, after which the remaining $j-2$ segment tuples can be selected from the remaining false segment tuples of $\mathcal{F}_X$ to form the set $\mathcal{R}_X$. This procedure is reiterated until $j$ segment tuples are fixed to fill $\mathcal{R}_X$, after which the remaining false segment tuples are selected from $\mathcal{F}_X$.\looseness=-1
\end{proof}
\bibliographystyle{IEEEtran} 
\bibliography{refs} 

\noindent \textbf{Manish Bansal} is a Ph.D Research Scholar at IIT Delhi. His research interests include 5G wireless security, provenance, sensor networks, and IoT.


\noindent \textbf{Harshan Jagadeesh} is an Associate Professor at the Department of Electrical Engineering, IIT Delhi. His research interests are in the broad areas of network security, information theory, and coding theory.
\end{document}